\newcommand{\hi}{\mathcal{H}}
\newcommand{\hione}{\mathcal{H}_1}
\newcommand{\hitwo}{\mathcal{H}_2}
\newcommand\indep{\protect\mathpalette{\protect\independenT}{\perp}}
\def\independenT#1#2{\mathrel{\rlap{$#1#2$}\mkern2mu{#1#2}}}
\newcommand{\M}{\mathsf{M}}
\newcommand{\R}{\mathcal{R}}
\newcommand{\Bor}{\text{Bor}}
\def\state{\mathscr{D}(\mathcal{H})}
\newcommand{\bh}{\mathcal{B}(\hi)}
\newcommand{\eh}{\mathcal{E(H)}} %effects
\newcommand{\E}{\mathsf{E}}
\pgfplotsset{compat=1.15}
\newtheorem{theorem}{Theorem} % 1st argument is your name for it
\newtheorem{lemma}{Lemma}     % 2nd argument is what is printed
\newtheorem{corollary}{Corollary}
\newtheorem{proposition}{Proposition}
\newtheorem{definition}{Definition}
\newcommand{\U}{\mathcal{U}} %unitary channel
\newcommand{\V}{\mathcal{V}}
\newcommand{\W}{\mathcal{W}}
\tikzset{every picture/.style={line width=0.75pt}} %set default line width to 0.75pt    
\renewcommand{\S}{\mathcal{S}}
\newcommand{\sam}[1]{{\color{black}{#1}}}
\begin{document}

\title{Einstein causality of quantum measurements in the Tomonaga-Schwinger picture}
\author{Samuel Fedida}
\email{sylf2@cam.ac.uk}
\affiliation{Centre for Quantum Information and Foundations, DAMTP, Centre for Mathematical Sciences, University of Cambridge, Wilberforce Road, Cambridge CB3 0WA, UK}
\date{\today}

\begin{abstract}
    We \sam{investigate} a generalisation to Lüders' rule à la Aharonov-Albert in those globally hyperbolic spacetimes which allow unitarily equivalent Hilbert spaces to be defined along Cauchy hypersurfaces, thus relying on the existence of an interaction picture à la Tomonaga-Schwinger. We show that under this rule \sam{and under the additional assumptions of the integrability and unitarity of the Tomonaga-Schwinger dynamics and the foliation-independence of rays on acausal Cauchy hypersurfaces}, selective quantum measurements satisfy a state-independent anyonic commutation relation over spacelike-separated precompact regions. We highlight that this propagates to positive operator-valued measures, where the commutation is necessarily bosonic. In the \sam{instantaneous-measurement idealisation}, this implies quantum no-signalling for non-selective measurements. We then examine Sorkin's impossible measurements and show that immediate contradictions can be averted as long as collapse-inducing measurements are irreversible. \sam{These results reaffirm the consistency of the Tomonaga-Schwinger picture of relativistic quantum theory, for which unitarity, integrability and foliation-independence of the states exclude superluminal signalling despite the ``instantaneity" of a side-cone measurement collapse rule}. We finish by discussing the possibility of extending such results beyond the interaction picture.
\end{abstract}

\maketitle

\section{Introduction}

The concept of quantum measurement, though a cornerstone in the study of quantum mechanics, has long been a source of conceptual tension. Indeed, while the standard nonrelativistic collapse postulate (sometimes called Lüders' rule \cite{Luders1950}) provides a pragmatic approach to calculating measurement outcomes, it introduces an abrupt and discontinuous change to the system's evolution, seemingly at odds with the smooth, unitary evolution given by the Schrödinger equation. This issue becomes particularly acute when considering relativistic scenarios, where the notion of \enquote{instantaneous} measurements and collapse clash with the classical intuitions of locality and causality. In the setting of relativistic quantum theory, such principles have several different implementations which can be inequivalent \cite{kent_nonlinearity_2005}. \\

Einstein causality is one realisation of the relativistic no-superluminal-signalling principle within quantum theory. It relies on the notion of (relative) compatibility \cite{hardegree_relative_1977,pulmannova_relative_1980,lahti_coexistence_2003}, or mutual commutativity, of observables, which has a very natural physical meaning. Indeed, let $A \in \bh$ and $B \in \bh$ be two observables with discrete spectra, and $\ket{\psi} \in \hi$. The projection postulate of standard quantum theory says that after a measurement involving $A$ with outcome $i$, the resulting state of the system is $A_i \ket{\psi}/\norm{A_i \ket{\psi}}$ where $A_i$ is a projection corresponding to the outcome $i$ in the spectrum of $A$. A subsequent measurement involving $B$ with outcome $j$ will result in the state $B_j A_i\ket{\psi}/\norm{B_j A_i \ket{\psi}}$. If the order of measurements is reversed, then the final state is instead $A_i B_j \ket{\psi}/\norm{A_i B_j \ket{\psi}}$. However, if $A$ and $B$ are \emph{compatible} relative to $\ket{\psi}$, i.e. provided $A_iB_j\ket{\psi} = B_j A_i\ket{\psi}$, then the order of the successive measurements is irrelevant, i.e. these are \enquote{simultaneously} measurable. \\

In a relativistic context, the notion of \enquote{order of measurements} is meaningless over spacelike-separated regions: there is no intrinsic plane of simultaneity in Minkowski spacetimes -- nor in \enquote{nicely behaved} generalisations of those, namely globally hyperbolic spacetimes, which are those spacetimes which are commonly thought to be \enquote{well-behaved} in terms of causality and representative of the physical world we live in. Thus, the assumption that any two observables from local algebras associated with two spacelike separated regions of spacetime mutually commute, called the Einstein causality postulate, is commonly taken as an axiom in relativistic quantum field theory (QFT) \cite{Busch2009}. \\

For example, algebraic QFT \cite{haag_algebraic_1964,fewster_algebraic_2019} (AQFT) is a powerful framework to formulate gauge field theories in a relativistically covariant and mathematically sound fashion which explicitly postulates this principle. It notably allows to discuss the notion of quantum fields and algebras of local observables beyond the interaction picture of textbook \enquote{physicists'} QFT, which separates a free and interaction Hamiltonian in the system's total Hamiltonian. Notwithstanding the centrality of the interaction picture of QFT in high-energy physics, it is well-understood to be formally constrained to certain classes of QFTs and spacetimes. Indeed, it assumes the unitary equivalence of the Hilbert spaces across several different regions of spacetime. This is known to be impossible for generic QFTs in curved spacetimes -- it can be shown \cite{Torre_1999} that even for free scalar QFTs, there exists globally hyperbolic spacetimes for which such a notion of unitary equivalence across regions of spacetime cannot be implemented. \\

Recent developments have further solidified our understanding of measurements in QFT on curved spacetimes. Several approaches -- including the Hellwig-Kraus light-cone collapse \cite{Hellwig1970,Kraus1971}, the Fewster-Verch measurement scheme of AQFT \cite{fewster_quantum_2020} and detector-based measurement theory \cite{PoloGomez2022,pranzini_detector-based_2025} -- are actively debated in the current literature (see \cite{fraser_note_2023} for a historical discussion comparing and contrasting different approaches to measurement schemes in QFT). Nonetheless, thought experiments keep pushing the boundaries of this quantum principle \cite{Aharonov1981b,Aharonov1984,Sorkin1993}, \sam{with notably a discussion of local and non-local measurements being critical in relativistic settings \cite{breuer_measurements_2007}}, and although challenges can at least partly be addressed \cite{Mould1999, ruep_causality_2022,fewster_measurement_2023} within the above formalisms, the study of measurement schemes in relativistic quantum theory remains an interesting and active field of research. \\

In the following, we denote by $\bh$ the space of bounded linear operators on a Hilbert space $\hi$, and $\eh \subset \bh$ the space of effects on $\hi$: the space of positive bounded linear operators less than or equal to the identity $\mathbb{1}_{\bh}$. We write $\state$ for the set of states in $\hi$, i.e. the subset of positive trace-class operators on $\hi$ with trace $1$, and $U(\hi)$ as the set of all unitary operators on $\hi$. We write $\mathcal{B}(\hione,\hitwo)$ for the space of bounded linear maps between $\hione$ and $\hitwo$. We write $\comm{A}{B}_\phi := AB - e^{i\phi} BA$. \\

Let $(\mathcal{M},g)$ be a d-dimensional globally hyperbolic Lorentzian spacetime\footnote{A spacetime $(\mathcal{M},g)$ is said to be \emph{globally hyperbolic} if it contains no closed causal curve (i.e. it is causal) and if $\forall p,q \in \mathcal{M}$, the double cone $J^-(p) \cap J^+(q)$ is compact \cite{Hounnonkpe_2019}.}, $p \in \mathcal{M}$, $\U \subset \mathcal{M}$.
\begin{enumerate}
    \item We denote by $J^+(p)$ (respectively, $J^-(p)$) the collection of all points in $\mathcal{M}$ which can be reached by a future-directed (respectively, past-directed) causal curve starting from $p$, and $J(p) := J^+(p) \cup J^-(p)$. We write $J^\pm(\U) := \bigcup_{p \in \U} J^\pm(p)$
    \item We denote by $I^+(p)$ (respectively, $I^-(p)$) the collection of all points in $\mathcal{M}$ which can be reached by a future-directed (respectively, past-directed) timelike curve starting from $p$. We write $I^\pm(\U) = \bigcup_{p \in \U} I^\pm(p)$.
    \item By \emph{strict causal future} (respectively \emph{strict causal past}) of $\U$, we mean $J^+(\U) \smallsetminus \U$ (respectively $J^-(\U) \smallsetminus \U$).
    \item We denote by $D^+(\U)$ (respectively, $D^-(\U)$) the future (respectively, past) domain of dependence of $\U$, that is, the set of points $p \in \mathcal{M}$ such that every past-directed (respectively, future-directed) inextendible timelike curve starting at $p$ intersects $\U$. We write $D(\U) := D^+(\U) \cup D^-(\U)$.
    \item We denote by $B^+(\U)$ (respectively $B^-(\U)$) the future (respectively past) boundary of $\U$, i.e. \begin{equation}B^{\pm}(\U) = \{p \in \overline{\U} \mid J^\pm(p) \cap \overline{\U} = \{p\}\} \, .\end{equation}
\end{enumerate}
If $\U, \V \subset \mathcal{M}$, we write $\U \indep \V$ whenever $\U \subset \V'$, i.e. whenever $\U$ and $\V$ are spacelike-separated regions. \\

In the present paper, we will expand on the classical intuitions of the meaninglessness of instantaneous measurements by \sam{exploring} a generalisation to Lüders' rule in globally hyperbolic spacetimes which resembles that of Aharonov-Albert \cite{Aharonov1981b,Aharonov1984,breuer_relativistic_1998,breuer_measurements_2007}, and explore how Einstein causality can be \sam{understood as a consistency requirement} in the setting of an interaction picture. Although arguably naive though natural, the prescription we shall give will be shown to be a good model for many relativistic scenarios. In App. \ref{sec:diff geo}, we remind the reader of useful results of Lorentzian geometry to talk about Einstein causality in globally hyperbolic spacetimes, notably 
\begin{itemize}
    \item Given two precompact and spacelike-separated regions $\U$ and $\V$ of a globally hyperbolic spacetime, one can always have two Cauchy hypersurfaces such that one ($\Sigma_f$) is in the future of both regions while the other ($\Sigma_i$) is in the past of both regions;
    \item One can always find two foliations $\mathcal{S}_1$ and $\mathcal{S}_2$ of the spacetime which both contain $\Sigma_i$ and $\Sigma_f$, in which in $\mathcal{S}_1$ one \enquote{sees} $\U$ as \enquote{happening before} $\V$, whereas in $\mathcal{S}_2$ one \enquote{sees} $\U$ as \enquote{happening after} $\V$;
    \item The future and past boundaries of a precompact region of spacetime are covariant with respect to isometries: a point in the future and past boundaries of a region stays in the future and past boundaries, respectively, of that region upon a change of coordinates.
\end{itemize}
In section \ref{sec:Eins caus}, we write down a state update rule in globally hyperbolic spacetimes from which the Einstein causality of quantum measurements conducted across spacelike-separated (pre)compact regions follows \sam{under the assumptions of the integrability and unitarity of the dynamics and the foliation independence of the rays}. Such results are then showed to extend to positive operator-valued measures (POVMs), which assign a probability to the most general type of measurements undertaken in quantum theory, where the phase is now necessarily bosonic. In section \ref{sec:non-selective measurements}, we examine the case of non-selective measurements and explore how the Einstein causality for selective measurements imply the quantum no-signalling of non-selective measurements in those simple cases where the measurements are assumed to be instantaneous. 
\\ 

In section \ref{sec:impossible}, we explore how Sorkin's impossible measurements can be understood in the framework explored in this paper. In particular, we show that we can avoid immediate contradictions by assuming that collapse-inducing measurements are fundamentally irreversible.
Finally, in section \ref{sec: beyond interaction pic}, we discuss how these results compare to more general settings, where one cannot assume the existence of an interaction picture. \\

\section{Einstein causality in the interaction picture}

\label{sec:Eins caus}

\subsection{Measurements over topological spaces} 

Some quantum observables are well known to not be representable as projection-valued measures (PVMs) -- a time observable, a position observable for photons and a phase observable are prime examples \cite{beneduci_note_2013}. Generally, in finite dimensions, quantum measurements are implemented by a collection of \sam{(Kraus)} operators $\{M_n\}_n \subset \bh$ satisfying $\sum_n M_n^\dagger M_n = \mathbb{1}_{\bh}$ such that, given a pure state $\ket{\varphi} \in \hi$ before the measurement, the probability that outcome $n$ occurs is given by the Born rule $p(n) = \expval{M_n^\dagger M_n}{\varphi} = \norm{M_n \ket{\varphi}}^2$ and the (normalised) state of the system after this operation is given by $\frac{M_n \ket{\varphi}}{\norm{M_n \ket{\varphi}}}$. Equivalently, we can define the family -- a positive operator-valued measure (a POVM) $\{E_n\}_n$ of positive operators $E_n = M^\dagger_n M_n$ with $M_n = \sqrt{E_n}$, $p(n) = \expval{E_n}{\varphi} = \norm{\sqrt{E_n} \ket{\varphi}}^2$ and post-measurement state $\frac{\sqrt{E_n} \ket{\varphi}}{\norm{\sqrt{E_n} \ket{\varphi}}}$. A PVM is a special type of POVM. More generally, POVMs can be defined over topological spaces as follows \cite{beneduci_note_2013,Yashin_2020}.

\begin{definition}
    \label{def: POVM interaction picture}
    Let $X$ be a topological space and $\Bor(X)$ the Borel $\sigma$-algebra on $X$. A \emph{POVM} is a map $\E : \Bor(X) \to \eh$ such that
    \begin{equation}
        \E\Bigg(\bigcup_{n=1}^\infty \Delta_n \Bigg) = \sum_{n=1}^\infty \E(\Delta_n)
    \end{equation}
    where $\{\Delta_n\}_{n}$ is a countable family of disjoint sets in $\Bor(X)$ and the series converges in the weak operator topology. We consider normalised POVMs, for which $\E(X) = \mathbb{1}$. 
\end{definition}

The condition of countable additivity precisely reflects the condition for the POVM to induce a probability measure: for any $\omega \in \state$, $p_\omega^{\E}(\cdot) := \Tr[\omega \E(\cdot)]$ is indeed a probability measure. Thus, since measurements can be defined for any POVM (and vice-versa), the notion of measurements on topological spaces also inherits such additivity properties.

\begin{definition}
    \label{def: mmt interaction picture}
    Let $X$ be a topological space and $\Bor(X)$ the Borel $\sigma$-algebra on $X$. A \emph{measurement} is a map $\M : \Bor(X) \to \bh$ such that
    \begin{equation}
        \M\Bigg(\bigcup_{n=1}^\infty \Delta_n \Bigg)^\dagger \M\Bigg(\bigcup_{n=1}^\infty \Delta_n \Bigg) = \sum_{n=1}^\infty \M(\Delta_n)^\dagger \M(\Delta_n)
    \end{equation}
    where $\{\Delta_n\}_{n}$ is a countable family of disjoint sets in $\Bor(X)$ and the series converges in the weak operator topology, and such that
    \begin{equation}
        \M(X)^\dagger \M(X) = \mathbb{1}_{\bh} \, .
    \end{equation}
\end{definition}

The above generalises the identity $\E_n = \M_n^\dagger \M_n$ to the countable case $\E(\Delta_n) = \M(\Delta_n)^\dagger \M(\Delta_n)$. Conversely, as in the finite case, $\sqrt{\E} : \Delta \mapsto \sqrt{\E}(\Delta) := \sqrt{\E(\Delta)}$ (which exists as $\E$ is positive semi-definite) is a measurement.

\begin{proposition}
    \label{prop: U1 M U2 is a mmt}
    Let $M : \Bor(X) \to \bh$ be a measurement, $U_1,U_2 \in U(\hi)$ be two unitaries. Then the map $U_1 M(\cdot) U_2 : \Bor(X) \to \bh$ is a measurement.
\end{proposition}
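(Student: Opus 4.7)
The strategy is simply to verify the two defining conditions of Definition \ref{def: mmt interaction picture} directly for the map $N(\cdot) := U_1 M(\cdot) U_2$. The essential observation is that because $U_1$ is unitary we have $U_1^\dagger U_1 = \mathbb{1}$, so
\begin{equation}
N(\Delta)^\dagger N(\Delta) = U_2^\dagger M(\Delta)^\dagger U_1^\dagger U_1 M(\Delta) U_2 = U_2^\dagger M(\Delta)^\dagger M(\Delta) U_2 \, ,
\end{equation}
which means the left-multiplying unitary $U_1$ contributes nothing to the measurement conditions, and only the conjugation by $U_2$ needs attention.

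The plan is then as follows. First I would check normalisation: $N(X)^\dagger N(X) = U_2^\dagger M(X)^\dagger M(X) U_2 = U_2^\dagger \mathbb{1}_{\bh} U_2 = \mathbb{1}_{\bh}$, using the assumed normalisation of $M$. Next, for countable additivity, take a countable family $\{\Delta_n\}_n$ of disjoint Borel sets in $X$; by the identity above and the measurement property of $M$ applied to $\bigcup_n \Delta_n$, one has
\begin{equation}
N\Bigl(\bigcup_{n=1}^\infty \Delta_n\Bigr)^\dagger N\Bigl(\bigcup_{n=1}^\infty \Delta_n\Bigr) = U_2^\dagger \Bigl(\sum_{n=1}^\infty M(\Delta_n)^\dagger M(\Delta_n)\Bigr) U_2 \, ,
\end{equation}
where the inner series converges in the weak operator topology.

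The only subtlety is therefore to push $U_2^\dagger (\cdot) U_2$ through the infinite sum, i.e. to argue that conjugation by a fixed unitary is continuous in the weak operator topology. This is the step I would highlight as the genuine content of the proof, but it is immediate from the matrix-element identity $\langle \psi | U_2^\dagger A U_2 | \phi \rangle = \langle U_2 \psi | A | U_2 \phi\rangle$ for all $\psi, \phi \in \hi$: if $A_k \to A$ in WOT, then $\langle \psi | U_2^\dagger A_k U_2 | \phi\rangle \to \langle \psi | U_2^\dagger A U_2 | \phi\rangle$ as well. Applying this to the partial sums $A_k = \sum_{n=1}^k M(\Delta_n)^\dagger M(\Delta_n)$ yields $\sum_{n=1}^\infty U_2^\dagger M(\Delta_n)^\dagger M(\Delta_n) U_2 = \sum_{n=1}^\infty N(\Delta_n)^\dagger N(\Delta_n)$ in WOT, completing the verification. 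There is no real obstacle here; the proof is essentially a bookkeeping exercise, but the WOT-continuity of unitary conjugation is the one non-algebraic fact being used.
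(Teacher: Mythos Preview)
Your proposal is correct and follows essentially the same route as the paper: both verify the two conditions of Definition~\ref{def: mmt interaction picture} directly by using $U_1^\dagger U_1 = \mathbb{1}$ to reduce $N(\Delta)^\dagger N(\Delta)$ to $U_2^\dagger M(\Delta)^\dagger M(\Delta) U_2$ and then invoke the measurement properties of $M$. You are in fact slightly more explicit than the paper, which passes $U_2^\dagger(\cdot)U_2$ through the WOT-convergent sum without comment; your matrix-element argument for WOT-continuity of unitary conjugation fills in that step.
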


\begin{proof}
    For all countable families of disjoint sets $\{\Delta_n\}_{n}$ in $\Bor(X)$,
    \begin{widetext}
        \begin{align}
        \Bigg(U_1 \M\Big(\bigcup_{n=1}^\infty \Delta_n \Big) U_2 \Bigg)^\dagger \Bigg(U_1 \M\Big(\bigcup_{n=1}^\infty \Delta_n \Big) U_2 \Bigg) &= U_2^\dagger \M\Big(\bigcup_{n=1}^\infty \Delta_n \Big)^\dagger \M\Big(\bigcup_{n=1}^\infty \Delta_n \Big) U_2 \\ &= U_2^\dagger\sum_{n=1}^\infty \M(\Delta_n)^\dagger \M(\Delta_n) U_2 \\ &= \sum_{n=1}^\infty U_2^\dagger \M(\Delta_n)^\dagger U_1^\dagger U_1 \M(\Delta_n) U_2 \\
        &= \sum_{n=1}^\infty \Big(U_1 \M(\Delta_n) U_2\Big)^\dagger U_1 \M(\Delta_n) U_2
    \end{align}
    \end{widetext}
    in the weak operator topology, and
    \begin{align}
        (U_1 \M(X) U_2)^\dagger (U_1 \M(X) U_2) &= U_2^\dagger \M(X)^\dagger U_1^\dagger U_1 \M(X) U_2 \\ &= U_2^\dagger \M(X)^\dagger \M(X) U_2 \\ &= U_2^\dagger U_2 = \mathbb{1}_{\bh} \, .
    \end{align}
\end{proof}

In spacetimes where there is no notion of simultaneity for spacelike-separated observers, such as Minkowski spacetime and, more generally, globally hyperbolic Lorentzian spacetimes, one should expect pairs of POVMs evaluated over spacelike separated regions to satisfy some notion of commutativity: there is no unique \enquote{plane of simultaneity} in such spacetimes, so the \enquote{order} of the successive measurements carried out in spacelike-separated regions ought to be irrelevant. The commutativity of local observables for spacelike-separated regions is then sometimes called the \emph{Einstein causality condition}, imposed as a physical postulate in most of the scenarios considered in relativistic quantum physics. \\

\sam{Yet another layer of complexity that one can add to the picture above is to consider ensembles of measurement operators conducted in precompact regions. In a given precompact region $\U \subset \mathcal{M}$, one can consider state measurements through $\{\M_i(\U)\}_{i = 1}^{n \in \mathbb{N}}$ such that $\sum_{i=1}^n \E_i(\U) = \sum_{i=1}^n \M_i(\U)^\dagger \M_i(\U) = \mathbb{1}_{\bh}$ for all $\U \subset \mathcal{M}$. Each $\M_i(\U)$ is to be understood as an operation in $\U$, with outcome probability $p_\U(i)$ given by the Born rule. Let us examine how this notion of measurements fits in with relativistic quantum dynamics.} \\

\subsection{Quantum measurements in the Tomonaga-Schwinger picture} 

\label{sec:tomonaga-schwinger mmts}

In theoretical physics, it is common practice to associate quantum states (often taken to be pure, i.e. rays in Hilbert space) to Cauchy slices of globally hyperbolic spacetimes as $\Sigma \mapsto \ket{\psi[\Sigma]} \in \hi$ and evolving them e.g. through the Tomonaga-Schwinger equation
\begin{equation}
    i \hbar \frac{\delta\ket{\psi[\Sigma]}}{\delta \Sigma(x)} = H_I(x) \ket{\psi[\Sigma]} \, .
\end{equation}
where $H_I$ is some (interaction) Hamiltonian \sam{density}. More generally, one ought to associate density operators to Cauchy slices to account for mixtures; however, it will be sufficient for us to consider the action of POVMs on pure states since operators are fully determined by how they act on pure states. \\

\sam{A crucial consistency condition for what follows is that of \emph{integrability} \cite{koba_integrability_1950,blum_perturbative_2025}. This notion is related to the fact that ``pushing the surface forward" in the Tomonaga-Schwinger equation does not depend on the path of deformation. Formally, it is the condition that the functional derivatives commute:}
\begin{equation}
    \label{eqn:integrability}
    \frac{\delta}{\delta \Sigma(x)} \frac{\delta}{\delta \Sigma(y)} \Psi[\Sigma] \stackrel{!}{=} \frac{\delta}{\delta \Sigma(y)} \frac{\delta}{\delta \Sigma(x)} \Psi[\Sigma] \, .
\end{equation}
\sam{In turn, assuming the unitary equivalence of the Hilbert spaces across spacetime, this is ensured by the requirement that the interaction Hamiltonian commutes over spacelike separations, i.e.}
\begin{equation}
    \comm{H_I(x)}{H_I(y)} \stackrel{!}{=} 0 \qquad \forall x \indep y \, ,
\end{equation}
\sam{to be understood in the distributional sense. This is a microcausality condition at the level of the interaction Hamiltonian density, and is certainly weaker than imposing pointwise microcausality for \emph{all} field operators of the theory. This in turn guarantees that evolving $\Sigma$ to $\Sigma'$ independent of the foliation chosen, and only depends on the endpoints.} \\

Importantly, the validity of the Tomonaga-Schwinger equation in the context of QFT in generic globally hyperbolic spacetimes of dimension greater than $2$ has been disproven by \cite{Torre_1999}, where it was argued that the formalism of AQFT is generally better suited for such a functional evolution of the quantum state. Moreover, Helfer showed that, in the Hadamard representation of the canonical commutation relations, the dynamical evolution cannot be unitarily implemented \cite{Helfer_1996}. Furthermore, the Tomonaga-Schwinger equation is intrinsically defined in the interaction picture -- see \cite{lienert_borns_2020} for a detailed discussion on this matter. These objections are ultimately linked with the inexistence of unitarily equivalent Hilbert spaces across different Cauchy hypersurfaces -- this already holds for free fields on spacetimes with topology $\mathbb{R} \times \mathbb{T}^n$ for $n>1$ \cite{Torre_1999}. \\

On the other hand, the validity of the Tomonaga-Schwinger equation on a flat spacetime in $1+1$ dimensions is shown in \cite{Torre_1998}, where it was proven that dynamical evolution along arbitrary spacelike foliations is indeed unitarily implemented on the same Fock space. This highlights that a map $\Sigma \mapsto \hi_\Sigma \cong \hi$ for all acausal Cauchy hypersurfaces $\Sigma$ of a globally hyperbolic spacetime $(\mathcal{M},g)$ for some fixed $\hi$ does exist for some globally hyperbolic spacetimes and quantum field theories, but does not exist for others. In most of this paper, this restriction will play an important role. \\

We now postulate a state update rule in this Tomonaga-Schwinger picture which, though similar in some ways to the Hellwig-Kraus light-cone collapse \cite{hellwig_pure_1969,Hellwig1970,Kraus1971}, differs from it in certain key aspects. In fact, such a functional approach to state measurements has been approached by Aharonov and Albert in response to Hellwig and Kraus \cite{Aharonov1981b,Aharonov1984}. It was argued that \enquote{the reduction process \emph{must} be instantaneous} (though this conclusion was challenged by \cite{Mould1999}). \sam{The formalism was further developed by Breuer and Petruccione \cite{breuer_state_1999,breuer_measurements_2007}.} We \sam{continue} in this direction by \sam{writing down} an explicit state update rule in those globally hyperbolic spacetimes which allow for this Tomonaga-Schwinger equation to be written down, and shall analyse different consequences of this rule. 

\begin{definition}[State update rule in the interaction picture]
    \label{def:mmt update rule}
    Let $(\mathcal{M},g)$ be a globally hyperbolic spacetime such that there exists a map $\Sigma \mapsto \hi_\Sigma \cong \hi$ for all acausal Cauchy hypersurfaces $\Sigma$. Suppose the dynamics are given by unitary isomorphisms $U_{\Sigma}^{\Sigma'} : \hi \to \hi$ for every disjoint pair of acausal Cauchy hypersurfaces $\Sigma,\Sigma' \subset \mathcal{M}$. Let $\{\M_k : \Bor(\mathcal{M}) \to \bh\}_{k \in \mathbb{N}}$ be a measurement, and $\U \subset \mathcal{M}$ be precompact such that $B^+(\U) \cap B^-(\U) = \varnothing$. The state update rule in the interaction picture corresponds to the following statement: 
    \begin{quote}
        Let $\Sigma_i$ and $\Sigma_f$ be any disjoint pair of acausal Cauchy hypersurfaces such that $B^-(\U) \subset \Sigma_i$ and $B^+(\U) \subset \Sigma_f$\footnote{These exist by proposition \ref{prop:Bpm is acausal compact} and theorem \ref{Thm: three cauchy hypersurfaces bernal}.}. Let $\tau$ be any Cauchy time function (see theorem \ref{thm:Cauchy time function} in the Appendix) such that $\tau^{-1}(t_i) = \Sigma_i$ and $\tau^{-1}(t_f) = \Sigma_f$, $t_i < t_f$. Upon a measurement $\{\M_k\}$ in the region $\U$ with outcome $j$, the state updates as
    \begin{equation}
        \ket{\psi[\tau^{-1}(t_f')]} = \frac{U^{\tau^{-1}(t_f')}_{\Sigma_f}\M_j(\U) U_{\Sigma_i}^{\Sigma_f} \ket{\psi[\Sigma_i]}}{\norm{\M_j(\U) U_{\Sigma_i}^{\Sigma_f} \ket{\psi[\Sigma_i]}}}
    \end{equation}
    where $t_f' > t_f$, \sam{with probability}
    \begin{equation}
        p_\U(j) := \norm{\M_j(\U) U_{\Sigma_i}^{\Sigma_f} \ket{\psi[\Sigma_i]}}^2 \, .
    \end{equation}
    \end{quote}    
\end{definition}

That is, the state after the measurement is related to the state before through this (selective) state update rule. Importantly, the state on any $\Sigma_f$ which contains $B^+(\U)$ will be the same as the state lying on any $\Sigma_f'$ which satisfies $B^+(\U) \subset \Sigma_f'$; likewise the state on $\Sigma_i$ which contains $B^-(\U)$ ought to be the same as that on any $\Sigma_i'$ which satisfies $B^-(\U) \subset \Sigma_i'$. This can be seen as a natural consequence of the independence of one's foliation of $(\mathcal{M},g)$ on the resulting physics. This, alongside proposition \ref{prop: cov Bpm}, suggests that this state update rule is indeed covariant (especially if we further restrict our discussion to covariant POVMs \cite{Holevo_1982,Chiribella_2004, carette_operational_2023}). Note that we do not attempt to implement the state update rule through physical contents that lie within the theory. Rather, we stay agnostic in a \enquote{Copenhagenist} fashion to the precise physical realisation of such a collapse and to the implementation of the measurement process.  \\

We here mentioned precompact regions rather than compact ones, however there is little to no operational distinction between a region and its closure.\footnote{For example, if quantum gravity (or some similar theory) can be applied to position measurements, then resolving the difference between a region $\{x=(0,1),t=(0,1)\}$ and $\{x=[0,1],t=[0,1]\}$ would require an infinitely precise microscope, which would require an infinite amount of energy and thus create a black hole in that region from energies around the Planck scale and thus destroy the measurement -- the Planck scale being infinitely larger scale than that of a single mathematical point.} One may of course very well work with compact regions in the first place, though if one wants to discuss arbitrary subregions of precompact regions (which are precompact), this is possible whereas arbitrary subregions of compact regions need not be compact (but are precompact). \\

This notion of wavefunction collapse generalises the standard Lüders' rule of nonrelativistic physics, and may straightforwardly be generalised to density operators. Note that this state update rule is insensitive to the form of the unitary state dynamics \emph{over unitarily equivalent Hilbert spaces}: if $\M$ is a measurement, then so is $U_1 \M(\cdot) U_2$ where $U_1$ and $U_2$ are unitaries on $\hi$, as was reminded in proposition \ref{prop: U1 M U2 is a mmt}. This discussion is less straightforward when state dynamics is non-unitary: this is a well-known nontrivial issue already present in Minkowski spacetime \cite{gisin_weinbergs_1990,kent_nonlinearity_2005}, for which the Born rule generally needs to be extended \cite{helou_extensions_2017}. This state update rule also generalises some \enquote{instantaneous} versions of Born rules and collapse rules in globally hyperbolic spacetimes that have been explored e.g. in \cite{lienert_borns_2020,lill_another_2022,reddiger_towards_2025}. However, these generally do not require the Hilbert space at each Cauchy slice to be unitarily equivalent to the Hilbert space at any other Cauchy slice. In such cases, the dynamics is given by unitary isomorphisms $U_\Sigma^{\Sigma'} : \hi_\Sigma \to \hi_{\Sigma'}$ for every pair of disjoint acausal Cauchy\footnote{We here follow \cite{lienert_borns_2020} in imposing acausality whilst allowing for Cauchy hypersurfaces which are not necessarily spacelike. Requiring a timelike normal 1-form everywhere is however perhaps natural for a notion of \enquote{time-evolution}; the arguments in this paper can plausibly be extended to such a further restriction. In particular, if one shows that $B^\pm(\U),B^\pm(\V)$ and/or $\sigma_{1},\sigma_2$ in App. \ref{sec:diff geo} are not only compact and acausal but also spacelike submanifolds with boundary, then one can embed these into spacelike Cauchy hypersurfaces \cite{bernal_further_2006} and the same arguments as in the rest of this paper follow.} hypersurfaces, with
\begin{equation}
    \ket{\psi[\Sigma']} = U_\Sigma^{\Sigma'} \ket{\psi[\Sigma]} \, , \, U^{\Sigma''}_{\Sigma'} U^{\Sigma'}_{\Sigma} = U^{\Sigma''}_\Sigma \, , \, U^\Sigma_\Sigma = \mathbb{1}_{\mathcal{B}(\hi_\Sigma)} \, .
\end{equation}
This form of evolution can be understood, in broad terms, as a Heisenberg-like or Schrödinger-like formulation of the Tomonaga-Schwinger equation which is intrinsically defined in an interaction picture. We shall make the link with unitarily inequivalent evolutions in section \ref{sec: beyond interaction pic}. \\

Although instantaneous measurements are useful to model how experimentalists \enquote{see} collapse occur, we argue that this instantaneity is fundamentally unphysical and unscientific (even if one is not a spacetime atomist). From an operational point of view, measurements always occur in a finite region of space \emph{and} time. In practice, although the effects of measurements cause collapse \enquote{very fast}\footnote{Even this is arguably contentious -- one may model measurement outcomes as determined by the location of an apparatus pointer, ink on paper, or local densities of chemical species in an observer’s brain. Placing bounds on collapse times from experimental investigation is thus, in practice, less obvious than what is sometimes claimed.}, they will never be instantaneous.\footnote{The determination of \emph{exactly when} a collapse-inducing measurement occurs can be, at best, made up to the Planck scale (beyond which a realistic model of position measurements will generate a black hole and destroy the measurement). One can then employ a measure-theoretic argument to argue against the instantaneity of measurements: if one can (potentially) bound the measurement duration time to the time-interval $[0,t_P]$ where $t_P$ is the Planck time, then unless one fine-tunes one's credence for the duration time of measurements to $p(\Delta t=0) = 1$ (which does not and \emph{cannot} have support from data), one ought to consider the (arguably infinitely more likely) possibility that a lower bound on the measurement time is any interval contained in $[0,t_P]$.} Thus, finite-time measurements are, in the author's point of view, a much-better suited notion of measurement for realistic considerations. \\

This discussion, and state update rule, are of course reminiscent of the Hellwig-Kraus reduction rule \cite{Hellwig1970} in the context of local quantum field theory, where it was already discussed that instantaneous measurements are seemingly at odds with (Lorentz) covariance (though instantaneous state updates need not be). There, it is postulated that a field measurement in any \enquote{finite} spacetime region $\U$ changes the field's state in the future and side cones of the region (that is, in $\mathcal{M} \smallsetminus J^-(\U)$). This is certainly analogous to the above statement that relies on acausal Cauchy hypersurfaces which contain $B^\pm(\U)$, with the difference that instead of a slice collapse, Hellwig-Kraus reduction is a light-cone collapse and associates states to regions rather than restricting this to Cauchy hypersurfaces and providing a specific implementation of the collapse or of the dynamics. Here, we have some form of ``side-cone collapse" which happens on every such Cauchy hypersurfaces. Moreover, we do assume some form of non-instantaneity -- this can be covered through a statement of the form: upon a measurement $\{\M_k\}$ in the region $\U$ with outcome $j$,
\begin{multline}
    \label{eqn:instantaneous mmts}
    \lim_{\epsilon \to 0^+} \Bigg(\ket{\psi[\tau^{-1}(t_f + \epsilon)]} \\- \frac{\M_j(\U) U_{\tau^{-1}(t_i-\epsilon)}^{\tau^{-1}(t_f+\epsilon)} \ket{\psi[\tau^{-1}(t_i - \epsilon)]}}{\norm{\M_j(\U) U_{\tau^{-1}(t_i-\epsilon)}^{\tau^{-1}(t_f+\epsilon)} \ket{\psi[\tau^{-1}(t_i - \epsilon)]}}} \Bigg) = 0
\end{multline}
in the weak-* topology of $\bh$ \sam{with Born probability $p_\U(j)$}, provided $B^+(\U) \subset \tau^{-1}(t_f)$ and $B^-(\U) \subset \tau^{-1}(t_i)$, that is, the state of the system \enquote{just after} the measurement is related to that \enquote{just before} the measurement through Lüder's rule (up to some appropriate notion of \enquote{just after} and \enquote{just before} in spacetime). This is also compatible with \enquote{instantaneous} measurements if $t_i = t_f$, and with the notion of state update rule we provided previously. \\

A first objection arises: unlike Hellwig and Kraus, who considered that collapse occurs strictly on the past light cone \cite{hellwig_pure_1969,Hellwig1970} and only assign (algebraic) states to precompact regions rather than to Cauchy hypersurfaces, if a measurement is carried out in a precompact region $\U$, what happens on a hypersurface which intersects the interior $\text{int}(\V)$ of a region $\V$ over which another measurement is conducted? \\

One response to this is to take an operationalist stance: what matters for an experimentalist in $\V$ is the outcome of experiments which is obtained \emph{on (or in the causal future of) the future boundary} of the region on which the measurement is undertaken. If an experimentalist is able to observe some outcome during a measurement, then that measurement can be split into (at least) two individual measurements, since the intermediate observation gives rise to a probability distribution and thus a POVM (and thus a measurement operator). Hence, even for a Cauchy hypersurface which intersects part (but not all of) the (future causal cone of the) future causal boundary of $\V$, the state would remain the post-measurement state after the measurement of $\U$ (up to unitary evolution) but the pre-measurement state with respect to $\V$ on this Cauchy hypersurface unless that measurement can be decomposed. Otherwise, only those Cauchy hypersurfaces which lie in the strict causal future of the future causal boundary of $\V$ would \enquote{see} a state-change from the measurement in $\V$. A similar argument (focusing on the non-covariance of the state histories) was given in \cite{Aharonov1984}, who argued that \enquote{the states themselves make sense only within a given frame or, more abstractly, along some given family of parallel spacelike hypersurfaces, and this is markedly in contrast to the nonrelativistic case}.

\subsection{Spacelike measurements}

Let us now show that, with the state update rule given in definition \ref{def:mmt update rule}, measurement operators indeed anyonically commute over spacelike-separated precompact regions, \sam{consistent with the relativistic structure of the theory}. 

\begin{theorem}
    \label{thm: measurement}
    Let $(\mathcal{M},g)$ be a time-oriented globally hyperbolic $d$-dimensional spacetime such that there exists a map $\Sigma \mapsto \hi_\Sigma \cong \hi$ for all acausal Cauchy hypersurfaces $\Sigma \subset \mathcal{M}$, $\{\M_{1,m} : \Bor(\mathcal{M}) \to \bh\}_{m \in \mathbb{N}}$ and $\{\M_{2,n} : \Bor(\mathcal{M}) \to \bh\}_{n \in \mathbb{N}}$ be two measurements. Then for all precompact $\U, \V \subset \mathcal{M}$ such that $\overline{\U} \indep \overline{\V}$, and \sam{for all respective outcomes $m,n$ of $\{\M_{1,m}\}$ and $\{\M_{2,n}\}$ there exists a $\phi_{mn}(\U,\V) \in [0,2\pi)$ such that}
    \begin{equation}
        \comm{\M_{1,m}(\U)}{\M_{2,n}(\V)}_{\phi_{mn}(\U,\V)} = 0
    \end{equation}
    provided
    \begin{enumerate}
        \item rays in $\hi_\Sigma$ are foliation-independent for all acausal Cauchy hypersurfaces $\Sigma \subset \mathcal{M}$,
        \item the state update rule of definition \ref{def:mmt update rule} holds, and
        \item the dynamics are given by unitary isomorphisms $U_{\Sigma}^{\Sigma'} : \hi \to \hi$ for every pair of disjoint acausal Cauchy hypersurfaces $\Sigma,\Sigma' \subset \mathcal{M}$,
        \item \sam{the dynamics are integrable.}
    \end{enumerate}
\end{theorem}

\begin{proof}
    Let $\U, \V$ be any two precompact regions of $\mathcal{M}$ such that $\overline{\U} \indep \overline{\V}$. By theorem \ref{thm:Foliation Cauchy}, there exists foliations of $\mathcal{M}$ by means of two families $\mathcal{S}_1$ and $\mathcal{S}_2$ of acausal Cauchy hypersurfaces such that 
    \begin{enumerate}
        \item there exists a $\Sigma_1 \in \mathcal{S}_1$ such that $\Sigma_1 \cap J^+(\U) \neq \varnothing$ and $\Sigma_1 \cap J^-(\U) = \varnothing$ while $\Sigma_1 \cap J^+(\V) = \varnothing$ and $\Sigma_1 \cap J^-(\V) \neq \varnothing$, i.e. $\Sigma_1$ is in the strict causal future of $\U$ and the strict causal past of $\V$;
        \item there exists a $\Sigma_2 \in \mathcal{S}_2$ such that $\Sigma_2 \cap J^+(\U) = \varnothing$ and $\Sigma_2 \cap J^-(\U) \neq \varnothing$ and $\Sigma_2 \cap J^+(\V) \neq \varnothing$ and $\Sigma_2 \cap J^-(\V) = \varnothing$, i.e. $\Sigma_2$ is in the strict causal past of $\U$ and the strict causal future of $\V$.
        \item there exists two spacelike Cauchy hypersurfaces $\Sigma_i,\Sigma_f \in \mathcal{S}_1 \cap \mathcal{S}_2$ such that $\Sigma_1,\Sigma_2 \subset I^+(\Sigma_i)$ and $\Sigma_1,\Sigma_2 \subset I^-(\Sigma_f)$, i.e. $\Sigma_i$ and $\Sigma_2$ lie, respectively, in the timelike past and timelike future of both $\Sigma_1$ and $\Sigma_2$.
    \end{enumerate}
    
    \begin{figure*}[t!]
    \centering
    \begin{tikzpicture}[x=0.75pt,y=0.75pt,yscale=-0.9,xscale=0.9]
%uncomment if require: \path (0,300); %set diagram left start at 0, and has height of 300

%Shape: Polygon Curved [id:ds6839914675828416] 
\draw   (72,144) .. controls (92,134) and (188.67,132.83) .. (258,153.5) .. controls (327.33,174.17) and (217.67,189.67) .. (162,204) .. controls (106.33,218.33) and (121.67,247.67) .. (72,204) .. controls (22.33,160.33) and (52,154) .. (72,144) -- cycle ;
%Shape: Polygon Curved [id:ds7296745729026823] 
\draw   (470,138.5) .. controls (490,128.5) and (531.67,128) .. (549.67,143.33) .. controls (567.67,158.67) and (648,182.17) .. (659,191.33) .. controls (670,200.5) and (596,214.5) .. (502.67,199.67) .. controls (409.33,184.83) and (450,148.5) .. (470,138.5) -- cycle ;
%Curve Lines [id:da09525154565281868] 
\draw    (4,69.5) .. controls (44,39.5) and (633,60.5) .. (697,65.5) ;
%Curve Lines [id:da6569421555469648] 
\draw [color={rgb, 255:red, 74; green, 144; blue, 226 }  ,draw opacity=1 ] [dash pattern={on 4.5pt off 4.5pt}]  (2,101.5) .. controls (84.26,87.72) and (184.91,113.87) .. (291.37,146.55) .. controls (424.34,187.38) and (566.38,238.39) .. (693,234.5) ;
%Curve Lines [id:da9233697409100923] 
\draw [color={rgb, 255:red, 208; green, 2; blue, 27 }  ,draw opacity=1 ] [dash pattern={on 0.84pt off 2.51pt}]  (4,254.5) .. controls (169,258.5) and (488,82.5) .. (690,88.5) ;
%Straight Lines [id:da22708954887376565] 
\draw    (20,44) -- (20.91,12.5) ;
\draw [shift={(21,9.5)}, rotate = 91.66] [fill={rgb, 255:red, 0; green, 0; blue, 0 }  ][line width=0.08]  [draw opacity=0] (8.93,-4.29) -- (0,0) -- (8.93,4.29) -- cycle    ;
%Curve Lines [id:da21011154091989903] 
\draw    (4,283.5) .. controls (44,253.5) and (633,274.5) .. (697,279.5) ;
%Straight Lines [id:da22520288853473214] 
\draw [color={rgb, 255:red, 74; green, 144; blue, 226 }  ,draw opacity=1 ] [dash pattern={on 4.5pt off 4.5pt}]  (157,262.5) -- (157,252.5) -- (157,217.5) ;
\draw [shift={(157,214.5)}, rotate = 90] [fill={rgb, 255:red, 74; green, 144; blue, 226 }  ,fill opacity=1 ][line width=0.08]  [draw opacity=0] (8.93,-4.29) -- (0,0) -- (8.93,4.29) -- cycle    ;
%Straight Lines [id:da14891429202940376] 
\draw [color={rgb, 255:red, 74; green, 144; blue, 226 }  ,draw opacity=1 ] [dash pattern={on 4.5pt off 4.5pt}]  (629,228) -- (629,218) -- (629,206.5) ;
\draw [shift={(629,203.5)}, rotate = 90] [fill={rgb, 255:red, 74; green, 144; blue, 226 }  ,fill opacity=1 ][line width=0.08]  [draw opacity=0] (8.93,-4.29) -- (0,0) -- (8.93,4.29) -- cycle    ;
%Straight Lines [id:da8397501656459843] 
\draw [color={rgb, 255:red, 74; green, 144; blue, 226 }  ,draw opacity=1 ] [dash pattern={on 4.5pt off 4.5pt}]  (619,169) -- (619,159) -- (619,72.5) ;
\draw [shift={(619,69.5)}, rotate = 90] [fill={rgb, 255:red, 74; green, 144; blue, 226 }  ,fill opacity=1 ][line width=0.08]  [draw opacity=0] (8.93,-4.29) -- (0,0) -- (8.93,4.29) -- cycle    ;
%Straight Lines [id:da18380327268818442] 
\draw [color={rgb, 255:red, 74; green, 144; blue, 226 }  ,draw opacity=1 ] [dash pattern={on 4.5pt off 4.5pt}]  (81,134) -- (81,124) -- (81,107.5) ;
\draw [shift={(81,104.5)}, rotate = 90] [fill={rgb, 255:red, 74; green, 144; blue, 226 }  ,fill opacity=1 ][line width=0.08]  [draw opacity=0] (8.93,-4.29) -- (0,0) -- (8.93,4.29) -- cycle    ;
%Straight Lines [id:da510051678362762] 
\draw [color={rgb, 255:red, 208; green, 2; blue, 27 }  ,draw opacity=1 ] [dash pattern={on 0.84pt off 2.51pt}]  (582,268.5) -- (582,262) -- (581.06,212.5) ;
\draw [shift={(581,209.5)}, rotate = 88.91] [fill={rgb, 255:red, 208; green, 2; blue, 27 }  ,fill opacity=1 ][line width=0.08]  [draw opacity=0] (8.93,-4.29) -- (0,0) -- (8.93,4.29) -- cycle    ;
%Straight Lines [id:da8239682018682419] 
\draw [color={rgb, 255:red, 208; green, 2; blue, 27 }  ,draw opacity=1 ] [dash pattern={on 0.84pt off 2.51pt}]  (599,160.5) -- (599,154) -- (599.94,105.5) ;
\draw [shift={(600,102.5)}, rotate = 91.11] [fill={rgb, 255:red, 208; green, 2; blue, 27 }  ,fill opacity=1 ][line width=0.08]  [draw opacity=0] (8.93,-4.29) -- (0,0) -- (8.93,4.29) -- cycle    ;
%Straight Lines [id:da7700132121478886] 
\draw [color={rgb, 255:red, 208; green, 2; blue, 27 }  ,draw opacity=1 ] [dash pattern={on 0.84pt off 2.51pt}]  (52,246.5) -- (52,240) -- (51.06,190.5) ;
\draw [shift={(51,187.5)}, rotate = 88.91] [fill={rgb, 255:red, 208; green, 2; blue, 27 }  ,fill opacity=1 ][line width=0.08]  [draw opacity=0] (8.93,-4.29) -- (0,0) -- (8.93,4.29) -- cycle    ;
%Straight Lines [id:da8429964481102767] 
\draw [color={rgb, 255:red, 208; green, 2; blue, 27 }  ,draw opacity=1 ] [dash pattern={on 0.84pt off 2.51pt}]  (207,132.5) -- (206.04,68.5) ;
\draw [shift={(206,65.5)}, rotate = 89.14] [fill={rgb, 255:red, 208; green, 2; blue, 27 }  ,fill opacity=1 ][line width=0.08]  [draw opacity=0] (8.93,-4.29) -- (0,0) -- (8.93,4.29) -- cycle    ;

% Text Node
\draw (322,30.4) node [anchor=north west][inner sep=0.75pt]    {$\ket{\psi [ \Sigma _{f}]}$};
% Text Node
\draw (137,83.4) node [anchor=north west][inner sep=0.75pt]    {$\Sigma _{1}$};
% Text Node
\draw (514,83.4) node [anchor=north west][inner sep=0.75pt]    {$\Sigma _{2}$};
% Text Node
\draw (136,167.4) node [anchor=north west][inner sep=0.75pt]    {$\{\M_{1,m}(\U)\}_{m \in \mathbb{N}}$};
% Text Node
\draw (504,166.4) node [anchor=north west][inner sep=0.75pt]    {$\{\M_{2,n}(\V)\}_{n \in \mathbb{N}}$};
% Text Node
\draw (30,19) node [anchor=north west][inner sep=0.75pt]   [align=left] {t};
% Text Node
\draw (332,242.4) node [anchor=north west][inner sep=0.75pt]    {$\ket{\psi [ \Sigma _{i}]}$};
% Text Node
\draw (53,106.4) node [anchor=north west][inner sep=0.75pt]    {$U_{2}$};
% Text Node
\draw (167,237.4) node [anchor=north west][inner sep=0.75pt]    {$U_{1}$};
% Text Node
\draw (637,209.4) node [anchor=north west][inner sep=0.75pt]    {$U_{3}$};
% Text Node
\draw (626,120.4) node [anchor=north west][inner sep=0.75pt]    {$U_{4}$};
% Text Node
\draw (554,232.4) node [anchor=north west][inner sep=0.75pt]    {$\tilde{U}_{1}$};
% Text Node
\draw (569,115.4) node [anchor=north west][inner sep=0.75pt]    {$\tilde{U}_{2}$};
% Text Node
\draw (22,201.4) node [anchor=north west][inner sep=0.75pt]    {$\tilde{U}_{3}$};
% Text Node
\draw (216,83.4) node [anchor=north west][inner sep=0.75pt]    {$\tilde{U}_{4}$};
\end{tikzpicture}
	\caption{Foliation of a globally hyperbolic spacetime into two families of Cauchy hypersurfaces $\mathcal{S}_1$ and $\mathcal{S}_2$, both of which contain $\Sigma_i$ and $\Sigma_f$ in the strict causal past and strict causal future, respectively, of spacelike-separated precompact regions $\U$ and $\V$. $\Sigma_1 \in \mathcal{S}_1$ (blue dashed line) lies in the strict causal future of $\U$ and the strict causal past of $\V$; $\Sigma_2 \in \mathcal{S}_2$ (red dotted line) lies in the strict causal past of $\U$ and the strict causal future of $\V$. In the foliation $S_1$, $\ket{\psi[\Sigma_f]}_{12} = c_{12} U_4\M_{2,n}(\V)U_3 U_2 \M_{1,m}(\U) U_1 \ket{\psi[\Sigma_i]}$; in the foliation $S_2$, $\ket{\psi[\Sigma_f]}_{21} = c_{21}\tilde{U}_4\M_{1,m}(\U)\tilde{U}_3 \tilde{U}_2 \M_{2,n}(\V) \tilde{U}_1 \ket{\psi[\Sigma_i]}$. These are one and the same state as the foliation is arbitrary, i.e. they belong to the same ray.}
    \label{fig:povms}
\end{figure*}
    
    This is shown in Fig. \ref{fig:povms}. Thus, by the state update rule, in the foliation $S_1$, $\M_{1,m}(\U)$ is applied before $\M_{2,n}(\V)$ on any state $\ket{\psi[\Sigma_i]}$ for some outcomes $m,n$ with respective Born probabilities $p_\U(m)$ and $p_\V(n)$, i.e. $\ket{\psi[\Sigma_f]}_{12} = c_{12} U_4\M_{2,n}(\V)U_3 U_2 \M_{1,m}(\U) U_1 \ket{\psi[\Sigma_i]}$ where $c_{12}$ is a normalisation constant and $U_i$ are unitary evolution operators on $\hi$ which depend on the Hamiltonian $H_I$. On the other hand, in the foliation $S_2$, $\M_{2,n}(\V)$ is applied before $\M_{1,m}(\U)$ on any state $\ket{\psi[\Sigma_i]} \in \hi$ with $\ket{\psi[\Sigma_f]}_{21} = c_{21} \tilde{U}_4\M_{1,m}(\U)\tilde{U}_3 \tilde{U}_2 \M_{2,n}(\V) \tilde{U}_1 \ket{\psi[\Sigma_i]}$ where again $c_{21}$ is a normalisation constant and the $\tilde{U}_i$ are unitaries.
    
    Crucially, \sam{once the \emph{unitary integrable} dynamics have been fixed, the individual operators $\M_{1,m}(\U)$ and $\M_{2,n}(\V)$ are definite elements in $\bh$ and can thus be fixed independently of the \sam{\emph{details} of different \emph{unitary integrable}} dynamics that one would like to impose for a given physical situation.} In other words, it suffices to consider the situation where $U_i = \tilde{U}_i = \mathbb{1}$ (take $H_I=0$) when there is no wavefunction dynamics, and deduce the commutation relation between these $\M_{1,m}(\U)$ and $\M_{2,n}(\V)$ which then holds for any choice of \emph{\sam{unitary integrable}} dynamics. We have that
    \begin{align}
        \ket{\psi[\Sigma_f]}_{12} &= \frac{\M_{1,m}(\U) \M_{2,n}(\V) \ket{\psi[\Sigma_i]}}{\norm{\M_{1,m}(\U) \M_{2,n}(\V) \ket{\psi[\Sigma_i]}}} \\ &\equiv c^{\psi[\Sigma_i]}_{\U,\V} \M_{1,m}(\U) \M_{2,n}(\V) \ket{\psi[\Sigma_i]} 
    \end{align}
    where $c^{\psi[\Sigma_i]}_{\U,\V} := \norm{\M_{1,m}(\U) \M_{2,n}(\V) \ket{\psi[\Sigma_i]}}^{-1}$, while
    \begin{align}
        \ket{\psi[\Sigma_f]}_{21} &= \frac{\M_{2,n}(\V) \M_{1,m}(\U) \ket{\psi[\Sigma_i]}}{\norm{\M_{2,n}(\V) \M_{1,m}(\U) \ket{\psi[\Sigma_i]}}} \\ &\equiv c^{\psi[\Sigma_i]}_{\V,\U}  \M_{2,n}(\V) \M_{1,m}(\U)\ket{\psi[\Sigma_i]}  \, .
    \end{align}
    Since the state on $\Sigma_f$ is independent of the foliations of $\mathcal{M}$ which include $\Sigma_f$, each of the $\ket{\psi[\Sigma_f]}_{12}$ and $\ket{\psi[\Sigma_f]}_{21}$ are necessarily operationally equivalent, i.e. they belong to the same ray $[\psi[\Sigma_f]]$. It follows that there exists a $\phi_{mn;\psi[\Sigma_f]}(\U,\V)$ such that
	\begin{align}
	    \frac{\M_{1,m}(\U) \M_{2,n}(\V) \ket{\psi[\Sigma_i]}}{\norm{\M_{1,m}(\U) \M_{2,n}(\V) \ket{\psi[\Sigma_i]}}} &= \ket{\psi[\Sigma_f]}_{12} \\ &= e^{i \phi_{mn;\psi[\Sigma_f]}(\U,\V)} \ket{\psi[\Sigma_f]}_{21} \\ = e^{i \phi_{mn;\psi[\Sigma_f]}(\U,\V)} &\frac{\M_{2,n}(\V) \M_{1,m}(\U) \ket{\psi[\Sigma_i]}}{\norm{\M_{2,n}(\V) \M_{1,m}(\U) \ket{\psi[\Sigma_i]}}} \, .
	\end{align}
    Thus,
    \begin{multline}
        \Bigg(\M_{1,m}(\U) \M_{2,n}(\V) \\- \frac{c^{\psi[\Sigma_i]}_{\V,\U}}{c^{\psi[\Sigma_i]}_{\U,\V}} e^{i \phi_{mn;\psi[\Sigma_f]}(\U,\V)} \M_{2,n}(\V) \M_{1,m}(\U)\Bigg) \ket{\psi[\Sigma_i]} = 0
    \end{multline}
    where we assume that $\ket{\psi[\Sigma_i]} \notin \ker (\M_{1,m}(\U) \M_{2,n}(\V)) \cup \ker (\M_{2,n}(\V) \M_{1,m}(\U))$; otherwise the phase is consistent with being constant. Moreover, let us show that $\frac{c^{\psi[\Sigma_i]}_{\V,\U}}{c^{\psi[\Sigma_i]}_{\U,\V}}$ is constant and deduce that the phase is also state-independent. If $\dim \hi = 1$ we are done ($\bh \cong \mathbb{C}$ is commutative); if $\dim \hi > 1$, let $\ket{\psi[\Sigma_i]} = a \ket{\alpha[\Sigma_i]} + b \ket{\beta[\Sigma_i]}$ where $\ket{\alpha[\Sigma_i]}$ and $\ket{\beta[\Sigma_i]}$ are linearly independent with $a, b \neq 0$ such that $\ket{\alpha[\Sigma_i]},\ket{\beta[\Sigma_i]} \notin (\ker \M_{1,m}(\U)\M_{2,n}(\V)) \cup (\ker \M_{2,n}(\V) \M_{1,m}(\U))$ (otherwise the phase can be taken to be constant and we are done). We have
    \begin{widetext}
        \begin{align}
        \M_{1,m}(\U) \M_{2,n}(\V) \ket{\psi[\Sigma_i]} &= \frac{c^{\psi[\Sigma_i]}_{\V,\U}}{c^{\psi[\Sigma_i]}_{\U,\V}} e^{i \phi_{mn;\psi[\Sigma_f]}(\U,\V)} \M_{2,n}(\V) \M_{1,m}(\U) \ket{\psi[\Sigma_i]} \\
        &= \frac{c^{\psi[\Sigma_i]}_{\V,\U}}{c^{\psi[\Sigma_i]}_{\U,\V}} e^{i \phi_{mn;\psi[\Sigma_f]}(\U,\V)} \Big(\frac{c^{\alpha[\Sigma_i]}_{\U,\V}}{c^{\alpha[\Sigma_i]}_{\V,\U}} e^{-i \phi_{mn;\alpha[\Sigma_f]}(\U,\V)} a \M_{1,m}(\U) \M_{2,n}(\V) \ket{\alpha[\Sigma_i]} \nonumber \\ &\qquad \qquad \qquad \qquad+ \frac{c^{\beta[\Sigma_i]}_{\U,\V}}{c^{\beta[\Sigma_i]}_{\V,\U}} e^{-i \phi_{mn;\beta[\Sigma_f]}(\U,\V)} b \M_{1,m}(\U) \M_{2,n}(\V) \ket{\beta[\Sigma_i]} \Big) \\
        &\stackrel{!}{=}\M_{1,m}(\U) \M_{2,n}(\V) (a \ket{\alpha[\Sigma_i]} + b \ket{\beta[\Sigma_i]}) \\
        \Rightarrow \M_{1,m}(\U) \M_{2,n}(\V) \Big((\frac{c^{\psi[\Sigma_i]}_{\V,\U}c^{\alpha[\Sigma_i]}_{\U,\V}}{c^{\psi[\Sigma_i]}_{\U,\V}c^{\alpha[\Sigma_i]}_{\V,\U}} &e^{i (\phi_{mn;\psi[\Sigma_f]}(\U,\V)-\phi_{\alpha[\Sigma_f]}(\U,\V))} - 1) a \ket{\alpha[\Sigma_i]} \nonumber \\ &+ (\frac{c^{\psi[\Sigma_i]}_{\V,\U}c^{\beta[\Sigma_i]}_{\U,\V}}{c^{\psi[\Sigma_i]}_{\U,\V}c^{\beta[\Sigma_i]}_{\V,\U}} e^{i (\phi_{mn;\psi[\Sigma_f]}(\U,\V)-\phi_{\beta[\Sigma_f]}(\U,\V))} - 1) b \ket{\beta[\Sigma_i]}\Big) = 0 \, .
    \end{align}
    \end{widetext}
    Since $\ket{\alpha[\Sigma_i]}$ and $\ket{\beta[\Sigma_i]}$ are linearly independent and $a,b \neq 0$, this is only possible if both 
    \begin{widetext}
        \begin{equation}
        \begin{cases}
            \frac{c^{\psi[\Sigma_i]}_{\V,\U}c^{\alpha[\Sigma_i]}_{\U,\V}}{c^{\psi[\Sigma_i]}_{\U,\V}c^{\alpha[\Sigma_i]}_{\V,\U}} e^{i (\phi_{mn;\psi[\Sigma_f]}(\U,\V)-\phi_{mn;\alpha[\Sigma_f]}(\U,\V))} = 1 \\
            \frac{c^{\psi[\Sigma_i]}_{\V,\U}c^{\beta[\Sigma_i]}_{\U,\V}}{c^{\psi[\Sigma_i]}_{\U,\V}c^{\beta[\Sigma_i]}_{\V,\U}} e^{i (\phi_{mn;\psi[\Sigma_f]}(\U,\V)-\phi_{mn;\beta[\Sigma_f]}(\U,\V))} = 1
        \end{cases} \Rightarrow \frac{c^{\psi[\Sigma_i]}_{\V,\U}}{c^{\psi[\Sigma_i]}_{\U,\V}} = \frac{c^{\alpha[\Sigma_i]}_{\V,\U}}{c^{\alpha[\Sigma_i]}_{\U,\V}} = \frac{c^{\beta[\Sigma_i]}_{\V,\U}}{c^{\beta[\Sigma_i]}_{\U,\V}} \equiv k(\U,\V) 
    \end{equation}
    \end{widetext}
    for some $k(\U,\V) \in \mathbb{R}^+$ as $\abs{c^{\chi[\Sigma_i]}_{\V,\U}} = c^{\chi[\Sigma_i]}_{\V,\U}$ since these are norms. Hence,
    \begin{multline}
        k(\U,\V) e^{i(\phi_{mn;\psi[\Sigma_f]}(\U,\V) - \phi_{mn;\alpha[\Sigma_f]}(\U,\V))} = 1\\ \Rightarrow k(\U,\V) = e^{-i(\phi_{mn;\psi[\Sigma_f]}(\U,\V) - \phi_{mn;\alpha[\Sigma_f]}(\U,\V))} \\= e^{-i(\phi_{mn;\psi[\Sigma_f]}(\U,\V) - \phi_{mn;\beta[\Sigma_f]}(\U,\V))} \stackrel{!}{\in} \mathbb{R}^+
    \end{multline}
    so $\phi_{mn;\psi[\Sigma_f]}(\U,\V) \equiv  \phi_{mn;\alpha[\Sigma_f]}(\U,\V) \equiv \phi_{mn;\beta[\Sigma_f]}(\U,\V) \mod 2\pi$ and $k(\U,\V) = 1$. Thus, there exists a $\phi_{mn}(\U,\V) \in [0,2\pi)$ such that   
    \begin{equation}
        \comm{\M_{1,m}(\U)}{\M_{2,n}(\V)}_{\phi_{mn}(\U,\V)} \ket{\psi[\Sigma_i]} = 0
    \end{equation}
    for all pure states $\ket{\psi[\Sigma_i]} \in \hi_{\Sigma_i}$, and since operators in $\bh$ are fully determined by their action on pure states and that $\hi_{\Sigma} \cong \hi$ for all acausal Cauchy hypersurfaces $\Sigma$, this concludes the proof.
\end{proof}

Some comments are in order.
\begin{enumerate}
    \item The measurements $\M_1$ and $\M_2$ may in general \emph{arise} from dynamical considerations. For example, if one takes an Everettian point of view to quantum theory, then measurements are described by complicated combinations of unitaries acting on composite systems which are thus indeed dependent on the dynamics. This means that, given an experimental scenario involving such dynamics, one may need to use a measurement $\M$ rather than a measurement $\tilde{\M}$ to implement the physical situation. However, the commutativity properties of the measurement operators themselves, as mathematical entities, are independent of the \sam{details of the unitary integrable dynamics} once they are fixed.
    \item As previously discussed, the existence of the map $\Sigma \mapsto \hi_\Sigma \cong \hi$ for all acausal Cauchy hypersurfaces $\Sigma \subset \mathcal{M}$ of a given foliation $\mathcal{S}$ places a restriction on the type of globally hyperbolic manifold for which this result holds as well as the type of theory that one wishes to work with. 
    \item The independence of the rays $[\psi[\Sigma]]$ on the choice of foliation of $\mathcal{M}$ which includes $\Sigma$ is a natural assumption that relies on the idea that the physics is foliation independent \sam{and requires the integrability of the Tomonaga-Schwinger dynamics for consistency}. Although this principle is rooted in some intuition that nature, being relativistic, does not \enquote{pick out} preferred foliations, this may or may not be strictly required to hold in all formulations of relativistic quantum theory (notably in Bohmian mechanics \cite{Galvan_2015}). One may indeed think of a universe for which $[\psi[\Sigma|\mathcal{S}_1]] \neq [\psi[\Sigma|\mathcal{S}_2]]$. 
    \item In the proof above, we assumed a future directionality of \enquote{time} for convenience. However, the argument is strictly symmetric under \enquote{time} reversal $\Sigma_i \leftrightarrow \Sigma_f$, $I^+ \leftrightarrow I^-$, $D^+ \leftrightarrow D^-$, $J^+ \leftrightarrow J^-$ and $B^+ \leftrightarrow B^-$. This may be relevant to those working with the two-state vector formalism \cite{muga_two-state_2002}.
    \item The physical context in which this commutation relation occurs is certainly necessary. Indeed, it would be nonsensical to talk about \enquote{spacelike commutativity} of maps defined on $\Bor(\mathcal{M})$ which, in itself, does not even possess a causal structure. Likewise, the state update rule of definition \ref{def:mmt update rule} is crucial, simply from an operational point of view: speaking about the commutativity properties of operators without having a notion of measurement is arguably meaningless when these operators are precisely supposed to define measurements. A similar discussion can be made below concerning POVMs.
    \item The assumptions on the regions are natural restrictions to impose to work with operational physics: if the regions are not precompact, then the causal future of these regions may not even meet (e.g. in the $\abs{t}<1$ globally hyperbolic strip of Minkowski spacetime), so the existence of a future region that can compare the order of such past events is not assured -- even so, every finite experiment happens in a (pre)compact region of spacetime, so this is natural to demand. Likewise, the requirements that the closures are spacelike separated is also natural since, as argued previously, there is no operational distinction between a region and its closure. 
\end{enumerate}

We can provide some restrictions on the phase $\phi(\U,\V)$ using the following result.

\begin{theorem}[\cite{Brooke_2002}]
    \label{thm:comm conditions}
    Let $A, B \in \bh$ such that $AB \neq 0$ and $AB = \lambda BA$ for $\lambda \in \mathbb{C}$. Then
    \begin{enumerate}
        \item If $A$ or $B$ is self-adjoint then $\lambda \in \mathbb{R}$;
        \item If both $A$ and $B$ are self-adjoint then $\lambda \in \{-1,1\}$,
        \item If $A$ and $B$ are self-adjoint and one of them is positive, then $\lambda = 1$.
    \end{enumerate}
\end{theorem}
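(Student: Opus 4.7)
The plan is to prove the three statements sequentially, with (1) being the substantive step and (2), (3) following quickly from it. First I handle the symmetry in (1): since $AB=\lambda BA$ is equivalent to $BA=\lambda^{-1}AB$, the two hypotheses ``$A$ is self-adjoint'' and ``$B$ is self-adjoint'' are interchanged by swapping $A\leftrightarrow B$ and $\lambda\leftrightarrow \lambda^{-1}$, so it suffices to treat $A=A^{*}$.

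For (1), the main tool is the Borel functional calculus. A routine induction on $AB=\lambda BA$ yields $A^n B=\lambda^n BA^n$ for all $n\in\mathbb{N}$, hence $p(A)B=Bp(\lambda A)$ for every polynomial $p$, and by approximation $f(A)B=Bf(\lambda A)$ for every bounded Borel function $f$. Here $\lambda A$ is normal, so its spectral projections satisfy $E_{\lambda A}(S)=E_{A}(\{r\in\mathbb{R}:\lambda r\in S\})$. Taking $f=\chi_S$ for Borel $S\subset\mathbb{R}$ then gives $E_A(S)B=BE_A(\{r\in\mathbb{R}:\lambda r\in S\})$. If $\lambda\notin\mathbb{R}$, for any real $r\neq 0$ the product $\lambda r$ is non-real, so $\{r\in\mathbb{R}:\lambda r\in S\}=\emptyset$ whenever $0\notin S$; choosing $S=\mathbb{R}\setminus\{0\}$ gives $E_A(\mathbb{R}\setminus\{0\})B=0$, hence $B=E_A(\{0\})B$, so $\mathrm{Range}(B)\subseteq\ker A$ and thus $AB=0$, contradicting the hypothesis. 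Therefore $\lambda\in\mathbb{R}$.

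Given (1), statement (2) is immediate: taking the adjoint of $AB=\lambda BA$ with $A=A^{*}$ and $B=B^{*}$, and using $\overline{\lambda}=\lambda$ from (1), gives $BA=\lambda AB$; substituting back yields $AB=\lambda^2 AB$, which forces $\lambda^2=1$ since $AB\neq 0$. For (3), assume without loss of generality that $B\geq 0$ (the case $A\geq 0$ is fully symmetric). By (2), $\lambda\in\{-1,1\}$; if $\lambda=-1$, then $AB^2=(AB)B=-BAB=-B(AB)=B^2 A$, so $A$ commutes with $B^2$. Since $B\geq 0$, $B=\sqrt{B^2}$ is a continuous Borel function of $B^2$, so $A$ also commutes with $B$; then $AB=BA=-AB$ yields $AB=0$, a contradiction, so $\lambda=1$.

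The main obstacle is the infinite-dimensional content of (1): one must invoke the Borel functional calculus for the normal operator $\lambda A$ carefully, and the conclusion ``$B=E_A(\{0\})B$ implies $AB=0$'' must be verified regardless of whether $0\in\sigma(A)$ happens to be an eigenvalue --- if $E_A(\{0\})=0$ one simply concludes $B=0$, which is an even stronger contradiction to $AB\neq 0$. In finite dimensions the whole argument collapses to inspecting matrix entries $B_{ij}(a_i-\lambda a_j)=0$ in an orthonormal eigenbasis of $A$, where the same dichotomy between $a_i=a_j=0$ and $B_{ij}=0$ immediately yields $\lambda\in\mathbb{R}$.
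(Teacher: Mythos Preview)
The paper does not prove this statement itself; it is quoted as a result from \cite{Brooke_2002}, so there is no in-paper argument to compare against.

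Your proofs of (2) and (3) are correct and efficient. For (1) the strategy is right and the conclusion is valid, but the clause ``by approximation $f(A)B=Bf(\lambda A)$ for every bounded Borel function $f$'' skips a nontrivial point. You only have the intertwining relation for polynomials in $z$, and polynomials in $z$ alone are \emph{not} automatically dense in $C(K)$ for a compact $K\subset\mathbb{C}$; here $K=\sigma(A)\cup\sigma(\lambda A)$ sits on two distinct lines through the origin when $\lambda\notin\mathbb{R}$, so ordinary Weierstrass approximation on $\mathbb{R}$ does not apply. One can close the gap by invoking Lavrentiev's theorem ($K$ has empty interior and connected complement), but the cleanest route is Fuglede--Putnam: writing $AB=B(\lambda A)$ as $MB=BN$ with $M=A$ and $N=\lambda A$ both normal, Fuglede--Putnam yields $M^{*}B=BN^{*}$, i.e.\ $AB=\bar\lambda\,BA$. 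Comparing with $AB=\lambda BA$ gives $(\lambda-\bar\lambda)BA=0$, and $BA\neq 0$ (since $AB=\lambda BA\neq 0$), so $\lambda\in\mathbb{R}$ immediately. This also retroactively justifies your Borel-calculus intertwining as the standard corollary of Fuglede--Putnam, so your argument is ultimately sound; it would just benefit from naming the tool rather than saying ``by approximation''.
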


\begin{corollary}
    Under the assumptions of theorem \ref{thm: measurement},
    \begin{enumerate}
        \item if $\M_{1,m}(\U)$ or $\M_{2,n}(\V)$ (or both) is self-adjoint then $\phi_{mn}(\U,\V) \in \{0,\pi\}$,
        \item if $\M_{1,m}(\U)$ and $\M_{2,n}(\V)$ are self-adjoint and one of them is positive, then $\phi_{mn}(\U,\V) = 0$.
    \end{enumerate}
\end{corollary}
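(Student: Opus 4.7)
The plan is to apply Theorem \ref{thm:comm conditions} directly to the conclusion of Theorem \ref{thm: measurement}. The latter gives $\M_1(\U)\M_2(\V) = e^{i\phi(\U,\V)}\M_2(\V)\M_1(\U)$, which is exactly the hypothesis of Theorem \ref{thm:comm conditions} with $A = \M_1(\U)$, $B = \M_2(\V)$, and $\lambda = e^{i\phi(\U,\V)}$. Since $\lambda$ lies on the unit circle, each of the three statements of Theorem \ref{thm:comm conditions} will immediately pin down $\phi(\U,\V)$ up to the asserted discrete choices.

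First I would dispose of the non-degeneracy hypothesis $AB \neq 0$ required by Theorem \ref{thm:comm conditions}. If $\M_1(\U)\M_2(\V) = 0$, then because $e^{i\phi(\U,\V)} \neq 0$ the anyonic relation forces $\M_2(\V)\M_1(\U) = 0$ as well, so the commutator $\comm{\M_1(\U)}{\M_2(\V)}_{\phi(\U,\V)}$ vanishes for any choice of phase; we may then conventionally set $\phi(\U,\V)=0$ and both conclusions hold. The substantive case is therefore $\M_1(\U)\M_2(\V) \neq 0$, and from here I simply invoke Theorem \ref{thm:comm conditions} with $\lambda = e^{i\phi(\U,\V)}$.

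For part (1), self-adjointness of at least one operator gives $\lambda \in \mathbb{R}$ by Theorem \ref{thm:comm conditions}(1); combining with $|\lambda|=1$ yields $\lambda \in \{-1,+1\}$, i.e. $\phi(\U,\V) \in \{0,\pi\}$. For part (2), the self-adjointness of both operators together with positivity of one of them gives $\lambda = 1$ by Theorem \ref{thm:comm conditions}(3), hence $\phi(\U,\V) = 0$. There is no real obstacle in this argument; the only bookkeeping is the trivial edge case handled above, and the rest is a direct translation between the multiplicative constant $\lambda$ of Theorem \ref{thm:comm conditions} and the phase $\phi(\U,\V)$ of Theorem \ref{thm: measurement}.
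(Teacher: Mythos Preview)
Your proposal is correct and follows exactly the approach the paper intends: the Corollary is stated immediately after Theorem \ref{thm:comm conditions} without an explicit proof, so the paper treats it as a direct translation of that theorem to the setting $A=\M_1(\U)$, $B=\M_2(\V)$, $\lambda=e^{i\phi(\U,\V)}$. Your handling of the degenerate case $\M_1(\U)\M_2(\V)=0$ and the observation that $\lambda\in\mathbb{R}$ together with $|\lambda|=1$ forces $\lambda\in\{-1,1\}$ are the only details the paper leaves implicit, and you have supplied them correctly.
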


For example, $\M_{1,m}(\U)$ and $\M_{2,n}(\V)$ are positive (and thus self-adjoint) if $\M_{1,m}(\U) = \sqrt{\E_{1,m}(\U)}$ and $\M_{2,n}(\V) = \sqrt{\E_{2,n}(\V)}$ where $\E_1$ and $\E_2$ are POVMs, in which case $\comm{\M_{1,m}(\U)}{\M_{2,n}(\V)} = 0$. \\

\sam{Note that the \emph{anyonic} rather than \emph{bosonic} commutativity in general can be surprising (though these are outcome-dependent operator phases at the Kraus level and have little to do with braid statistics), but is actually fully consistent with Einstein causality. Indeed, no-superluminal signalling is ensured as long as the POVMs associated to the measurement operators $\E_k(\U) := \M_k(\U)^\dagger \M_k(\U)$ satisfy a bosonic commutativity over spacelike separations, which is indeed the case despite this anyonic phase, as we now show.}

\subsection{Bosonic commutativity of POVMs} Now that we have shown that, under certain physical assumptions, selective measurements anyonically commute over spacelike-separated regions in a region-dependent fashion, let us highlight that this result propagates to POVMs. 

\begin{theorem}
    \label{thm: Einstein causality POVM}
    Let $(\mathcal{M},g)$ be a time-oriented globally hyperbolic $d$-dimensional spacetime such that there exists a map $\Sigma \mapsto \hi_\Sigma \cong \hi$ for all acausal Cauchy hypersurfaces $\Sigma \subset \mathcal{M}$, $\{\E_{1,m} : \Bor(\mathcal{M}) \to \eh\}_{m \in \mathbb{N}}$ and $\{\E_{2,n} : \Bor(\mathcal{M}) \to \eh\}_{n \in \mathbb{N}}$ be two POVMs. Then for all precompact $\U, \V \subset \mathcal{M}$ such that $\overline{\U} \indep \overline{\V}$ and \sam{for all respective outcomes $m,n$ of the measurements with POVMs $\{\E_{1,m}\}$ and $\{\E_{2,n}\}$,}
    \begin{equation}
        \comm{\E_{1,m}(\U)}{\E_{2,n}(\V)} = 0
    \end{equation}
    provided
    \begin{enumerate}
        \item rays in $\hi_\Sigma$ are foliation-independent for all acausal Cauchy hypersurfaces $\Sigma \subset \mathcal{M}$,
        \item the state update rule of definition \ref{def:mmt update rule} holds, and
        \item the dynamics are given by unitary isomorphisms $U_{\Sigma}^{\Sigma'} : \hi \to \hi$ for every pair of disjoint acausal Cauchy hypersurfaces $\Sigma,\Sigma' \subset \mathcal{M}$,
        \item \sam{the dynamics are integrable}.
    \end{enumerate}
\end{theorem}

\begin{proof}
    Every POVM $\E_i$ has an associated (Kraus) measurement operator $\sqrt{\E_i} = \sqrt{\E_i}^\dagger$. Hence, $\sqrt{\E_{1,m}} = \sqrt{\E_{1,m}}^\dagger$ and $\sqrt{\E_{2,n}} = \sqrt{\E_{2,n}}^\dagger$ are measurements, i.e. by theorem \ref{thm: measurement} we have
    \begin{multline}
        \comm{\sqrt{\E_{1,m}}(\U)}{\sqrt{\E_{2,n}}(\V)}_{\eta_{mn}(\U,\V)} \\ = \comm{\sqrt{\E_{1,m}}(\U)}{\sqrt{\E_{2,n}}(\V)^\dagger}_{\eta_{mn}(\U,\V)} \\ = \comm{\sqrt{\E_{1,m}}(\U)^\dagger}{\sqrt{\E_{2,n}}(\V)}_{\eta_{mn}(\U,\V)}\\  = \comm{\sqrt{\E_{1,m}}(\U)^\dagger}{\sqrt{\E_{2,n}}(\V)^\dagger}_{\eta_{mn}(\U,\V)} = 0
    \end{multline}
    with $\E_{1,m}(\U) := \sqrt{\E_{1,m}}(\U)^\dagger \sqrt{\E_{1,m}}(\U)$ and $\E_{2,n}(\V) := \sqrt{\E_{2,n}}(\V)^\dagger \sqrt{\E_{2,n}}(\V)$. But for all $\ket{\psi} \in \hi$,
    \begin{widetext}
        \begin{align}
        \expval{\E_{1,m}(\U) \E_{2,n}(\V)}{\psi} &= \expval{\sqrt{\E_{1,m}}(\U)^\dagger \sqrt{\E_{1,m}}(\U) \sqrt{\E_{2,n}}(\V)^\dagger \sqrt{\E_{2,n}}(\V)}{\psi} \\
        &= e^{4i\eta_{mn}(\U,\V)} \expval{\sqrt{\E_{2,n}}(\V)^\dagger \sqrt{\E_{2,n}}(\V) \sqrt{\E_{1,m}}(\U)^\dagger \sqrt{\E_{1,m}}(\U) }{\psi} \\
        &= e^{4i\eta_{mn}(\U,\V)}\expval{\E_{2,n}(\V) \E_{1,m}(\U)}{\psi}
    \end{align}
    \end{widetext}
    Defining $\phi_{mn}(\U,\V) := 4 \eta_{mn}(\U,\V) \mod 2\pi$, and since the action of operators are fully determined by their action on pure states, we have
    \begin{equation}
        \comm{\E_{1,m}(\U)}{\E_{2,n}(\V)}_{\phi_{mn}(\U,\V)} = 0 \, .
    \end{equation}
    But since $\E_{1,m}(\U)$ and $\E_{2,n}(\V)$ are positive, this commutation is bosonic by theorem \ref{thm:comm conditions}.
\end{proof}

\section{Quantum no-signalling for non-selective measurements}

\label{sec:non-selective measurements}

\sam{Another important notion to consider in the context of relativistic causality for quantum measurements is that} of non-selective measurements. These involve density operators (possibly mixed states), which in the Tomonaga-Schwinger picture can also be associated to single Cauchy hypersurfaces as $\rho[\Sigma] \in \state$, with dynamics given by $\rho[\Sigma_f] = U_{\Sigma_i}^{\Sigma_f} \rho[\Sigma_i] (U^{\Sigma_f}_{\Sigma_i})^\dagger$. The state update rule for non-selective measurements associated with the state update rule for selective measurements of definition \ref{def:mmt update rule} can be written as follows.

\begin{definition}[State update rule for non-selective measurements in the interaction picture]
    \label{def:mmt update rule non-selective}
    Let $(\mathcal{M},g)$ be a globally hyperbolic spacetime such that there exists a map $\Sigma \mapsto \hi_\Sigma \cong \hi$ for all acausal Cauchy hypersurfaces $\Sigma$. Suppose the dynamics are given by unitary isomorphisms $U_{\Sigma}^{\Sigma'} : \hi \to \hi$ for every disjoint pair of acausal Cauchy hypersurfaces $\Sigma,\Sigma' \subset \mathcal{M}$. Let $\{\M_i : \Bor(\mathcal{M}) \to \bh\}_{i = 1}^{n \in \mathbb{N}}$ be a finite set of measurements, and $\U \subset \mathcal{M}$ be precompact such that $B^+(\U) \cap B^-(\U) = \varnothing$. The state update rule for non-selective measurements in the interaction picture corresponds to the following statement: 
    \begin{quote}
        Let $\Sigma_i$ and $\Sigma_f$ be any disjoint pair of acausal Cauchy hypersurfaces such that $B^-(\U) \subset \Sigma_i$ and $B^+(\U) \subset \Sigma_f$. Let $\tau$ be any Cauchy time function such that $\tau^{-1}(t_i) = \Sigma_i$ and $\tau^{-1}(t_f) = \Sigma_f$. Upon a measurement $\{\M_i\}_{i=1}^n$ in the region $\U$, 
    \begin{multline}
        \rho[\tau^{-1}(t_f')] = \\ U^{\tau^{-1}(t_f')}_{\Sigma_f} \left(\sum_{i=1}^n \M_i(\U) U_{\Sigma_i}^{\Sigma_f} \rho[\Sigma_i] (U_{\Sigma_i}^{\Sigma_f})^\dagger \M_i(\U)^\dagger \right) 
        (U^{\tau^{-1}(t_f')}_{\Sigma_f})^\dagger
    \end{multline}
    where $t_f' > t_f$.
    \end{quote}    
\end{definition}

For simplicity, let us consider a scenario where the measurements are instantaneous. Then, in the spirit of Eqn. \eqref{eqn:instantaneous mmts}, the state update rule for non-selective instantaneous measurements in the interaction picture takes the form: upon a measurement $\{\M_i\}_{i=1}^n$ in the region $\U$,
\begin{multline}
    \label{eqn:instantaneous mmts nonselectve}
    \lim_{\epsilon \to 0^+} \Bigg(\rho[\tau^{-1}(t + \epsilon)] \\ - \sum_{i=1}^n \M_i(\U) U_{\tau^{-1}(t - \epsilon)}^{\tau^{-1}(t + \epsilon)}\rho[\tau^{-1}(t - \epsilon)] (U_{\tau^{-1}(t - \epsilon)}^{\tau^{-1}(t + \epsilon)})^\dagger \M_i(\U)^\dagger \Bigg) = 0
\end{multline}
in the weak-* topology of $\bh$, provided $B^+(\U) = B^{-}(\U) \subset \tau^{-1}(t)$. In the context of non-selective measurements, Einstein causality can be understood as the statement of quantum no-signalling, which takes the following form.

\begin{theorem}
    \label{thm:quantum no-signalling non-selective mmts}
    Let $(\mathcal{M},g)$ be a time-oriented globally hyperbolic $d$-dimensional spacetime such that there exists a map $\Sigma \mapsto \hi_\Sigma \cong \hi$ for all acausal Cauchy hypersurfaces $\Sigma \subset \mathcal{M}$. Let $\U, \V \subset \mathcal{M}$ are two precompact regions with $\overline{\U} \indep \overline{\V}$ and $B^+(\U) = B^-(\U)$ and $B^+(\V) = B^-(\V)$, $\{\M_i : \Bor(\mathcal{M}) \to \bh\}_{i=1}^{n \in \mathbb{N}}$ and $\{\M'_j : \Bor(\mathcal{M}) \to \bh\}_{j=1}^{m \in \mathbb{N}}$ be two sets of measurements, $\E_k'(\V):= \M'_k(\V)^\dagger \M_k'(\V)$ for $k \in \{1, \cdots, m\}$ be POVMs for the latter. Let $\S_1$ be a foliation of $\mathcal{M}$ into acausal Cauchy hypersurfaces parametrised by a Cauchy time function $\tau$. Suppose 
\begin{enumerate}
    \item $\U \cup \V \subset \Sigma \equiv \tau^{-1}(0)$,  
    \item $\sum_{i=1}^n \M_i(\U)^\dagger \M_i(\U) = \mathbb{1}_{\bh}$,
    \item the state update rules for selective measurements (definition \ref{def:mmt update rule}) and for non-selective measurements (definition \ref{def:mmt update rule non-selective}, in particular the instantaneous version Eqn. \eqref{eqn:instantaneous mmts nonselectve}) hold,
    \item the dynamics are given by unitary isomorphisms $U_{\Sigma}^{\Sigma'} : \hi \to \hi$ for every pair of disjoint acausal Cauchy hypersurfaces $\Sigma,\Sigma' \subset \mathcal{M}$,
    \item \sam{the dynamics are integrable,}
    \item states in $\mathscr{D}(\hi_\Sigma)$ are foliation-independent for all acausal Cauchy hypersurfaces $\Sigma \subset \mathcal{M}$.
\end{enumerate}
Then for any $k \in \{1,\cdots,m\}$
    \begin{multline}
        \lim_{\epsilon \to 0} \Bigg(\Tr[\E'_k(\V) \rho_\U[\Sigma_\epsilon]] \\ - \Tr[\E_k'(\V) U_{\Sigma_{-\epsilon}}^{\Sigma_\epsilon} \rho[\Sigma_{-\epsilon}] (U_{\Sigma_{-\epsilon}}^{\Sigma_\epsilon})^\dagger]\Bigg) = 0
    \end{multline}
    where $\Sigma_\epsilon \equiv \tau^{-1}(\epsilon)$ and \begin{equation*}
        \lim_{\epsilon \to 0} \left(\rho_\U[\Sigma_\epsilon] - \sum_{i=1}^n\M_i(\U) U_{\Sigma_{-\epsilon}}^{\Sigma_\epsilon} \rho[\Sigma_{-\epsilon}] (U_{\Sigma_{-\epsilon}}^{\Sigma_\epsilon})^\dagger \M_i(\U)^\dagger \right) = 0
    \end{equation*}
    in the weak-* topology of $\bh$.
\end{theorem}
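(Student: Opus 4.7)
My approach is to reduce the no-signalling statement, via the cyclicity of the trace, to the single operator identity $\sum_{i=1}^{n} \M_i(\U)^{\dagger}\, \E_k'(\V)\, \M_i(\U) = \E_k'(\V)$, and then to derive that identity from the anyonic commutation of measurements established in Theorem~\ref{thm: measurement} together with the completeness condition $\sum_{i=1}^n \E_i(\U) = \mathbb{1}_{\bh}$.

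First I would set $\tilde{\rho}_\epsilon := U_{\Sigma_{-\epsilon}}^{\Sigma_\epsilon}\, \rho[\Sigma_{-\epsilon}]\, (U_{\Sigma_{-\epsilon}}^{\Sigma_\epsilon})^{\dagger} \in \state$, namely the unitarily-evolved state on $\Sigma_\epsilon$ in the counterfactual where no measurement in $\U$ is performed. The instantaneous non-selective update rule of Eqn.~\eqref{eqn:instantaneous mmts nonselectve} gives $\rho_\U[\Sigma_\epsilon] - \sum_{i=1}^{n} \M_i(\U)\,\tilde{\rho}_\epsilon\, \M_i(\U)^{\dagger} \to 0$ in the weak-$*$ topology as $\epsilon \to 0$; pairing this difference against the bounded operator $\E_k'(\V)$ and applying the cyclic property of the trace then yields
\begin{equation*}
\lim_{\epsilon \to 0}\Tr\!\left[\E_k'(\V)\, \rho_\U[\Sigma_\epsilon]\right] = \lim_{\epsilon \to 0}\Tr\!\left[\biggl(\sum_{i=1}^{n} \M_i(\U)^{\dagger}\, \E_k'(\V)\, \M_i(\U)\biggr)\tilde{\rho}_\epsilon\right] \,.
\end{equation*}
Thus the theorem reduces to establishing the operator identity displayed above.

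For that identity, I would apply Theorem~\ref{thm: measurement} to each pair $(\M_i, \M_k')$ over the spacelike-separated regions $(\U,\V)$: there is a real phase $\phi_{i,k} \in [0,2\pi)$ with $\M_i(\U)\,\M_k'(\V) = e^{i\phi_{i,k}}\,\M_k'(\V)\,\M_i(\U)$, and Hermitian conjugation together with $\phi_{i,k}\in\mathbb{R}$ gives $\M_i(\U)^{\dagger}\,\M_k'(\V)^{\dagger} = e^{i\phi_{i,k}}\,\M_k'(\V)^{\dagger}\,\M_i(\U)^{\dagger}$. Substituting these two relations successively into $\M_i(\U)^{\dagger}\,\M_k'(\V)^{\dagger}\,\M_k'(\V)\,\M_i(\U)$, the factors $e^{i\phi_{i,k}}$ and $e^{-i\phi_{i,k}}$ cancel and I obtain the symmetric identity
\begin{equation*}
\M_i(\U)^{\dagger}\, \E_k'(\V)\, \M_i(\U) = \M_k'(\V)^{\dagger}\, \E_i(\U)\, \M_k'(\V) \,.
\end{equation*}
Summing over $i$ and invoking $\sum_i \E_i(\U) = \mathbb{1}_{\bh}$ immediately gives $\sum_i \M_i(\U)^{\dagger}\, \E_k'(\V)\, \M_i(\U) = \M_k'(\V)^{\dagger}\M_k'(\V) = \E_k'(\V)$, closing the argument.

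The main subtlety here is not the algebra -- a three-line manipulation once the anyonic commutation is in hand -- but the observation that the bosonic POVM commutativity $[\E_i(\U),\E_k'(\V)] = 0$ afforded by Theorem~\ref{thm: Einstein causality POVM} is, by itself, strictly weaker than what is needed: one cannot directly move $\E_k'(\V)$ past a single non-Hermitian Kraus operator $\M_i(\U)$, so a naive appeal to POVM commutativity fails. It is precisely the cancellation of the phases $e^{\pm i \phi_{i,k}}$ at the level of the underlying measurement operators that lets conjugation by $\M_i(\U)$ transfer the effect from $\V$ to $\U$, after which the completeness relation $\sum_i \E_i(\U) = \mathbb{1}_{\bh}$ takes over. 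The $\epsilon \to 0$ limit and the unitary dynamics are essentially decorative: the unitaries are absorbed into $\tilde{\rho}_\epsilon$, which remains a bona fide state on $\hi$, and weak-$*$ continuity of the map $\sigma \mapsto \Tr[\E_k'(\V)\,\sigma]$ handles the passage to the limit.
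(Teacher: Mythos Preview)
Your proposal is correct and follows essentially the same route as the paper: expand $\rho_\U[\Sigma_\epsilon]$ via the non-selective rule, use cyclicity of the trace, apply the anyonic commutation of Theorem~\ref{thm: measurement} once to $\M_k'(\V)\M_i(\U)$ and once (via the adjoint) to $\M_i(\U)^\dagger\M_k'(\V)^\dagger$ so that the phases $e^{\pm i\phi_{i,k}}$ cancel, and then invoke $\sum_i \M_i(\U)^\dagger\M_i(\U)=\mathbb{1}_{\bh}$. The only cosmetic difference is that you isolate the clean operator identity $\M_i(\U)^\dagger\,\E_k'(\V)\,\M_i(\U)=\M_k'(\V)^\dagger\,\E_i(\U)\,\M_k'(\V)$ before summing, whereas the paper performs the same two swaps directly inside the trace.
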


Of course, the statement is symmetric in $\U \leftrightarrow \V$. 

\begin{comment}
    This setup is shown in Fig.~\ref{fig:non-selective measurements}.

\begin{figure*}
\begin{tikzpicture}[x=0.75pt,y=0.75pt,yscale=-1,xscale=1]
%uncomment if require: \path (0,300); %set diagram left start at 0, and has height of 300

%Straight Lines [id:da7092806155533722] 
\draw [line width=6]    (114,179) -- (219.36,179.1) ;
%Straight Lines [id:da7871759724834768] 
\draw [line width=6]    (484,127) -- (589.36,127.1) ;
%Curve Lines [id:da2478984297923421] 
\draw    (1.36,181.1) .. controls (55.36,175.1) and (316.36,202.1) .. (379.36,160.1) .. controls (442.36,118.1) and (655.36,129.1) .. (700.36,142.1) ;
%Curve Lines [id:da26982884926922135] 
\draw    (0,177.1) .. controls (54,171.1) and (315,198.1) .. (378,156.1) .. controls (441,114.1) and (654,125.1) .. (699,138.1) ;
%Curve Lines [id:da9858343670967253] 
\draw    (0,173.1) .. controls (54,167.1) and (315,194.1) .. (378,152.1) .. controls (441,110.1) and (654,121.1) .. (699,134.1) ;

% Text Node
\draw (124,146.4) node [anchor=north west][inner sep=0.75pt]    {$\{\M_{i}(\U)\}_{i=1}^{n}$};
% Text Node
\draw (495,92.4) node [anchor=north west][inner sep=0.75pt]    {$\{\M_{j}( \V)\}_{j=1}^{m}$};
\end{tikzpicture}
    \caption{Quantum no-signalling for non-selective instantaneous measurements. Whether or not measurements are undertaken in a spacelike-separated acausal region $\U$, the outcome probabilities of measurements in $\V$ are untouched.}
    \label{fig:non-selective measurements}
\end{figure*}
\end{comment}

\begin{proof}
    For all $k \in \{1,\cdots,m\}$, we have
    \begin{widetext}
        \begin{align}
        \lim_{\epsilon \to 0} \Tr[\E'_k(\V) \rho_\U[\Sigma_\epsilon]] \nonumber &= \lim_{\epsilon \to 0}\sum_{i=1}^n \Tr[\M_k'(\V)^\dagger \M_k'(\V) \M_i(\U) U_{\Sigma_{-\epsilon}}^{\Sigma_\epsilon} \rho[\Sigma_{-\epsilon}] (U_{\Sigma_{-\epsilon}}^{\Sigma_\epsilon})^\dagger \M_i(\U)^\dagger] \\
        &= \lim_{\epsilon \to 0}\sum_{i=1}^n \Tr[\M_i(\U)^\dagger \M_k'(\V)^\dagger \M_k'(\V) \M_i(\U) U_{\Sigma_{-\epsilon}}^{\Sigma_\epsilon} \rho[\Sigma_{-\epsilon}] (U_{\Sigma_{-\epsilon}}^{\Sigma_\epsilon})^\dagger] \\
        &\stackrel{\ref{thm: measurement}}{=} \lim_{\epsilon \to 0}\sum_{i=1}^n e^{-i \phi_{ik}(\U,\V)} \Tr[\M_i(\U)^\dagger \M_k'(\V)^\dagger \M_i(\U) \M_k'(\V) U_{\Sigma_{-\epsilon}}^{\Sigma_\epsilon} \rho[\Sigma_{-\epsilon}] (U_{\Sigma_{-\epsilon}}^{\Sigma_\epsilon})^\dagger] \\
        &\stackrel{\ref{thm: measurement}}{=} \lim_{\epsilon \to 0}\sum_{i=1}^n \Tr[\M_k'(\V)^\dagger \M_i(\U)^\dagger \M_i(\U) \M_k'(\V) U_{\Sigma_{-\epsilon}}^{\Sigma_\epsilon} \rho[\Sigma_{-\epsilon}] (U_{\Sigma_{-\epsilon}}^{\Sigma_\epsilon})^\dagger] \\
        &= \lim_{\epsilon \to 0}\Tr\Bigg[\M_k'(\V)^\dagger \underbrace{\Big(\sum_{i=1}^n \M_i(\U)^\dagger \M_i(\U) \Big)}_{=\mathbb{1}_{\bh}} \M_k'(\V) U_{\Sigma_{-\epsilon}}^{\Sigma_\epsilon} \rho[\Sigma_{-\epsilon}] (U_{\Sigma_{-\epsilon}}^{\Sigma_\epsilon})^\dagger]\Bigg] \\
        &= \lim_{\epsilon \to 0}\Tr[\M_k'(\V)^\dagger \M_k'(\V) U_{\Sigma_{-\epsilon}}^{\Sigma_\epsilon} \rho[\Sigma_{-\epsilon}] (U_{\Sigma_{-\epsilon}}^{\Sigma_\epsilon})^\dagger]
    \end{align}
    \end{widetext}
    which concludes the proof.
\end{proof}

That is, the probabilities of measurements at $\V$ are insensitive to the existence of non-selective measurements undertaken over spacelike-separated regions. The analysis for when the measurements are not instantaneous is more involved, and it is an open question to see whether one still recovers quantum no-signalling of this form in these more realistic cases.

\section{Sorkin's impossible measurements}

\label{sec:impossible}

Having discussed the behaviour of measurements undertaken over two spacelike-separated regions, let us now examine the case of problematic measurements undertaken over causally separated regions. Sorkin's impossible measurements \cite{Sorkin1993} is a central topic in the study of measurements in quantum field theory \cite{fewster2024}. Grossly, the argument is as follows: if $\U,\V,\W \subset \mathcal{M}$ are such that $\U \indep \V$ but $\W \in J^+(\U)$ and $\W \in J^-(\V)$, then the outcome statistics obtained by measuring some observable $C$ in $\V$ may then depend on whether or not an observable $A$ in $\U$ was measured, even though $\U$ and $\V$ are spacelike-separated. This seems problematic since, in Sorkin's words, \enquote{By arranging beforehand that $B$ will certainly be measured [in $\W$], someone at [$\U$] could clearly use this dependence of $C$ on $A$ to transmit information “superluminally” to a friend at [$\V$]}. This setup is shown in Fig. \ref{fig:Sorkin}. \\

\begin{figure*}
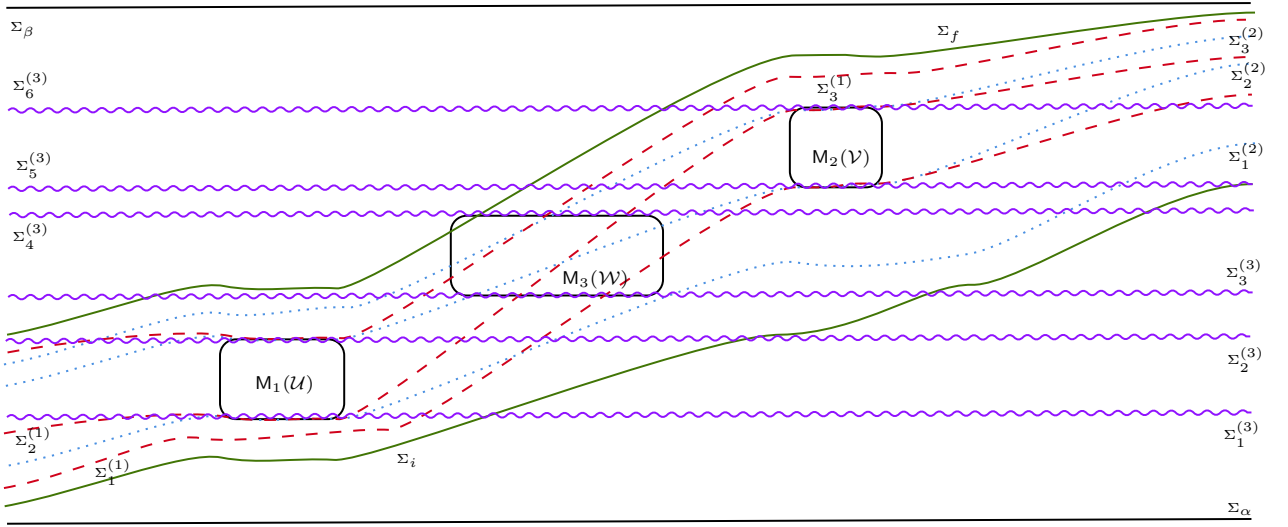

    \centering
    % [inline block 0: 1 envs, 61209 chars -> data_tex | \begin{tikzpicture}[x=0.75pt,y=0.75pt,yscale=-0.9,xscale=0.9] \draw   (120.67,201.28) .. controls (120.67,196.34) and (1...]

    \caption{Setup of Sorkin's impossible measurements. We consider three foliations $\S_1$, $\S_2$ and $\S_3$ of the globally hyperbolic spacetime $(\mathcal{M},g)$, all of which contain the acausal Cauchy hypersurfaces $\Sigma_\alpha$ and $\Sigma_\beta$ (black). $\S_1$ and $\S_2$ both contain $\Sigma_f$ and $\Sigma_i$ (green). Foliation $\S_1$ contains $\Sigma_1^{(1)}$, $\Sigma_2^{(1)}$ and $\Sigma_3^{(1)}$ (red dashed lines). Foliation $\S_2$ contains $\Sigma_1^{(2)}$, $\Sigma_2^{(2)}$ and $\Sigma_3^{(2)}$ (blue dotted lines). Foliation $\S_3$ contains $\Sigma_1^{(3)}$, $\Sigma_2^{(3)}$, $\Sigma_3^{(3)}$, $\Sigma_4^{(3)}$, $\Sigma_5^{(3)}$, $\Sigma_6^{(3)}$ (purple curved lines). In $\S_1$, the state first undergoes the measurement $\{\M_{2,n}(\V)\}$ before the measurement $\{\M_{1,m}(\U)\}$, while in $\S_2$ the state first undergoes the measurement $\{\M_{1,m}(\U)\}$ before the measurement $\{\M_{2,n}(\V)\}$. In $\S_3$, the state first undergoes the measurement $\{\M_{1,m}(\U)\}$, then $\{\M_{3,o}(\W)\}$, then $\{\M_{2,n}(\V)\}$. Note that $\Sigma_f \cap (J^-(B^+(\W)) \smallsetminus B^+(\W)) \neq \varnothing$, i.e. the collapse induced by the measurement of $\{\M_{3,o}(\W)\}$ has not happened on $\Sigma_f$ in either $\S_1$ or $\S_2$.}
    \label{fig:Sorkin}
\end{figure*}

Let us take the statement of the state update rule of definition \ref{def:mmt update rule} seriously. As discussed in section \ref{sec:tomonaga-schwinger mmts}, in a given foliation if an acausal Cauchy hypersurface $\Sigma_f$ has not yet \enquote{fully reached} $B^+(\W)$ (i.e. $\Sigma \cap (J^-(B^+(\W)) \smallsetminus B^+(\W)) \neq \varnothing$), the state has not yet been updated by $\{\M_{3,o}(\W)\}$. Einstein causality of $\{\M_{1,m}(\U)\}$ and $\{\M_{2,n}(\V)\}$ (and their associated POVMs) thus follows as before as long as one can find such a common $\Sigma_f$ in the causal future of both $\U$ and $\V$, contained both in a foliation $\S_1$ in which $\V$ happens before $\U$, and in a foliation $\S_2$ in which $\U$ happens before $\V$. Suppose indeed that both $\U$ and $\V$ are not acausal (in a strong sense where $B^+(\U) \cap B^-(\U) = \varnothing$ and $B^+(\V) \cap B^-(\V) = \varnothing$) \footnote{If parts of one (or both) of these regions is acausal (i.e. there is an instantaneous measurement) then the exact same argument can be made, but the details are more cumbersome to write here.} and are spacelike separated, then \sam{given outcomes $m$ and $n$ for the measurements in $\U$ and $\V$ respectively, it follows that} in $\S_1$ the state on $\Sigma_f$ would be $\ket{\psi[\Sigma_f]} = U_{\Sigma_3^{(1)}}^{\Sigma_f} \M_{1,m}(\U) U_{\Sigma_2^{(1)}}^{\Sigma_3^{(1)}}\M_{2,n}(\V) U_{\Sigma_i}^{\Sigma_2^{(1)}}\ket{\psi[\Sigma_i]}$ while in $\S_2$ it would be $\ket{\psi[\Sigma_f]} = U_{\Sigma_3^{(2)}}^{\Sigma_f} \M_{2,n}(\V) U_{\Sigma_2^{(2)}}^{\Sigma_3^{(2)}} \M_{1,m}(\U) U_{\Sigma_i}^{\Sigma_2^{(2)}}\ket{\psi[\Sigma_i]}$. Turning off the dynamics as before yields Einstein causality of $\{\M_{1,m}(\U)\}$ and $\{\M_{2,n}(\V)\}$. \\

This does not imply that $\M_{1,m}(\U)$ (nor $\M_{2,n}(\V)$) anyonically commutes with any $\M_{3,o}(\W)$ for causally separated $\W$. In fact, \sam{given outcome $o$ in $\W$, it follows that} in a foliation $\S_3$, the state on an acausal Cauchy hypersurface $\Sigma_\beta$, which belongs to all three foliations, in the causal future of all three regions, is $\ket{\psi[\Sigma_\beta]} \propto  U_{\Sigma_6^{(3)}}^{\Sigma_\beta} \M_{2,n}(\V) U_{\Sigma_4^{(3)}}^{\Sigma_5^{(3)}} \M_{3,o}(\W) U_{\Sigma_2^{(3)}}^{\Sigma_3^{(3)}} \M_{1,m}(\U) U_{\Sigma_\alpha}^{\Sigma_2^{(3)}}\ket{\psi[\Sigma_\alpha]}$ where $\Sigma_\alpha$ is an acausal Cauchy hypersurface which belongs to all three foliation in the causal past of all three regions. In all foliations, this state (as a ray) will be the same; turning off the dynamics yields
\begin{multline}
    \M_{2,n}(\V) \M_{3,o}(\W) \M_{1,m}(\U) \propto \M_{3,o}(\W) \M_{2,n}(\V) \M_{1,m}(\U)  \\ \propto \M_{3,o}(\W) \M_{1,m}(\U) \M_{2,n}(\V) \, .
\end{multline}
Two cases now arise.
\begin{enumerate}
    \item $\M_{1,m}(\U)$ is not invertible: in that case, one cannot conclude that $\M_{3,o}(\W) \propto \mathbb{1}_{\bh}$ for any of the operator equalities on $\Sigma_\beta$.
    \item $\M_{1,m}(\U)$ is invertible: in that case, we have that $\comm{\M_{3,o}(\W)}{\M_{2,n}(\V)}_{\phi_{mno}(\U,\V,\W)} = 0$ for some $\phi_{mno}(\U,\V,\W) \in [0,2\pi)$.
\end{enumerate}

The latter case is problematic: one can come up with a setup without $\M_{1,m}(\U)$ for which $\W$ and $\V$ are causally separated, which should \emph{not} generally imply that $\M_{3,o}(\W)$ and $\M_{2,n}(\V)$ anyonically commute. The introduction of another measurement (e.g. which happens to be unitary in $\U$) in the past of $\W$ (but not of $\V$) should not make $\M_{3,o}(\W)$ and $\M_{2,n}(\V)$ suddenly anyonically commute. One way out of this conundrum is to assume that measurements which cause state collapse are \emph{fundamentally} irreversible. For such measurements, Sorkin's paradox is avoided and this state update rule appears to remain consistent. It would be interesting to see whether the case of invertible measurements can also be cured, or whether this highlights something deeper, either about the irreversible nature of quantum measurements or about the limitations of this state update rule.

\section{Einstein causality beyond the interaction picture}

\label{sec: beyond interaction pic}

One step in expanding this framework is to consider measurements beyond the interaction picture, for which the Hilbert space associated to every acausal Cauchy hypersurface need not be unitarily equivalent to that on any other Cauchy slice. 

\begin{definition}[State update rule beyond the interaction picture]
    \label{def:mmt update rule beyond int pic}
    Let $(\mathcal{M},g)$ be a globally hyperbolic spacetime such that there exists a map $\Sigma \mapsto \hi_\Sigma$ for all acausal Cauchy hypersurfaces $\Sigma$. Suppose the dynamics are given by unitary isomorphisms $U_{\Sigma}^{\Sigma'} : \hi_\Sigma \to \hi_{\Sigma'}$ for every disjoint pair of acausal Cauchy hypersurfaces $\Sigma,\Sigma' \subset \mathcal{M}$. Let $\{\M_k : \Bor(\mathcal{M}) \to \bh\}_{k \in \mathbb{N}}$ be a measurement, and $\U \subset \mathcal{M}$ be precompact such that $B^+(\U) \cap B^-(\U) = \varnothing$. The state update rule in the interaction picture corresponds to the following statement: 
    \begin{quote}
        Let $\Sigma_i$ and $\Sigma_f$ be any disjoint pair of acausal Cauchy hypersurfaces such that $B^-(\U) \subset \Sigma_i$ and $B^+(\U) \subset \Sigma_f$. Let $\tau$ be any Cauchy time function such that $\tau^{-1}(t_i) = \Sigma_i$ and $\tau^{-1}(t_f) = \Sigma_f$. Upon a measurement $\{\M_k\}$ in the region $\U$ \sam{with outcome $j$,} the state updates as 
    \begin{equation}
        \ket{\psi[\tau^{-1}(t_f')]} = \frac{U^{\tau^{-1}(t_f')}_{\Sigma_f}\M_j(\U) U_{\Sigma_i}^{\Sigma_f} \ket{\psi[\Sigma_i]}}{\norm{\M_j(\U) U_{\Sigma_i}^{\Sigma_f} \ket{\psi[\Sigma_i]}}}
    \end{equation}
    where $t_f' > t_f$, \sam{with probability}
    \begin{equation}
        p_\U(j) := \norm{\M_j(\U) U_{\Sigma_i}^{\Sigma_f} \ket{\psi[\Sigma_i]}}^2 \, .
    \end{equation}
    \end{quote}    
\end{definition}

The above again links to some of the instantaneous measurement rules introduced in the literature (e.g. \cite{lienert_borns_2020,lill_another_2022,reddiger_towards_2025}), for which the \enquote{curved Born rule} applies for ideal detectors along arbitrary curved Cauchy hypersurfaces. The exact same argument as in section \ref{sec:Eins caus} can thus be made in this context. \\

\begin{theorem}
    Let $(\mathcal{M},g)$ be a time-oriented globally hyperbolic $d$-dimensional spacetime such that there exists a map $\Sigma \mapsto \hi_\Sigma$ for all acausal Cauchy hypersurfaces $\Sigma \subset \mathcal{M}$. Let
\begin{enumerate}
    \item $\{\M_{1,m}^{\Sigma} : \Bor(\mathcal{M}) \to \mathcal{B}(\hi_\Sigma)\}_{m \in \mathbb{N}}$ and $\{\M^{\Sigma}_{2,n} : \Bor(\mathcal{M}) \to \mathcal{B}(\hi_\Sigma)\}_{n \in \mathbb{N}}$ be two measurements,
    \item the dynamics are given by unitary isomorphisms $U_{\Sigma}^{\Sigma'} : \hi_\Sigma \to \hi_{\Sigma'}$ for every pair of disjoint acausal Cauchy hypersurfaces $\Sigma,\Sigma' \subset \mathcal{M}$,
    \item $\hi_{\Sigma_2^{21}} \cong \hi_{\Sigma_3^{12}}$ and $\hi_{\Sigma_3^{21} }\cong \hi_{\Sigma_{2}^{12}}$\footnote{This assumption is physically relevant if one considers that for an acausal compact region $\U$ contained in any two Cauchy hypersurfaces $\Sigma_1$ and $\Sigma_2$, the measurement operators $\M^{\Sigma_1}(\U)$ and $\M^{\Sigma_2}(\U)$ in the same region $\U$ have the same codomain.},
    \item $\U, \V \subset \mathcal{M}$ are precompact such that $\overline{\U} \indep \overline{\V}$. For conciseness, take $B^+(\U) \cap B^-(\U) = \varnothing$ and $B^+(\V) \cap B^-(\V) = \varnothing$,
    \item $\Sigma_i, \Sigma_1^{21},\Sigma_1^{12},\Sigma_2^{21},\Sigma_2^{12},\Sigma_3^{12},\Sigma_3^{21},\Sigma_f$ be disjoint acausal Cauchy hypersurfaces such that
    \begin{itemize}
        \item $\Sigma_i \subset I^-(\Sigma_1^{12}) \cap I^-(\Sigma_1^{21})$, $\Sigma_f \subset I^+(\Sigma_3^{12}) \cap I^+(\Sigma_3^{21})$,
        \item $\Sigma_2^{12} \subset I^+(\Sigma_1^{12})$, $\Sigma_2^{21} \subset I^+(\Sigma_1^{21})$, $\Sigma_3^{12} \subset I^+(\Sigma_2^{12})$, $\Sigma_3^{21} \subset I^+(\Sigma_2^{21})$,
        \item $B^-(\U) \subset \Sigma_1^{12}$ and $B^-(\U) \subset \Sigma_2^{21}$ and $B^+(\U) \subset \Sigma_2^{12}$,
        \item $B^-(\V) \subset \Sigma_1^{21}$ and $B^-(\V) \subset \Sigma_2^{12}$ and $B^+(\V) \subset \Sigma_2^{21}$ and $B^+(\V) \subset \Sigma_3^{12}$.
    \end{itemize}
\end{enumerate}
If rays in $\hi_\Sigma$ are foliation-independent for all acausal Cauchy hypersurfaces $\Sigma \subset \mathcal{M}$, \sam{the dynamics are integrable} and the state update rule of definition \ref{def:mmt update rule beyond int pic} holds, then for all outcomes $m, n$ of the respective $\{\M^{\Sigma}_{1,m}\}$ and $\{\M^\Sigma_{2,n}\}$ measurements,
\begin{multline}
    U^{\Sigma_f}_{\Sigma_3^{21}} \M_{1,m}^{\Sigma_3^{21}}(\U) U_{\Sigma_{2}^{21}}^{\Sigma_{3}^{21}} \M^{\Sigma_2^{21}}_{2,n}(\V) U^{\Sigma_2^{21}}_{\Sigma_i}\\ = e^{i\phi_{mn}(U,\Sigma_i,\Sigma_f;\U,\V)} U^{\Sigma_f}_{\Sigma_3^{12}}  \M^{\Sigma_3^{12}}_{2,n}(\V) U_{\Sigma_{2}^{12}}^{\Sigma_{3}^{12}} \M^{\Sigma_2^{12}}_{1,m}(\U) U^{\Sigma_2^{12}}_{\Sigma_i}
\end{multline}
for some $\phi_{mn}(U,\Sigma_i,\Sigma_f;\U,\V) \in [0,2\pi)$.
\end{theorem}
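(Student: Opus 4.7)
The plan is to mimic the structure of the proof of Theorem~\ref{thm: measurement}, but keeping track of the now non-trivial unitary isomorphisms relating the various Hilbert spaces. Two implicit foliations $\mathcal{S}_{12}$ and $\mathcal{S}_{21}$ of $(\mathcal{M},g)$ are in play: $\mathcal{S}_{12}$ contains $\Sigma_i, \Sigma_1^{12}, \Sigma_2^{12}, \Sigma_3^{12}, \Sigma_f$ and encodes a history in which $B^+(\U)$ is crossed before $B^+(\V)$, while $\mathcal{S}_{21}$ contains $\Sigma_i, \Sigma_1^{21}, \Sigma_2^{21}, \Sigma_3^{21}, \Sigma_f$ and encodes the reverse order. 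Both foliations begin at $\Sigma_i$ and end at $\Sigma_f$. The compatibility hypotheses $\hi_{\Sigma_2^{21}} \cong \hi_{\Sigma_3^{12}}$ and $\hi_{\Sigma_3^{21}} \cong \hi_{\Sigma_2^{12}}$ ensure that the measurement operators associated with the same region but written on different Cauchy hypersurfaces can be consistently identified as the same physical operator, so that the two candidate evolved states below are well-defined elements of the same Hilbert space $\hi_{\Sigma_f}$.

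Starting from any $\ket{\psi[\Sigma_i]} \in \hi_{\Sigma_i}$, iterated application of the state update rule of Definition~\ref{def:mmt update rule beyond int pic} along $\mathcal{S}_{12}$ yields
\begin{equation}
\ket{\psi[\Sigma_f]}_{12} = c_{12}^{\psi} \, U^{\Sigma_f}_{\Sigma_3^{12}} \M_2^{\Sigma_3^{12}}(\V) \, U_{\Sigma_2^{12}}^{\Sigma_3^{12}} \, \M_1^{\Sigma_2^{12}}(\U) \, U_{\Sigma_i}^{\Sigma_2^{12}} \ket{\psi[\Sigma_i]} ,
\end{equation}
and analogously along $\mathcal{S}_{21}$,
\begin{equation}
\ket{\psi[\Sigma_f]}_{21} = c_{21}^{\psi} \, U^{\Sigma_f}_{\Sigma_3^{21}} \M_1^{\Sigma_3^{21}}(\U) \, U_{\Sigma_2^{21}}^{\Sigma_3^{21}} \, \M_2^{\Sigma_2^{21}}(\V) \, U_{\Sigma_i}^{\Sigma_2^{21}} \ket{\psi[\Sigma_i]} ,
\end{equation}
where $c_{12}^\psi, c_{21}^\psi > 0$ are normalisation constants.

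Foliation independence of the ray $[\psi[\Sigma_f]]$ then forces $\ket{\psi[\Sigma_f]}_{12} = e^{i\phi_\psi} \ket{\psi[\Sigma_f]}_{21}$ for some phase $\phi_\psi$, pointwise in $\ket{\psi[\Sigma_i]}$. To promote this pointwise identity into a genuine operator equality with a single state-independent phase, I would replicate verbatim the linear-algebra argument of Theorem~\ref{thm: measurement}: decompose a generic $\ket{\psi[\Sigma_i]} = a\ket{\alpha[\Sigma_i]} + b\ket{\beta[\Sigma_i]}$ into two linearly independent vectors lying outside the kernels of the relevant composites, match coefficients in $\hi_{\Sigma_f}$, and exploit positivity of the normalisation ratios to force $c_{12}^\psi / c_{21}^\psi$ and $\phi_\psi$ to coincide on $\ket{\psi}$, $\ket{\alpha}$, $\ket{\beta}$. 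This yields a single phase $\phi(U, \Sigma_i, \Sigma_f; \U, \V) \in [0, 2\pi)$; since bounded operators on $\hi_{\Sigma_i}$ are determined by their action on pure states, the identity lifts to the claimed operator equality.

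The main obstacle, relative to Theorem~\ref{thm: measurement}, is that the \emph{turn off the dynamics} trick is genuinely unavailable here: the operators $\M_1^{\Sigma_2^{12}}(\U)$ and $\M_1^{\Sigma_3^{21}}(\U)$ live a priori on distinct Hilbert spaces, and the intertwining unitaries $U$ are precisely what allow their comparison in the first place. Consequently, the conclusion cannot be factored as a clean anyonic commutation relation between bare measurement operators, but only as the equality — up to a global phase — of the two full dynamical composites. Beyond this structural point, the argument is a direct transcription of the interaction-picture proof once the codomain-compatibility isomorphisms of hypothesis~(3) are invoked at every composition.
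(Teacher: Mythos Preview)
Your proposal is correct and matches the paper's approach: the paper does not give an explicit proof of this theorem but simply remarks that ``the exact same argument as in section~\ref{sec:Eins caus} can thus be made in this context,'' which is precisely the transcription you outline. Your observation that the \emph{turn off the dynamics} trick is unavailable here, and that this is why the conclusion is only an equality of full dynamical composites rather than a bare commutator, is exactly the point the paper makes in the discussion immediately following the statement.
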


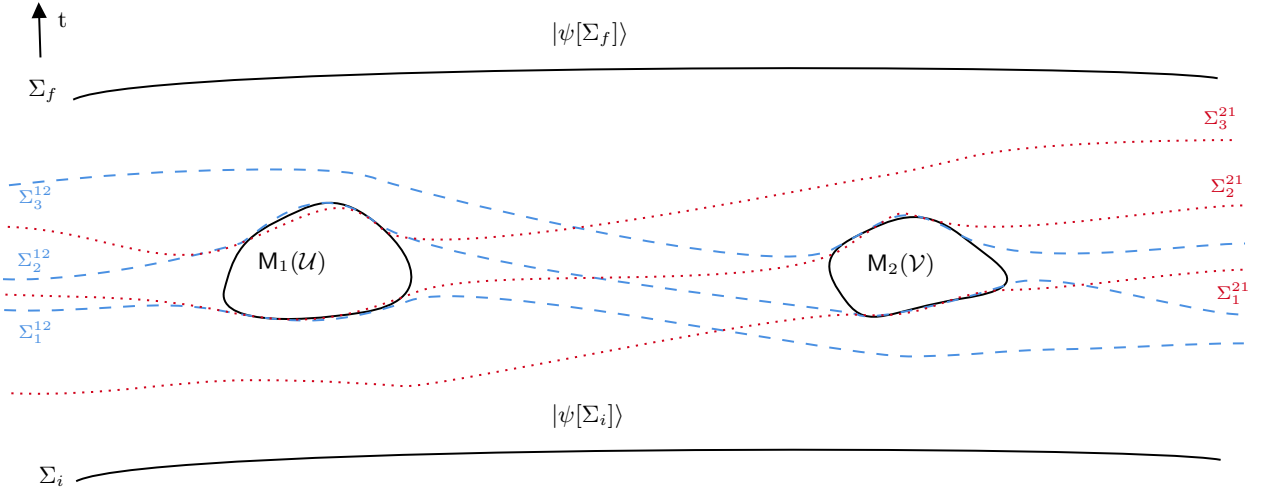
\begin{figure*}[t!]
    \centering
    \begin{tikzpicture}[x=0.75pt,y=0.75pt,yscale=-0.9,xscale=0.9] 
\draw    (39.67,67) .. controls (82.33,45.67) and (633,46.33) .. (681,55) ;
%Curve Lines [id:da40036179498203794] 
\draw    (41.33,281) .. controls (84,259.67) and (634.67,260.33) .. (682.67,269) ;
%Shape: Polygon Curved [id:ds4359889563703383] 
\draw   (153,135.67) .. controls (173,125.67) and (185.67,117.67) .. (208,135) .. controls (230.33,152.33) and (244,179.67) .. (201.67,186.33) .. controls (159.33,193) and (119.67,191.67) .. (124.33,173.67) .. controls (129,155.67) and (133,145.67) .. (153,135.67) -- cycle ;
%Shape: Polygon Curved [id:ds05241328816104551] 
\draw   (478.33,143.67) .. controls (498.33,133.67) and (514.67,125.67) .. (537,143) .. controls (559.33,160.33) and (581.33,171.67) .. (539,178.33) .. controls (496.67,185) and (485.67,197) .. (473,180.33) .. controls (460.33,163.67) and (458.33,153.67) .. (478.33,143.67) -- cycle ;
%Straight Lines [id:da7408748440147426] 
\draw    (21.33,43.33) -- (20.43,15.33) ;
\draw [shift={(20.33,12.33)}, rotate = 88.15] [fill={rgb, 255:red, 0; green, 0; blue, 0 }  ][line width=0.08]  [draw opacity=0] (8.93,-4.29) -- (0,0) -- (8.93,4.29) -- cycle    ;
%Curve Lines [id:da8932612730708671] 
\draw [color={rgb, 255:red, 74; green, 144; blue, 226 }  ,draw opacity=1 ] [dash pattern={on 4.5pt off 4.5pt}]  (1,185) .. controls (55.67,187) and (83,177.67) .. (124.33,185.67) .. controls (165.67,193.67) and (190.33,193) .. (228.33,179.67) .. controls (266.33,166.33) and (469,211.67) .. (509.67,211) .. controls (550.33,210.33) and (555,207.67) .. (593.67,207) .. controls (632.33,206.33) and (662.33,203.67) .. (695.67,203.67) ;
%Curve Lines [id:da6764303006412449] 
\draw [color={rgb, 255:red, 74; green, 144; blue, 226 }  ,draw opacity=1 ] [dash pattern={on 4.5pt off 4.5pt}]  (0.33,167.67) .. controls (55,169.67) and (126.33,149.67) .. (133,147.67) .. controls (139.67,145.67) and (172.67,107) .. (208,135) .. controls (243.33,163) and (472.33,183.67) .. (479.67,187.67) .. controls (487,191.67) and (537.67,179.67) .. (564.33,170.33) .. controls (591,161) and (664.33,187.67) .. (697.67,187.67) ;
%Curve Lines [id:da05472281686782976] 
\draw [color={rgb, 255:red, 74; green, 144; blue, 226 }  ,draw opacity=1 ] [dash pattern={on 4.5pt off 4.5pt}]  (3.67,115) .. controls (57,109) and (93,107) .. (132,106.33) .. controls (171,105.67) and (196.67,106.33) .. (210.33,111.67) .. controls (224,117) and (429,169.67) .. (465,151) .. controls (501,132.33) and (501.67,123) .. (537,143) .. controls (572.33,163) and (663,147.67) .. (696.33,147.67) ;
%Curve Lines [id:da5733730914942785] 
\draw [color={rgb, 255:red, 208; green, 2; blue, 27 }  ,draw opacity=1 ] [dash pattern={on 0.84pt off 2.51pt}]  (5,231.67) .. controls (59.67,233.67) and (95.67,224.33) .. (139,224.33) .. controls (182.33,224.33) and (204.33,226.33) .. (224.67,227.67) .. controls (245,229) and (435.67,185) .. (479.67,187.67) .. controls (523.67,190.33) and (526.33,174.33) .. (565,173.67) .. controls (603.67,173) and (665,162.33) .. (698.33,162.33) ;
%Curve Lines [id:da644379476131367] 
\draw [color={rgb, 255:red, 208; green, 2; blue, 27 }  ,draw opacity=1 ] [dash pattern={on 0.84pt off 2.51pt}]  (1.67,176.33) .. controls (56.33,178.33) and (81.67,177) .. (124.33,185.67) .. controls (167,194.33) and (190.33,191) .. (231.67,175) .. controls (273,159) and (432.33,177) .. (472.67,149.67) .. controls (513,122.33) and (494.33,131) .. (533.67,137) .. controls (573,143) and (661.67,126.33) .. (695,126.33) ;
%Curve Lines [id:da5397497766960808] 
\draw [color={rgb, 255:red, 208; green, 2; blue, 27 }  ,draw opacity=1 ] [dash pattern={on 0.84pt off 2.51pt}]  (3,138.33) .. controls (57.67,140.33) and (89.67,165.67) .. (133,147.67) .. controls (176.33,129.67) and (189.67,117.67) .. (213,139.67) .. controls (236.33,161.67) and (463.67,114.33) .. (481.33,111.67) .. controls (499,109) and (509,107) .. (542.33,99) .. controls (575.67,91) and (657,89.67) .. (690.33,89.67) ;

% Text Node
\draw (306,236.07) node [anchor=north west][inner sep=0.75pt]    {$\ket{\psi [ \Sigma _{i}]}$};
% Text Node
\draw (18.67,271.4) node [anchor=north west][inner sep=0.75pt]    {$\Sigma _{i}$};
% Text Node
\draw (12.67,54.07) node [anchor=north west][inner sep=0.75pt]    {$\Sigma _{f}$};
% Text Node
\draw (29.33,16.67) node [anchor=north west][inner sep=0.75pt]   [align=left] {t};
% Text Node
\draw (306,22.07) node [anchor=north west][inner sep=0.75pt]    {$\ket{\psi [ \Sigma _{f}]}$};
% Text Node
\draw (141.33,150.07) node [anchor=north west][inner sep=0.75pt]    {$\{\M_{1,m}(\U)\}$};
% Text Node
\draw (482.67,151.4) node [anchor=north west][inner sep=0.75pt]    {$\{\M_{2,n}(\V)\}$};
% Text Node
\draw (6.67,149.73) node [anchor=north west][inner sep=0.75pt]  [font=\scriptsize,color={rgb, 255:red, 74; green, 144; blue, 226 }  ,opacity=1 ]  {$\Sigma _{2}^{12}$};
% Text Node
\draw (7.33,189.73) node [anchor=north west][inner sep=0.75pt]  [font=\scriptsize,color={rgb, 255:red, 74; green, 144; blue, 226 }  ,opacity=1 ]  {$\Sigma _{1}^{12}$};
% Text Node
\draw (7.33,114.4) node [anchor=north west][inner sep=0.75pt]  [font=\scriptsize,color={rgb, 255:red, 74; green, 144; blue, 226 }  ,opacity=1 ]  {$\Sigma _{3}^{12}$};
% Text Node
\draw (678.67,167.73) node [anchor=north west][inner sep=0.75pt]  [font=\scriptsize,color={rgb, 255:red, 208; green, 2; blue, 27 }  ,opacity=1 ]  {$\Sigma _{1}^{21}$};
% Text Node
\draw (675.33,107.73) node [anchor=north west][inner sep=0.75pt]  [font=\scriptsize,color={rgb, 255:red, 208; green, 2; blue, 27 }  ,opacity=1 ]  {$\Sigma _{2}^{21}$};
% Text Node
\draw (671.33,69.73) node [anchor=north west][inner sep=0.75pt]  [font=\scriptsize,color={rgb, 255:red, 208; green, 2; blue, 27 }  ,opacity=1 ]  {$\Sigma _{3}^{21}$};

\end{tikzpicture}

    \caption{Foliation of a globally hyperbolic spacetime into two families of Cauchy hypersurfaces $\mathcal{S}_1$ and $\mathcal{S}_2$, both of which contain $\Sigma_i$ and $\Sigma_f$ in the strict causal past and strict causal future, respectively, of spacelike-separated precompact regions $\U$ and $\V$. In blue dashed lines ($\mathcal{S}_1$), $B^-(\U) \subset \Sigma_1^{12}$ lies in the causal past of $\V$, $B^+(\U) \cup B^-(\V) \subset \Sigma_2^{12}$ and $B^+(\V) \subset \Sigma_3^{12}$ lies in the causal future of $\U$; in red dotted lines ($\mathcal{S}_2$), $B^-(\V) \subset \Sigma_1^{21}$ lies in the causal past of $\U$, $B^+(\V) \cup B^-(\U) \subset \Sigma_2^{21}$ and $B^+(\U) \subset \Sigma_3^{21}$ lies in the causal future of $\V$. The state on $\Sigma_f$ is independent of the choice of foliation, i.e. the final states in both foliations belong to the same ray.}
    \label{fig:beyond interaction pic}
\end{figure*}

The setup is shown in Fig. \ref{fig:beyond interaction pic}. This dynamical form of an \enquote{anyonic commutation relation of measurements} is perhaps less aesthetic than that which we \sam{recovered} in theorem \ref{thm: measurement}, but is valid beyond the interaction picture \sam{and ensures the consistency of the relativistic physics}. It is explicitly dependent on the dynamics $U$ (perhaps generated by some Hamiltonian $H$, which will change the resulting phase). One could argue that in the free theory, this picture and the interaction pictures coincide \cite{lienert_borns_2020}, so that the commutation relation in the interacting case in such a Schrödinger-like picture may remain identical; however, it is now more tricky to \enquote{shut down} the dynamics as we did previously, since we now \enquote{jump} from Hilbert space to Hilbert space (so we cannot write $U_{\Sigma_1}^{\Sigma_2} = \mathbb{1}$), and $\M_1$ and $\M_2$ now generally map to different Hilbert spaces. Further note that the dependence on the initial and final hypersurfaces is, in a way, \enquote{superficial} in that given other initial and final hypersurfaces $\Sigma_i'$ and $\Sigma_f'$, respectively, we have $\phi_{mn}(U,\Sigma_i,\Sigma_f;\U,\V) \equiv \eta_{mn}(U,\Sigma_i,\Sigma_i',\Sigma_f,\Sigma_f') + \phi_{mn}(U,\Sigma_i',\Sigma_f';\U,\V) \mod 2\pi$ for some $\eta_{mn}(U,\Sigma_i,\Sigma_i',\Sigma_f,\Sigma_f') \in [0,2\pi)$ by unitary evolution. \\

In general, if one does not assume the unitary equivalence of Hilbert spaces across disjoint regions of spacetime, then the strict notion of \enquote{commutativity of observables across spacelike separations} should be weakened to something like the above. Of course, in such situations, the formalism of AQFT may simply be better suited \cite{Torre_1999}, and measurement schemes expressed in that language \cite{fewster_quantum_2020,ruep_causality_2022,fewster_measurement_2023} may allow a more general treatment of the notion of Einstein causality in generic gauge theories and globally hyperbolic spacetimes. \sam{Furthermore, the machinery of relativistic quantum instruments in spacetime \cite{ozawa_quantum_1984,okamura_measurement_2015} may help further understand such general scenarios, and it would be interesting to reframe the discussion of this paper within quantum instrument theory.}

\section*{Acknowledgements}

The author thanks Adrian Kent, Subhayan Roy Moulik and Nicola Pranzini for useful comments and discussions, and Miguel Sánchez Caja for his insights into globally hyperbolic spacetimes and Cauchy hypersurfaces. The author also thanks Masanao Ozawa for his helpful feedback on non-selective quantum measurements in spacetime. The author is funded by a studentship from the Engineering and Physical Sciences Research Council.

\begin{appendix}
    \section{Appendix: Some tools from Lorentzian geometry}

\label{sec:diff geo}

This appendix serves as a reminder of tools from Lorentzian geometry.

\begin{definition}
	A spacetime $(\mathcal{M},g)$ is said to be \emph{globally hyperbolic} if it contains no closed causal curve (i.e. it is causal) and if $\forall p,q \in \mathcal{M}$, the double cone (causal diamond) $J^-(p) \cap J^+(q)$ is compact \cite{Hounnonkpe_2019}.
\end{definition}

Importantly, $(\mathcal{M},g)$ is globally hyperbolic if and only if there exists a smooth spacelike Cauchy hypersurface\footnote{A Cauchy hypersurface $\Sigma$ is a subset of $\mathcal{M}$ such that $D(\Sigma) = \mathcal{M}$, that is, data on $\Sigma$ fully determines the future and past data upon evolution in $\mathcal{M}$.} $\Sigma$ in $\mathcal{M}$ \cite{geroch_domain_1970,bernal_smooth_2003}. Equivalently, any globally hyperbolic manifold $\mathcal{M}$ is diffeomorphic to $\mathbb{R} \times \Sigma$. The existence of a Cauchy temporal function\footnote{A ($C^k$-)Cauchy temporal function is a ($C^k$-)smooth map $\tau : \mathcal{M} \to \mathbb{R}$ with timelike gradient everywhere with spacelike Cauchy hypersurfaces as levels.} is guaranteed in globally hyperbolic spacetimes \cite{bernal_smoothness_2005}. More generally, if $\Sigma$ is an acausal\footnote{$H \subset \mathcal{M}$ is \emph{acausal} if no two points of $H$ are connected by a causal curve.} Cauchy hypersurface in $\mathcal{M}$ (but is not necessarily spacelike\footnote{A hypersurface $H$ is \emph{spacelike} if its normal $1$-form $n_a$ is everywhere timelike.}), then the existence of a Cauchy time function (defined below) which contains $\Sigma$ as a level is ensured.

\begin{theorem}[\cite{bernal_further_2006}]
    \label{thm:Cauchy time function}
    Let $(\mathcal{M},g)$ be globally hyperbolic, and $\Sigma$ an acausal Cauchy hypersurface. Then there exists a smooth function $\tau : \mathcal{M} \to \mathbb{R}$, called a \emph{Cauchy time function}, such that the levels 
    $\Sigma_t = \tau^{-1}(t), t \in \mathbb{R}$, satisfy
    \begin{enumerate}
        \item $\Sigma = \Sigma_0$,
        \item Each $\Sigma_t$ is a smooth spacelike Cauchy hypersurface for any $t \in \mathbb{R^*}$,
        \item $\tau$ is strictly increasing on any future-directed causal curve.
    \end{enumerate}
\end{theorem}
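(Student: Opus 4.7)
My plan is to adapt the argument of Theorem \ref{thm: measurement} to the setting where Hilbert spaces across different Cauchy hypersurfaces need not be unitarily equivalent. The key conceptual difference is that we no longer get a neat anyonic commutation of the measurement operators alone (because $\M_i^\Sigma$ and $\M_i^{\Sigma'}$ on different slices live in different operator algebras), but only of the full dynamical ``histories" that go from $\Sigma_i$ to $\Sigma_f$ through the two measurement events. So the statement will be derived more or less by inspection, after unfolding the two state-update histories and then invoking foliation-independence of rays.

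The first step is to track a generic pure state $\ket{\psi[\Sigma_i]}\in\hi_{\Sigma_i}$ through each foliation using the rule of Definition \ref{def:mmt update rule beyond int pic}. In the ``21" foliation, we evolve through $\Sigma_1^{21}$, collapse with $\M_2^{\Sigma_2^{21}}(\V)$ once $B^+(\V)$ is crossed, evolve to $\Sigma_3^{21}$, collapse with $\M_1^{\Sigma_3^{21}}(\U)$, and then evolve up to $\Sigma_f$; this produces
\[
\ket{\psi[\Sigma_f]}_{21}=c^{\psi[\Sigma_i]}_{21}\,U^{\Sigma_f}_{\Sigma_3^{21}}\M_1^{\Sigma_3^{21}}(\U) U_{\Sigma_2^{21}}^{\Sigma_3^{21}}\M_2^{\Sigma_2^{21}}(\V) U^{\Sigma_2^{21}}_{\Sigma_i}\ket{\psi[\Sigma_i]},
\]
and analogously $\ket{\psi[\Sigma_f]}_{12}=c^{\psi[\Sigma_i]}_{12}(\text{RHS operator})\ket{\psi[\Sigma_i]}$. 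The assumption on $\hi_{\Sigma_2^{21}}\cong\hi_{\Sigma_3^{12}}$ and $\hi_{\Sigma_3^{21}}\cong\hi_{\Sigma_2^{12}}$ is precisely what is needed to make both composed operators well-defined elements of $\mathcal{B}(\hi_{\Sigma_i},\hi_{\Sigma_f})$, so the comparison is meaningful.

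The second step invokes foliation-independence of rays at $\Sigma_f$: $\ket{\psi[\Sigma_f]}_{12}$ and $\ket{\psi[\Sigma_f]}_{21}$ lie in the same ray, so they differ by a phase $e^{i\phi_{\psi}(U,\Sigma_i,\Sigma_f;\U,\V)}$, giving an equality of vectors
\[
c^{\psi[\Sigma_i]}_{21}(\text{LHS op})\ket{\psi[\Sigma_i]}=e^{i\phi_\psi}c^{\psi[\Sigma_i]}_{12}(\text{RHS op})\ket{\psi[\Sigma_i]}.
\]
Both sides also have equal norm (because the collapse rule already renormalises), which forces $c^{\psi[\Sigma_i]}_{21}=c^{\psi[\Sigma_i]}_{12}$, so the ratio drops out. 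It remains to check that the residual phase $\phi_\psi$ is actually independent of $\ket{\psi[\Sigma_i]}$; this is the main obstacle, because (unlike in Theorem \ref{thm: measurement}) one cannot simply ``switch off" the dynamics by setting $U^{\Sigma'}_{\Sigma}=\mathbb{1}$ since these unitaries are isomorphisms between genuinely distinct Hilbert spaces.

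To overcome this, I will exploit linearity of the composite operator on either side exactly as in the proof of Theorem \ref{thm: measurement}. Writing $\ket{\psi[\Sigma_i]}=a\ket{\alpha[\Sigma_i]}+b\ket{\beta[\Sigma_i]}$ with $\ket{\alpha[\Sigma_i]},\ket{\beta[\Sigma_i]}$ linearly independent and neither in the kernel of either composite, substituting into the ray-equality and using linearity of the LHS/RHS operators yields two scalar conditions of the form $e^{i(\phi_\psi-\phi_\alpha)}=1$ and $e^{i(\phi_\psi-\phi_\beta)}=1$. Hence $\phi_\psi\equiv\phi_\alpha\equiv\phi_\beta\pmod{2\pi}$, and the phase is a function $\phi(U,\Sigma_i,\Sigma_f;\U,\V)$ of only the data in the statement. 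Finally, since an element of $\mathcal{B}(\hi_{\Sigma_i},\hi_{\Sigma_f})$ is determined by its action on all pure states in $\hi_{\Sigma_i}$, the pointwise equality promotes to the advertised operator equality, completing the proof. The dimension-$1$ edge case (where the Hilbert space is commutative) and the kernel cases are handled trivially just as in Theorem \ref{thm: measurement}.
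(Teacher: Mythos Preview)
Your proposal addresses the wrong statement. The theorem labelled \texttt{thm:Cauchy time function} is a result of Bernal and S\'anchez on Lorentzian geometry: it asserts the existence of a smooth Cauchy time function $\tau:\mathcal{M}\to\mathbb{R}$ whose level sets foliate a globally hyperbolic spacetime and include a prescribed acausal Cauchy hypersurface as $\tau^{-1}(0)$. The paper does not prove this theorem at all; it is quoted from \cite{bernal_further_2006} as background in the appendix on Lorentzian geometry, and its proof would involve constructions with temporal functions, smoothing arguments, and the causal structure of $(\mathcal{M},g)$, none of which appear in your write-up.

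What you have actually sketched is a proof of the unnumbered theorem in Section~\ref{sec: beyond interaction pic} (the ``Einstein causality beyond the interaction picture'' result), which concerns the equality up to a phase of two composite operators built from $U_{\Sigma}^{\Sigma'}$ and the measurement maps $\M_1,\M_2$. Your argument for \emph{that} statement is reasonable and follows the same linearity-based phase analysis as the paper's proof of Theorem~\ref{thm: measurement}, but it has nothing to do with the existence of Cauchy time functions. You need to either recognise that the requested statement is a cited external result (and say so), or supply an actual differential-geometric argument establishing the existence of $\tau$.
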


\begin{lemma}
    \label{lem:Cauchy in the future}
    If $\U \indep \V \subset \mathcal{M}$ are precompact, then there exists a spacelike Cauchy hypersurface in the strict causal future of both regions, and a spacelike Cauchy hypersurface in the strict causal past of both regions.
\end{lemma}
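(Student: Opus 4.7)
The plan is to apply the Bernal-Sánchez Cauchy time function to the spacetime and pick two level sets that sandwich both precompact regions. By theorem \ref{thm:Cauchy time function} there exists a smooth Cauchy time function $\tau : \mathcal{M} \to \mathbb{R}$ whose levels $\Sigma_t := \tau^{-1}(t)$ are smooth spacelike Cauchy hypersurfaces and along which $\tau$ strictly increases on every future-directed causal curve. Since Cauchy hypersurfaces foliate $\mathcal{M}$, the range of $\tau$ is all of $\mathbb{R}$.

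Since $\U$ and $\V$ are precompact, $\overline{\U}$ and $\overline{\V}$ are compact, so by continuity $\tau(\overline{\U} \cup \overline{\V})$ is a compact subset of $\mathbb{R}$. I would set $t_f := \sup \tau(\overline{\U} \cup \overline{\V}) + 1$ and $t_i := \inf \tau(\overline{\U} \cup \overline{\V}) - 1$, and take $\Sigma_f := \tau^{-1}(t_f)$ and $\Sigma_i := \tau^{-1}(t_i)$; these are smooth spacelike Cauchy hypersurfaces by construction.

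It then remains to verify the causality conditions for $\Sigma_f$ (the argument for $\Sigma_i$ is symmetric under time reversal). First, $\Sigma_f \cap J^-(\overline{\U} \cup \overline{\V}) = \varnothing$: any candidate point $p$ in the intersection would admit a future-directed causal curve $\gamma$ from $p$ to some $q \in \overline{\U} \cup \overline{\V}$, forcing $t_f = \tau(p) < \tau(q) \leq t_f - 1$ by strict monotonicity of $\tau$ along $\gamma$, a contradiction; in particular $\Sigma_f$ does not meet $\overline{\U} \cup \overline{\V}$. Second, $\Sigma_f$ intersects both $J^+(\overline{\U})$ and $J^+(\overline{\V})$: picking any $q \in \overline{\U}$, a future-directed timelike curve from $q$ extended to future-inextendibility must cross $\Sigma_f$ (since $\Sigma_f$ is a Cauchy hypersurface), and the crossing point lies in $J^+(\overline{\U})$; the same argument applies to $\overline{\V}$.

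I do not expect any significant obstacle; the argument rests only on theorem \ref{thm:Cauchy time function} and on the compactness of $\overline{\U} \cup \overline{\V}$. The one mild interpretive subtlety is how to read ``in the strict causal future of both regions'' -- I take it to mean $\Sigma_f \cap J^-(\overline{\U} \cup \overline{\V}) = \varnothing$ together with $\Sigma_f \cap J^+(\overline{\U}) \neq \varnothing \neq \Sigma_f \cap J^+(\overline{\V})$, matching the usage of ``strict causal future'' for the auxiliary slices $\Sigma_1, \Sigma_2$ in the proof of theorem \ref{thm: measurement}. If instead the intended (stronger) reading required $B^\pm(\U), B^\pm(\V)$ themselves to lie in the chosen hypersurfaces, one would need the follow-up statement labelled as theorem \ref{Thm: three cauchy hypersurfaces bernal}, which is deeper but is already cited separately in the paper.
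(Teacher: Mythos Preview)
Your proof is correct and follows essentially the same approach as the paper's: pick a Cauchy temporal/time function, use compactness of $\overline{\U}\cup\overline{\V}$ to bound its range on the closures, and take a level set beyond that bound. The paper phrases the compactness step as an open-cover argument via the sets $T_m=\tau^{-1}((-m,m))$, while you use that the continuous image of a compact set is compact and take $t_f=\sup\tau(\overline{\U}\cup\overline{\V})+1$; these are equivalent, and your explicit verification of the strict-causal-future conditions is a welcome addition that the paper leaves implicit.
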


\begin{proof}
    It is enough to consider any Cauchy temporal function $\tau$. Indeed, the open subsets $T_m:= \{\tau^{-1}(t) \mid t \in(-m,m), m \in \mathbb{N}\}$ yield an open covering of $\overline{\U}, \overline{\V}$. Therefore, there exists an $m_0 \in \mathbb{N}$ such that both $\overline{\U}$ and $\overline{\V}$ lie in $T_{m_0}$. Thus, any slice $\Sigma_\tau$ with $\tau=c$, where $c > m_0$ is a constant, is Cauchy and lies in the strict causal future of both $\U$ and $\V$. The exact same argument can be held for showing the existence of a spacelike Cauchy hypersurface in the causal past of both regions.
\end{proof} 

Note that if $\U$ or $\V$ are not precompact then this does not necessarily hold. For example, consider the strip $\abs{t} < 1$ in Minkowski spacetime -- this strip is globally hyperbolic. However, if one chooses two spacelike-separated points $p$ and $q$ of this strip which lie sufficiently far away in the $x$ direction, and $\U = J^+(p)$ and $\V = J^+(q)$ then no point of the strip lies in the (strict) causal future of $\U$ or $\V$.\footnote{The author thanks Miguel Sánchez Caja for providing the above proof and this counterexample in a private communication.} We now review a key result which states that any collection of disjoint $C^k$ spacelike Cauchy hypersurfaces can be \enquote{joined} in a single foliation of spacetime through a single Cauchy temporal function.

\begin{theorem}[\cite{muller_note_2016,Muller2012}]
    \label{thm:muller}
    Let, for $n \in \mathbb{N} \cup \{\infty\}$, $(\mathcal{M}, g)$ be a $C^n$ globally hyperbolic spacetime. Let $k \leq n$, $m \in \mathbb{N}$ and let $(\Sigma_{-} = \Sigma_0, \Sigma_1,..., \Sigma_m, \Sigma_{+} = \Sigma_{m+1})$ be an $(m +2)$-tuple of $C^k$ spacelike Cauchy hypersurfaces with $\Sigma_{i+1} \subset I^+(\Sigma_i)$ for all $0 \leq i \leq m +1$. Let $a = (a_1,... a_m)$ be an $m$-tuple of real numbers with $a_{i+1} > a_i$. Then there is a $C^{k-1}$ Cauchy temporal function $T$ with $\Sigma_i = T^{-1}(a_i)$ for all $i \in \{1,..., m\}$.
\end{theorem}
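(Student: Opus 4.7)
The strategy closely parallels that of Theorem \ref{thm: measurement}, but without the luxury of a globally fixed Hilbert space that let us ``turn off'' the dynamics. The plan is: thread the prescribed Cauchy hypersurfaces into two distinct foliations via Theorem \ref{thm:muller} — one (``$12$'') ordering $\Sigma_i, \Sigma_1^{12}, \Sigma_2^{12}, \Sigma_3^{12}, \Sigma_f$, in which $\U$ is evolved through before $\V$, and one (``$21$'') ordering $\Sigma_i, \Sigma_1^{21}, \Sigma_2^{21}, \Sigma_3^{21}, \Sigma_f$, in which $\V$ is evolved through before $\U$ — apply the state update rule of Definition \ref{def:mmt update rule beyond int pic} step by step along each foliation to propagate $\ket{\psi[\Sigma_i]}$ to $\Sigma_f$, and then equate the two resulting rays thanks to the postulated foliation-independence of rays.

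Concretely, for any $\ket{\psi[\Sigma_i]} \in \hi_{\Sigma_i}$, iterating Definition \ref{def:mmt update rule beyond int pic} along the ``$12$'' foliation (first applying $\M^{\Sigma_2^{12}}_1(\U)$ to the state pulled up to $\Sigma_2^{12}$, then $\M^{\Sigma_3^{12}}_2(\V)$ on $\Sigma_3^{12}$, then evolving unitarily to $\Sigma_f$) gives the normalised post-measurement state $\ket{\psi[\Sigma_f]}_{12}$ proportional to $U^{\Sigma_f}_{\Sigma_3^{12}}\M^{\Sigma_3^{12}}_2(\V) U_{\Sigma_{2}^{12}}^{\Sigma_{3}^{12}} \M^{\Sigma_2^{12}}_1(\U) U^{\Sigma_2^{12}}_{\Sigma_i} \ket{\psi[\Sigma_i]}$; doing the same along ``$21$'' yields $\ket{\psi[\Sigma_f]}_{21}$ proportional to $U^{\Sigma_f}_{\Sigma_3^{21}} \M_1^{\Sigma_3^{21}}(\U) U_{\Sigma_{2}^{21}}^{\Sigma_{3}^{21}} \M^{\Sigma_2^{21}}_2(\V) U^{\Sigma_2^{21}}_{\Sigma_i} \ket{\psi[\Sigma_i]}$. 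The isomorphism assumptions $\hi_{\Sigma_3^{21}} \cong \hi_{\Sigma_2^{12}}$ and $\hi_{\Sigma_2^{21}} \cong \hi_{\Sigma_3^{12}}$ are precisely what is needed to regard both histories as elements of the same ``ambient'' space $\mathcal{B}(\hi_{\Sigma_i}, \hi_{\Sigma_f})$; foliation-independence of $[\psi[\Sigma_f]]$ then forces $\ket{\psi[\Sigma_f]}_{12} = e^{i\phi_{\psi[\Sigma_f]}(U,\Sigma_i,\Sigma_f;\U,\V)} \ket{\psi[\Sigma_f]}_{21}$ for some state-dependent phase.

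The remaining step is to promote this state-dependent equality to an operator identity with a state-independent phase. This is a verbatim replay of the linearity argument in the proof of Theorem \ref{thm: measurement}: writing $\ket{\psi[\Sigma_i]} = a \ket{\alpha[\Sigma_i]} + b\ket{\beta[\Sigma_i]}$ with $\ket{\alpha[\Sigma_i]}, \ket{\beta[\Sigma_i]}$ linearly independent and outside the kernels of both composite operators, applying the ray equality to $\ket{\psi[\Sigma_i]}, \ket{\alpha[\Sigma_i]}, \ket{\beta[\Sigma_i]}$ separately, and equating coefficients forces all normalisation ratios to collapse to a common positive real and all phases to agree modulo $2\pi$. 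Since an operator in $\mathcal{B}(\hi_{\Sigma_i}, \hi_{\Sigma_f})$ is determined by its action on pure states, the identity lifts from vectors to operators, producing a single $\phi(U,\Sigma_i,\Sigma_f;\U,\V) \in [0,2\pi)$.

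The main obstacle, compared to Theorem \ref{thm: measurement}, is conceptual rather than technical: one can no longer set the intertwining unitaries $U_{\Sigma}^{\Sigma'}$ to the identity to isolate a pure commutation relation between $\M_1$ and $\M_2$, because the Hilbert spaces on either side of a measurement genuinely differ. The bridging isomorphism hypotheses in condition 3 must therefore be used carefully to make the comparison of the two histories meaningful as maps in a common space, and as a consequence the output identity is intrinsically dynamical — it reduces to the clean spacelike commutation of Theorem \ref{thm: measurement} only in the interaction-picture limit, where all $\hi_\Sigma$ can be canonically identified and the $U$-factors drop out to leave a relation between the bare measurement operators.
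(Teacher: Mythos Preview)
Your proposal does not address the stated theorem at all. Theorem \ref{thm:muller} is a purely differential-geometric result: given a finite ordered collection of $C^k$ spacelike Cauchy hypersurfaces in a globally hyperbolic spacetime, there exists a $C^{k-1}$ Cauchy temporal function having those hypersurfaces as prescribed level sets. No Hilbert spaces, no measurements, no state update rules, no foliation-independence of rays enter anywhere. In the paper this theorem is not proved but simply cited from \cite{muller_note_2016,Muller2012}; it is used as a black-box input to the Lorentzian-geometry appendix (in particular to establish Corollary \ref{cor:Cauchy functions Sigma-Sigma1 and Sigma-Sigma2} and Theorem \ref{thm:Foliation Cauchy}).

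What you have written is instead a sketch of the proof of the \emph{unnamed} theorem in Section \ref{sec: beyond interaction pic} (the ``Einstein causality beyond the interaction picture'' result), which concludes with the dynamical identity
\[
U^{\Sigma_f}_{\Sigma_3^{21}} \M_1^{\Sigma_3^{21}}(\U) U_{\Sigma_{2}^{21}}^{\Sigma_{3}^{21}} \M^{\Sigma_2^{21}}_2(\V) U^{\Sigma_2^{21}}_{\Sigma_i}
= e^{i\phi(U,\Sigma_i,\Sigma_f;\U,\V)}\, U^{\Sigma_f}_{\Sigma_3^{12}} \M^{\Sigma_3^{12}}_2(\V) U_{\Sigma_{2}^{12}}^{\Sigma_{3}^{12}} \M^{\Sigma_2^{12}}_1(\U) U^{\Sigma_2^{12}}_{\Sigma_i}.
\]
For \emph{that} theorem your sketch is essentially the paper's intended argument. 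But as a proof of Theorem \ref{thm:muller} it is a category error: you have confused the geometric tool (existence of temporal functions threading prescribed Cauchy slices) with one of the physics results that later invokes it. A proof of Theorem \ref{thm:muller} would require the smoothing and interpolation techniques of M\"uller's papers, none of which appear in your proposal.
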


Below, we shall take globally hyperbolic spacetimes to be $C^\infty$ for simplicity. One can also show that any acausal compact region of a globally hyperbolic spacetime can be extended to an acausal Cauchy hypersurface \cite{bernal_further_2006}. This, combined with the above and with the fact that the region between the two spacelike Cauchy hypersurfaces is a globally hyperbolic spacetime, gives rise to the following result.

\begin{theorem}[\cite{bernal_further_2006}]
    \label{Thm: three cauchy hypersurfaces bernal}
    Let $K$ be a compact acausal region in $(\mathcal{M},g)$, $\Sigma_i$, $\Sigma_f$ be two disjoint spacelike Cauchy hypersurfaces such that $K \subset J^+(\Sigma_i) \cap J^-(\Sigma_f)$. Then there exists a Cauchy time function $\tau : \mathcal{M} \to \mathbb{R}$ such that $\tau^{-1}(t_i) = \Sigma_i$, $\tau^{-1}(t_K) = \Sigma_K \supset K$ and $\tau^{-1}(t_f) = \Sigma_f$ where $t_i < t_K < t_f \in \mathbb{R}$.
\end{theorem}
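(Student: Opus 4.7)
The plan is to mirror the structure of the proof of theorem \ref{thm: measurement} but without gauging away the dynamics, since the Hilbert spaces on distinct acausal Cauchy hypersurfaces are no longer identified. First I would invoke theorem \ref{thm:muller} (applied to the two nested chains of Cauchy hypersurfaces listed in assumption 5) to realise two Cauchy time functions: one whose level sets contain $\Sigma_i, \Sigma_1^{12}, \Sigma_2^{12}, \Sigma_3^{12}, \Sigma_f$ in that order (defining a foliation $\S_{12}$), and one whose level sets contain $\Sigma_i, \Sigma_1^{21}, \Sigma_2^{21}, \Sigma_3^{21}, \Sigma_f$ (defining $\S_{21}$). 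Since $B^+(\U) \subset \Sigma_2^{12}$ and $B^+(\V) \subset \Sigma_3^{12}$, iterating the state update rule of definition \ref{def:mmt update rule beyond int pic} along $\S_{12}$ gives, for any $\ket{\psi[\Sigma_i]} \in \hi_{\Sigma_i}$,
\begin{equation}
    \ket{\psi[\Sigma_f]}_{12} \propto U^{\Sigma_f}_{\Sigma_3^{12}}  \M^{\Sigma_3^{12}}_2(\V) U_{\Sigma_{2}^{12}}^{\Sigma_{3}^{12}} \M^{\Sigma_2^{12}}_1(\U) U^{\Sigma_2^{12}}_{\Sigma_i}\ket{\psi[\Sigma_i]},
\end{equation}
while iterating it along $\S_{21}$, where now $B^+(\V) \subset \Sigma_2^{21}$ and $B^+(\U) \subset \Sigma_3^{21}$, yields
\begin{equation}
    \ket{\psi[\Sigma_f]}_{21} \propto U^{\Sigma_f}_{\Sigma_3^{21}} \M_1^{\Sigma_3^{21}}(\U) U_{\Sigma_{2}^{21}}^{\Sigma_{3}^{21}} \M^{\Sigma_2^{21}}_2(\V) U^{\Sigma_2^{21}}_{\Sigma_i}\ket{\psi[\Sigma_i]}.
\end{equation}
The assumed isomorphisms $\hi_{\Sigma_2^{21}} \cong \hi_{\Sigma_3^{12}}$ and $\hi_{\Sigma_3^{21}} \cong \hi_{\Sigma_{2}^{12}}$ ensure that both composites are well-defined bounded maps $\hi_{\Sigma_i} \to \hi_{\Sigma_f}$ in which the \enquote{same} measurement operators for $\U$ and $\V$ are being inserted.

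Next I would invoke foliation-independence of the ray $[\psi[\Sigma_f]]$: since both expressions lie in the same ray, they differ only by a positive real normalisation ratio $c^{\psi[\Sigma_i]}_{\V,\U}/c^{\psi[\Sigma_i]}_{\U,\V}$ and a state-dependent phase $\phi_{\psi[\Sigma_f]}(U,\Sigma_i,\Sigma_f;\U,\V) \in [0,2\pi)$. To promote this pointwise relation to the claimed operator identity I would rerun the linear-independence argument from the proof of theorem \ref{thm: measurement}: write a generic $\ket{\psi[\Sigma_i]} = a\ket{\alpha[\Sigma_i]} + b\ket{\beta[\Sigma_i]}$ with $\ket{\alpha[\Sigma_i]}, \ket{\beta[\Sigma_i]}$ linearly independent and neither annihilated by the two composites; apply both composites; equate coefficients against $\ket{\alpha[\Sigma_i]}$ and $\ket{\beta[\Sigma_i]}$. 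This forces the norm ratio to equal $1$ and the phase to be common across all such states, yielding a single $\phi(U,\Sigma_i,\Sigma_f;\U,\V) \in [0,2\pi)$. The stated operator identity then follows because elements of $\mathcal{B}(\hi_{\Sigma_i},\hi_{\Sigma_f})$ are determined by their action on pure states.

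The main obstacle, compared to theorem \ref{thm: measurement}, is that one cannot set the dynamical unitaries to $\mathbb{1}$ to isolate a purely kinematic commutation relation between $\M_1(\U)$ and $\M_2(\V)$: each $U_{\Sigma}^{\Sigma'}$ genuinely maps between inequivalent Hilbert spaces and so cannot be gauged away, which is precisely why the resulting phase $\phi$ must depend on the full dynamics $U$ and on the slices $\Sigma_i, \Sigma_f$ rather than only on $\U$ and $\V$. A subtler point is the implicit identification, at the level of the measurement operators, of $\M^{\Sigma_2^{12}}_1(\U)$ with $\M^{\Sigma_3^{21}}_1(\U)$ and of $\M^{\Sigma_3^{12}}_2(\V)$ with $\M^{\Sigma_2^{21}}_2(\V)$: assumption 3 of the theorem is exactly what makes such identifications coherent, since without it the two foliation computations would produce operators in \emph{a priori} distinct Hilbert spaces and the ray-equivalence step would be meaningless.
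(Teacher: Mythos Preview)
Your proposal addresses the wrong statement. Theorem~\ref{Thm: three cauchy hypersurfaces bernal} is a pure Lorentzian-geometry result: given a compact acausal set $K$ sandwiched between spacelike Cauchy hypersurfaces $\Sigma_i$ and $\Sigma_f$, one must produce a Cauchy time function whose level sets include $\Sigma_i$, an acausal Cauchy hypersurface $\Sigma_K \supset K$, and $\Sigma_f$. There are no Hilbert spaces, no measurement operators, no rays, and no state update rule in this statement. What you have written is an outline of the proof of the (unnumbered) theorem in Section~\ref{sec: beyond interaction pic} about Einstein causality beyond the interaction picture, with its hypersurfaces $\Sigma_j^{12}$, $\Sigma_j^{21}$ and its conclusion about a dynamics-dependent phase $\phi(U,\Sigma_i,\Sigma_f;\U,\V)$.

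The paper does not give a detailed proof of Theorem~\ref{Thm: three cauchy hypersurfaces bernal}; it is quoted from \cite{bernal_further_2006}. The only argument indicated in the paper is the sentence immediately preceding the theorem: one first extends the compact acausal set $K$ to an acausal Cauchy hypersurface (again by \cite{bernal_further_2006}), then observes that the region between $\Sigma_i$ and $\Sigma_f$ is itself globally hyperbolic, and finally applies the ``joining'' result of Theorem~\ref{thm:muller} to obtain a single Cauchy time function with the three prescribed levels. A correct proposal should follow that route; nothing in your text engages with any of these geometric ingredients.
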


Keeping these results in mind for later, let us highlight that we can indeed embed the future and past boundaries of (pre)compact regions within acausal Cauchy hypersurfaces.

\begin{proposition}
    \label{prop:Bpm is acausal compact}
    Let $\U$ be a precompact region of a globally hyperbolic spacetime $(\mathcal{M},g)$. Then $B^\pm(\U)$ are compact acausal regions in $(\mathcal{M},g)$.
\end{proposition}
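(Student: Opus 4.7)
\medskip

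\noindent\textbf{Proof plan.}

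\emph{Acausality.} This is immediate from the definition. If $p,q\in B^+(\U)$ with $q\in J^+(p)$, then the identity $J^+(p)\cap\overline{\U}=\{p\}$ combined with $q\in\overline{\U}$ forces $q=p$. So no two distinct points of $B^+(\U)$ are causally related. The argument for $B^-(\U)$ is the time-reverse.

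\emph{Compactness --- setup.} Since $\U$ is precompact, $\overline{\U}$ is compact, and $B^+(\U)\subset\overline{\U}$, it suffices to show that $B^+(\U)$ is closed in $\mathcal{M}$. The plan is to encode the defining property by a real-valued function. Pick a Cauchy time function $\tau$ (existence guaranteed by theorem \ref{thm:Cauchy time function} in the appendix) and define
$$T:\overline{\U}\longrightarrow\mathbb{R},\qquad T(p):=\max\bigl\{\tau(q)\,:\,q\in J^+(p)\cap\overline{\U}\bigr\}.$$
The maximum is attained: $J^+(p)$ is closed in a globally hyperbolic spacetime, so $J^+(p)\cap\overline{\U}$ is closed inside the compact $\overline{\U}$, hence compact, and $\tau$ is continuous. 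Because $\tau$ is strictly increasing along future-directed causal curves, one has $T(p)=\tau(p)$ iff every $q\in J^+(p)\cap\overline{\U}$ satisfies $\tau(q)=\tau(p)$ iff $J^+(p)\cap\overline{\U}=\{p\}$; that is,
$$B^+(\U)=\{p\in\overline{\U}\,:\,T(p)-\tau(p)=0\}.$$

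\emph{Compactness --- closedness.} I would prove that $T$ is continuous on $\overline{\U}$, whence $B^+(\U)$ is the zero set of the continuous function $T-\tau$ and is therefore closed. Upper semicontinuity is routine: given $p_n\to p$ in $\overline{\U}$, pick $q_n\in J^+(p_n)\cap\overline{\U}$ attaining $T(p_n)$, extract by compactness of $\overline{\U}$ a convergent subsequence $q_{n_k}\to q\in\overline{\U}$, and use closedness of $J^+$ in a globally hyperbolic spacetime to deduce $q\in J^+(p)\cap\overline{\U}$; hence $\limsup T(p_n)=\lim \tau(q_{n_k})=\tau(q)\leq T(p)$. For lower semicontinuity, one exploits the inner continuity of the causal relation in a globally hyperbolic spacetime: for a maximizer $q^\star\in J^+(p)\cap\overline{\U}$, construct $q_n\to q^\star$ with $q_n\in J^+(p_n)$ by extending into the open set $I^+(q^\star)$ (so that $I^-$ of the extension is an open neighborhood of $p$ containing $p_n$ eventually) and shooting back, then use $q_n\in\overline{\U}$ for large $n$ to get $T(p_n)\geq\tau(q_n)\to T(p)$.

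\emph{Main obstacle.} The delicate point is keeping the approximants $q_n$ inside $\overline{\U}$ when the maximizer $q^\star$ sits on $\partial\overline{\U}$ in a direction tangent to the null cone emanating from $p$, so that nearby causally-reachable points from $p_n$ escape $\overline{\U}$. The cleanest resolution I foresee is to replace $q^\star$ by a maximal element in the compact, closed-ordered poset $\overline{\U}\cap J^+(p)$ via Zorn's lemma, so that this maximal element lies in $B^+(\U)\cap J^+(p)$; the acausality proven in the first step then forbids it from being causally related to any $p_n\in B^+(\U)$. A case split on whether $q^\star\in I^+(p)$ (handled directly by openness of $I^\pm$) or $q^\star\in J^+(p)\setminus I^+(p)$ (null-related, where the strict monotonicity of $\tau$ and acausality of $B^+(\U)$ give the contradiction) then closes the argument. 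The proof for $B^-(\U)$ is obtained by swapping $J^+\leftrightarrow J^-$ and reversing the time orientation.
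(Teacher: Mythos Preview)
Your acausality argument is correct and identical to the paper's. The compactness setup---writing $B^+(\U)=\{p\in\overline{\U}:T(p)=\tau(p)\}$ with $T(p):=\max_{q\in J^+(p)\cap\overline{\U}}\tau(q)$---is sound, and your upper--semicontinuity argument for $T$ is valid.

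The gap is lower semicontinuity, and it cannot be repaired: $T$ is genuinely not lower semicontinuous, and $B^+(\U)$ need not be closed. In 2D Minkowski with coordinates $(t,x)$, take $\U$ to be the union of the two open rectangles $\{-1<t<0,\ -2<x<0\}$ and $\{0.9<t<1,\ 1<x<1.1\}$. Then $p_n=(0,-1/n)\in B^+(\U)$ for every $n\geq 1$: the closure of the second box lies entirely outside $J^+(p_n)$ (since any point $(s,y)$ there has $s\leq 1<1+1/n\leq y+1/n$), and on the first box $J^+(p_n)$ meets $\overline{\U}$ only at $p_n$. Yet $p_n\to p=(0,0)\notin B^+(\U)$, because $(1,1)\in J^+(p)\cap\overline{\U}$. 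Your Zorn/acausality patch does not save this: the maximal element $q^\star=(1,1)$ does lie in $B^+(\U)$ and is spacelike to every $p_n$, but that is no contradiction---acausality of $B^+(\U)$ constrains the relation of $q^\star$ to the $p_n$, not to their limit $p$, and the null case of your case split produces nothing further. For comparison, the paper's own argument follows a different route, asserting that every $p\in B^+(\U)$ satisfies $\tau(p)=\max_{\overline{\U}}\tau$; that claim is equally false in this example (e.g.\ $(0,-1)\in B^+(\U)$ with $\tau=0\neq 1$), so neither approach establishes the proposition as stated without additional hypotheses on $\U$.
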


\begin{proof}
    We need to show that $B^+(\U)$ is an acausal compact region -- the proof for $B^-(\U)$ is completely analogous.

    \begin{enumerate}
        \item Acausality: Suppose for contradiction that there exist two distinct points $p, q \in B^+(\U)$ and a causal curve $\gamma$ from $p$ to $q$. Then $q \in J^+(p)$. Since $q \in B^+(\U) \subseteq \overline{\U}$, we have $q \in J^+(p) \cap \overline{\U}$. But for $p \in B^+(\U)$, by definition $J^+(p) \cap \overline{\U} = \{p\}$. Therefore, $q=p$, which contradicts our assumption that $p$ and $q$ are distinct. Hence, there is no causal curve between any two distinct points in $B^+(\U)$. Thus, $B^+(\U)$ is acausal.
        
        \item Compactness: 
        
        %Let $\{p_n\} \subset B^+(\U)$ be a sequence converging to $p \in \overline{\U}$. We want to show that $ p \in B^+(\U) $, i.e., $J^+(p) \cap \overline{\U} = \{p\}$. Assume, for contradiction, that there exists $ q \in J^+(p) \cap \overline{\U} $ with $ q \neq p $. Then there exists a future-directed causal curve $\gamma$ from $p$ to $q$. Since $J^+ \subset \mathcal{M} \times \mathcal{M}$ is closed \cite{alekseevsky_causal_2008}, there exists a $\{q_n\} \subset \mathcal{M}$ such that $q_n \in J^+(p_n)$ and $q_n \to q$. Since $q \in \overline{\U}$, there exists an $N \in \mathbb{N}$ such that for all $n > N$, $q_n \in \overline{\U}$. But since $q \neq p, \exists M > N$ such that $q_M \neq p_M$ i.e. $q_M \in J^+(p_M) \cap \overline{\U}$. But this contradicts the assumption that $\{p_n\} \subset B^+(\U)$. Hence there is no such $q \neq p$ such that $q \in J^+(p) \cap \overline{\U}$, i.e. $p \in B^+(\U)$. Hence $B^+(\U)$ is closed in $\overline{\U}$ and closed subsets of compact sets are compact so $B^+(\U)$ is compact.
        Since $(\mathcal{M},g)$ is globally hyperbolic, there exists a smooth global time function $\tau: \M \to \R$ whose gradient $\nabla \tau$ is a past-directed timelike vector field. $\tau$ is strictly increasing along any future-directed causal curve from a point $p$ to a distinct point $q$, i.e. $\tau(q) > \tau(p)$.

    The set $\overline{\U}$ is compact, so the continuous function $\tau$ must attain a maximum value on it. Let $\tau_{\max} = \sup_{p \in \overline{\U}} \tau(p)$. We claim that if $p \in B^+(\U)$, then it is necessary that $\tau(p) = \tau_{\max}$.
    
    Indeed, suppose, for contradiction, that there is a point $p \in B^+(\U)$ with $\tau(p) < \tau_{\max}$. Since the maximum of $\tau$ on $\overline{\U}$ is strictly greater than $\tau(p)$, there exists a future-directed timelike curve $\gamma$ starting at $p$ which must remain in $\overline{\U}$ for some parameter length $\epsilon > 0$. Let $q = \gamma(\epsilon)$. Then $q \in I^+(p)$, $q \in \overline{\U}$, and $q \neq p$. This means $J^+(p) \cap \overline{\U}$ contains more than just $\{p\}$, which contradicts the assumption that $p \in B^+(\U)$. Thus, the claim holds.

    We now show that $B^+(\U)$ is closed by showing it contains all its limit points. Let $\{p_n\}_{n \in \mathbb{N}}$ be a sequence in $B^+(\U)$ that converges to a limit $p \in \overline{\U}$. We must show that $p \in B^+(\U)$.
    From the previous claim, since every $p_n \in B^+(\U)$, we must have $\tau(p_n) = \tau_{\max}$ for all $n$. By the continuity of $\tau$, we have:
    \begin{equation}
        \tau(p) = \tau(\lim_{n \to \infty} p_n) = \lim_{n \to \infty} \tau(p_n) = \tau_{\max} \, .
    \end{equation}
    Now, assume for contradiction that $p \notin B^+(\U)$. By definition, this means there exists a point $q \in J^+(p) \cap \overline{\U}$ such that $q \neq p$.
    
    Because $q \in J^+(p)$, $q \neq p$, and $\tau$ is a time function, we must have $\tau(q) > \tau(p)$. This leads to a contradiction. On one hand,
    $$ \tau(q) > \tau(p) = \tau_{\max} $$
    On the other hand, since $q \in \overline{\U}$, its time value cannot exceed the maximum on that set, which implies $\tau(q) \le \tau_{\max}$.
    We have derived $\tau_{\max} < \tau(q) \le \tau_{\max}$, which is impossible. The assumption that $p \notin B^+(\U)$ must be false. Therefore, $p \in B^+(\U)$. This shows that $B^+(\U)$ is closed. As $B^+(\U)$ is a closed subset of the compact set $\overline{\U}$, it is itself compact. 
    \end{enumerate}
    The same proof for $B^-(\U)$ follows by replacing $J^+ \leftrightarrow J^-$.
\end{proof}

We also show that $B^\pm(\U)$ is covariant under isometries, which will relate to the covariance of the state update rule under measurements that we will introduce in the following section.

\begin{proposition}
    \label{prop: cov Bpm} Let $\varphi$ be any isometry on a globally hyperbolic spacetime $(\mathcal{M},g)$. Then for all precompact $\U \subset \mathcal{M}$, $B^\pm(\varphi(\U)) = \varphi(B^\pm(\U))$ respectively.
\end{proposition}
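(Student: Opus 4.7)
The plan is to exploit the fact that isometries preserve the full causal structure together with closures, and then chase definitions. I will assume throughout that $\varphi$ is time-orientation preserving (the natural case for the covariance statement desired in Section \ref{sec:tomonaga-schwinger mmts}); the time-reversing case would simply swap $B^+ \leftrightarrow B^-$ on one side.

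First I would record the two basic facts about isometries on $(\mathcal{M},g)$ that do all the work. Since $\varphi$ is a diffeomorphism, it is in particular a homeomorphism, so $\varphi(\overline{\U}) = \overline{\varphi(\U)}$ for any $\U \subset \mathcal{M}$, and $\varphi$ maps precompact sets to precompact sets. Since $\varphi$ is an isometry preserving time orientation, it maps future-directed causal curves through $p$ bijectively onto future-directed causal curves through $\varphi(p)$, so $\varphi(J^\pm(p)) = J^\pm(\varphi(p))$ for every $p \in \mathcal{M}$. Both facts apply equally well to $\varphi^{-1}$, which is itself a time-orientation preserving isometry.

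Next I would verify the inclusion $\varphi(B^+(\U)) \subset B^+(\varphi(\U))$ by a direct computation. Let $p \in B^+(\U)$, so $p \in \overline{\U}$ and $J^+(p) \cap \overline{\U} = \{p\}$. Then $\varphi(p) \in \varphi(\overline{\U}) = \overline{\varphi(\U)}$, and using injectivity of $\varphi$ together with the two facts above,
\begin{align}
J^+(\varphi(p)) \cap \overline{\varphi(\U)} &= \varphi(J^+(p)) \cap \varphi(\overline{\U}) \\
&= \varphi(J^+(p) \cap \overline{\U}) \\
&= \varphi(\{p\}) = \{\varphi(p)\},
\end{align}
so $\varphi(p) \in B^+(\varphi(\U))$. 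The same argument applied to $\varphi^{-1}$ and the precompact region $\varphi(\U)$ gives $\varphi^{-1}(B^+(\varphi(\U))) \subset B^+(\varphi^{-1}(\varphi(\U))) = B^+(\U)$, i.e.\ $B^+(\varphi(\U)) \subset \varphi(B^+(\U))$. Equality follows. The argument for $B^-$ is identical after replacing every $J^+$ by $J^-$.

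The only genuinely delicate point is the identity $\varphi(J^+(p)) = J^+(\varphi(p))$; everything else is set-theoretic bookkeeping. This identity is standard: a future-directed causal curve $\gamma : [0,1] \to \mathcal{M}$ from $p$ to $q$ is pushed forward to $\varphi \circ \gamma$, which is smooth with tangent vector $d\varphi(\dot\gamma)$ that is causal (isometries preserve the metric hence the light cones) and future-directed (time orientation is preserved by assumption), giving $\varphi(q) \in J^+(\varphi(p))$; the reverse inclusion uses $\varphi^{-1}$.
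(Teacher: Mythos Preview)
Your proof is correct and follows essentially the same route as the paper: both arguments reduce to the two facts $\varphi(\overline{\U}) = \overline{\varphi(\U)}$ and $\varphi(J^\pm(p)) = J^\pm(\varphi(p))$, then chase the definition of $B^+$. The only cosmetic difference is that you obtain the reverse inclusion by applying the forward inclusion to $\varphi^{-1}$, whereas the paper does a direct element chase; your explicit remark about time-orientation preservation (absent in the paper) is a useful clarification.
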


\begin{proof}
    Let us show that this holds for $B^+$ -- the proof for $B^-$ is completely analogous. We will prove this by showing both inclusions: $\varphi(B^+(\U)) \subseteq B^+(\varphi(\U))$ and $B^+(\varphi(\U)) \subseteq \varphi(B^+(\U))$. \\

Let $ p \in B^+(\mathcal{U}) $. Then $ p \in \overline{\mathcal{U}} $, so since $ \varphi $ is a homeomorphism, $ \varphi(p) \in \overline{\varphi(\mathcal{U})} $. Also, since $ J^+(p) \cap \overline{\mathcal{U}} = \{p\} $, and $ \varphi $ preserves causal relations and closures, we have 
$
J^+(\varphi(p)) = \varphi(J^+(p))$, $\overline{\varphi(\mathcal{U})} = \varphi(\overline{\mathcal{U}})$ so $J^+(\varphi(p)) \cap \overline{\varphi(\mathcal{U})} = \varphi(J^+(p)) \cap \varphi(\overline{\mathcal{U}}) = \varphi(J^+(p) \cap \overline{\mathcal{U}})$ which, for $p \in B^+(\U)$, is just $\{\varphi(p)\}$. Therefore, $ \varphi(p) \in B^+(\varphi(\mathcal{U})) $ so $\varphi(B^+(\mathcal{U})) = \{\varphi(p) \mid p \in B^+(\U)\} \subset B^+(\varphi(\mathcal{U}))$. \\

We now show the other inclusion. Let $ q \in B^+(\varphi(\mathcal{U})) $. Then $ q \in \overline{\varphi(\mathcal{U})} = \varphi(\overline{\mathcal{U}}) $, so there exists $ p \in \overline{\mathcal{U}} $ with $ \varphi(p) = q $. Also $J^+(q) \cap \overline{\varphi(\mathcal{U})} = \{q\}
\Rightarrow J^+(\varphi(p)) \cap \varphi(\overline{\mathcal{U}}) = \{\varphi(p)\}
\Rightarrow \varphi(J^+(p)) \cap \varphi(\overline{\mathcal{U}}) = \{\varphi(p)\}$. Applying $ \varphi^{-1} $, we get $J^+(p) \cap \overline{\mathcal{U}} = \{p\} \Rightarrow p \in B^+(\mathcal{U}) \Rightarrow q = \varphi(p) \in \varphi(B^+(\mathcal{U}))$. Therefore, $B^+(\varphi(\mathcal{U})) \subset \varphi(B^+(\mathcal{U}))$. Hence, $B^+(\varphi(\U)) = \varphi(B^+(\U))$. The same proof for $B^-(\U)$ follows by replacing $J^+ \leftrightarrow J^-$.
\end{proof}

We now show that there are always acausal Cauchy hypersurfaces that \enquote{intertwine} between spacelike-separated precompact regions.

\begin{lemma}
    \label{lem:conn spacelike hypersurfaces intertwining spacelike regions}
    For all precompact $\U,\V \subset \mathcal{M}$ such that $\overline{\U} \indep \overline{\V}$, there exists two acausal Cauchy hypersurfaces $\Sigma_1$ and $\Sigma_2$ such that
    \begin{enumerate}
        \item $\Sigma_1 \cap J^+(\U) \neq \varnothing$ and $\Sigma_1 \cap J^-(\U) = \varnothing$;
        \item $\Sigma_1 \cap J^+(\V) = \varnothing$ and $\Sigma_1 \cap J^-(\V) \neq \varnothing$;
        \item $\Sigma_2 \cap J^+(\U) = \varnothing$ and $\Sigma_2 \cap J^-(\U) \neq \varnothing$;
        \item $\Sigma_2 \cap J^+(\V) \neq \varnothing$ and $\Sigma_2 \cap J^-(\V) = \varnothing$.
    \end{enumerate}
\end{lemma}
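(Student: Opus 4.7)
The plan is to build each of $\Sigma_1$ and $\Sigma_2$ by exhibiting a compact acausal set that encodes the desired ``$\U$-future/$\V$-past'' pattern, extending it to a spacelike Cauchy hypersurface via the Bernal--Sánchez machinery, and then reading off the four causality conditions from the associated Cauchy time function. Concretely, I first invoke Lemma \ref{lem:Cauchy in the future} to produce spacelike Cauchy hypersurfaces $\Sigma_i \subset I^-(\overline{\U} \cup \overline{\V})$ and $\Sigma_f \subset I^+(\overline{\U} \cup \overline{\V})$, and then set $K_1 := B^+(\U) \cup B^-(\V)$.

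I would then argue that $K_1$ is compact and acausal: each summand is compact and acausal by Proposition \ref{prop:Bpm is acausal compact}, while $B^+(\U) \subset \overline{\U}$ and $B^-(\V) \subset \overline{\V}$ together with $\overline{\U} \indep \overline{\V}$ rule out causal relations between the two pieces. Since $K_1 \subset \overline{\U} \cup \overline{\V} \subset I^+(\Sigma_i) \cap I^-(\Sigma_f)$, Theorem \ref{Thm: three cauchy hypersurfaces bernal} furnishes a Cauchy time function $\tau_1$ with $\tau_1^{-1}(t_i) = \Sigma_i$, $\tau_1^{-1}(t_f) = \Sigma_f$, and a spacelike Cauchy level $\Sigma_1 := \tau_1^{-1}(t_K) \supset K_1$.

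The structural fact doing the real work is that $t_K$ is simultaneously the maximum of $\tau_1$ on $\overline{\U}$ and the minimum of $\tau_1$ on $\overline{\V}$. Indeed, at any $p \in \overline{\U}$ where $\tau_1$ attains its maximum, strict increase of $\tau_1$ along future-directed causal curves (precisely as in the proof of Proposition \ref{prop:Bpm is acausal compact}) forces $J^+(p) \cap \overline{\U} = \{p\}$, so $p \in B^+(\U) \subset \Sigma_1$ and $\max_{\overline{\U}} \tau_1 = t_K$; the dual argument gives $\min_{\overline{\V}} \tau_1 = t_K$. Conditions 2 and 3 fall out immediately: if $q \in \Sigma_1 \cap J^-(\U)$ with witness $p \in \U$, then $t_K = \tau_1(q) \le \tau_1(p) \le t_K$ forces $q = p \in \U \cap \Sigma_1$, which is empty because $\Sigma_1 \cap \overline{\U} = B^+(\U) \subset \partial \U$ for open $\U$; condition 3 is symmetric.

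For conditions 1 and 4, I would pick any $p \in \U$, observe that $\tau_1(p) < t_K$ (else $p \in B^+(\U) \cap \U = \varnothing$), and flow along the future-directed integral curve of $\nabla \tau_1$ starting at $p$ until it meets $\Sigma_1$ at a point $q \in J^+(p) \cap \Sigma_1 \subset J^+(\U) \cap \Sigma_1$; starting from $\V$ and moving past-ward gives condition 4. The construction of $\Sigma_2$ is then the mirror image with $K_2 := B^-(\U) \cup B^+(\V)$ and the roles of $\U$ and $\V$ swapped. The main obstacle I anticipate is really just the identification $\max_{\overline{\U}} \tau_1 = t_K$ and its dual: one has to notice that $B^+(\U)$ coincides exactly with the locus where any Cauchy time function attains its maximum on $\overline{\U}$, because the Bernal--Sánchez extension theorem only guarantees $B^+(\U) \subset \Sigma_1$ a priori, not that $\overline{\U}$ sits entirely in the causal past of $\Sigma_1$. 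Once that point is settled, all four conditions follow by elementary manipulations of the time function.
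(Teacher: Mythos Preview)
Your approach is genuinely different from the paper's: you seed the Bernal--S\'anchez extension with $K_1 = B^+(\U) \cup B^-(\V)$, whereas the paper first enlarges $\U,\V$ to compact neighbourhoods $K_\U, K_\V$ with $K_\U \indep K_\V$ and then takes the seed $\sigma_1 \cup \sigma_2$ with $\sigma_i$ pieces of Cauchy hypersurfaces lying strictly inside $D^+(K_\U)$ and $D^-(K_\V)$ respectively. Your route is slicker---the max/min characterisation $\max_{\overline{\U}}\tau_1 = t_K = \min_{\overline{\V}}\tau_1$ is exactly right and does the heavy lifting---but it has a real gap.

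The gap is in the step establishing $\Sigma_1 \cap J^-(\U) = \varnothing$. Your chain $t_K = \tau_1(q) \le \tau_1(p) \le t_K$ correctly forces $q = p \in \U \cap \Sigma_1 = \U \cap B^+(\U)$, and you then invoke $B^+(\U) \cap \U = \varnothing$ ``for open $\U$''. But the lemma is stated for \emph{arbitrary} precompact $\U$, and the paper explicitly intends to cover cases such as $\U$ a singleton (its Fig.~4 takes $\U_2 = \{p\}$). There $B^+(\{p\}) = \{p\}$, so your $\Sigma_1$ passes through $p$, and since $p \in J^-(p)$ one gets $p \in \Sigma_1 \cap J^-(\U)$---the condition simply fails. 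The dual issue bites $\Sigma_1 \cap J^+(\V) = \varnothing$ via $B^-(\V)\cap\V$. This is precisely why the paper pushes the seed \emph{strictly} into the future of a neighbourhood $K_\U \supset \overline{\U}$ rather than placing it on $\overline{\U}$ itself: that guarantees the extended hypersurface never meets $J^-(\U)$ near $\U$. You can repair your argument in the same spirit (e.g.\ use $B^+(K_\U)\cup B^-(K_\V)$ for small open precompact $K_\U \supset \overline{\U}$, $K_\V \supset \overline{\V}$), but you then owe an extra line showing the resulting $\Sigma_1$ still meets $J^+(\U)$ rather than merely $J^+(K_\U)$.
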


\begin{proof}
    It is sufficient to show there there exists such a $\Sigma_1$ and apply a symmetric construction for $\Sigma_2$ with $\U \leftrightarrow \V$. The construction is displayed in Fig. \ref{fig:construction intertwining}.  
    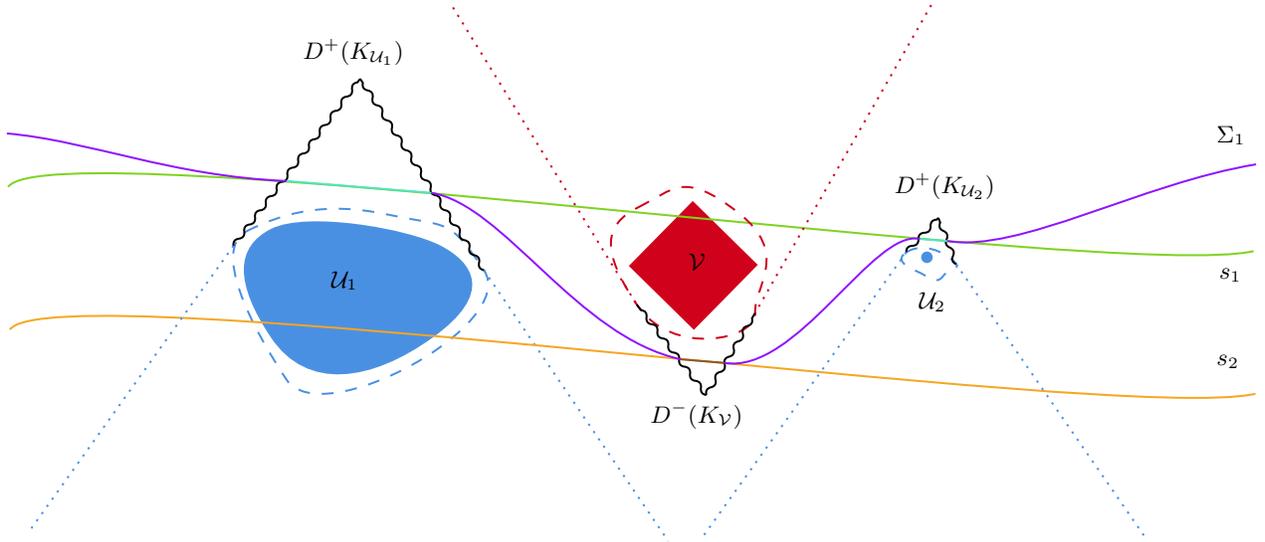
\begin{figure*}
        \centering
        \begin{tikzpicture}[x=0.75pt,y=0.75pt,yscale=-0.9,xscale=0.9]
\draw  [color={rgb, 255:red, 74; green, 144; blue, 226 }  ,draw opacity=1 ][fill={rgb, 255:red, 74; green, 144; blue, 226 }  ,fill opacity=1 ] (148.67,126.33) .. controls (168.67,116.33) and (225,126.33) .. (243,135.67) .. controls (261,145) and (273,161.67) .. (238.67,186.33) .. controls (204.33,211) and (168.67,216.33) .. (148.67,186.33) .. controls (128.67,156.33) and (128.67,136.33) .. (148.67,126.33) -- cycle ;
%Shape: Circle [id:dp4916180252892457] 
\draw  [color={rgb, 255:red, 74; green, 144; blue, 226 }  ,draw opacity=1 ][fill={rgb, 255:red, 74; green, 144; blue, 226 }  ,fill opacity=1 ] (513.29,142.28) .. controls (513.02,140.83) and (513.97,139.42) .. (515.42,139.15) .. controls (516.87,138.87) and (518.27,139.82) .. (518.55,141.28) .. controls (518.83,142.73) and (517.88,144.13) .. (516.42,144.41) .. controls (514.97,144.68) and (513.57,143.73) .. (513.29,142.28) -- cycle ;
%Shape: Square [id:dp3081339611619107] 
\draw  [color={rgb, 255:red, 208; green, 2; blue, 27 }  ,draw opacity=1 ][fill={rgb, 255:red, 208; green, 2; blue, 27 }  ,fill opacity=1 ] (384.7,110.98) -- (420.35,146.04) -- (385.3,181.69) -- (349.65,146.63) -- cycle ;
%Curve Lines [id:da2916865146559626] 
\draw [color={rgb, 255:red, 126; green, 211; blue, 33 }  ,draw opacity=1 ]   (1,102.33) .. controls (25,69) and (621.67,155.67) .. (699,138.33) ;
%Shape: Polygon Curved [id:ds5552171556995418] 
\draw  [color={rgb, 255:red, 74; green, 144; blue, 226 }  ,draw opacity=1 ][dash pattern={on 4.5pt off 4.5pt}] (135,127) .. controls (155,117) and (178.33,110.33) .. (213,117.67) .. controls (247.67,125) and (247.67,119) .. (267.67,149) .. controls (287.67,179) and (177,240.33) .. (157,210.33) .. controls (137,180.33) and (115,137) .. (135,127) -- cycle ;
%Straight Lines [id:da0038878284757728476] 
\draw [color={rgb, 255:red, 208; green, 2; blue, 27 }  ,draw opacity=1 ] [dash pattern={on 0.84pt off 2.51pt}]  (250.33,1.67) -- (353.33,168.33) ;
%Straight Lines [id:da595728229808073] 
\draw [color={rgb, 255:red, 74; green, 144; blue, 226 }  ,draw opacity=1 ] [dash pattern={on 0.84pt off 2.51pt}]  (267.67,149) -- (371,301) ;
%Shape: Polygon Curved [id:ds4410296932201754] 
\draw  [color={rgb, 255:red, 208; green, 2; blue, 27 }  ,draw opacity=1 ][dash pattern={on 4.5pt off 4.5pt}] (354.33,113.67) .. controls (374.33,103.67) and (379.33,93) .. (409.67,116.33) .. controls (440,139.67) and (418.67,158.33) .. (419.67,173) .. controls (420.67,187.67) and (373.33,198.33) .. (353.33,168.33) .. controls (333.33,138.33) and (334.33,123.67) .. (354.33,113.67) -- cycle ;
%Straight Lines [id:da30824488532037164] 
\draw    (127.33,135.33) .. controls (127.01,133) and (128.02,131.67) .. (130.35,131.35) .. controls (132.68,131.02) and (133.69,129.69) .. (133.37,127.36) .. controls (133.05,125.03) and (134.06,123.7) .. (136.39,123.38) .. controls (138.72,123.05) and (139.73,121.72) .. (139.41,119.39) .. controls (139.1,117.06) and (140.11,115.73) .. (142.44,115.41) .. controls (144.77,115.09) and (145.78,113.76) .. (145.46,111.43) .. controls (145.14,109.1) and (146.15,107.77) .. (148.48,107.44) .. controls (150.81,107.12) and (151.82,105.79) .. (151.5,103.46) .. controls (151.18,101.13) and (152.19,99.8) .. (154.52,99.47) .. controls (156.85,99.15) and (157.86,97.82) .. (157.54,95.49) .. controls (157.22,93.16) and (158.23,91.83) .. (160.56,91.5) .. controls (162.89,91.18) and (163.9,89.85) .. (163.58,87.52) .. controls (163.26,85.19) and (164.27,83.86) .. (166.6,83.53) .. controls (168.93,83.21) and (169.94,81.88) .. (169.62,79.55) .. controls (169.3,77.22) and (170.31,75.89) .. (172.64,75.56) .. controls (174.97,75.24) and (175.98,73.91) .. (175.66,71.58) .. controls (175.34,69.25) and (176.35,67.92) .. (178.68,67.59) .. controls (181.01,67.27) and (182.02,65.94) .. (181.7,63.61) .. controls (181.38,61.28) and (182.39,59.95) .. (184.72,59.63) .. controls (187.05,59.3) and (188.06,57.97) .. (187.74,55.64) .. controls (187.42,53.31) and (188.43,51.98) .. (190.76,51.66) .. controls (193.09,51.33) and (194.1,50) .. (193.78,47.67) .. controls (193.46,45.34) and (194.47,44.01) .. (196.8,43.69) -- (198.33,41.67) -- (198.33,41.67) ;
%Straight Lines [id:da5781946705778693] 
\draw    (267.67,149) .. controls (265.36,148.51) and (264.46,147.11) .. (264.95,144.8) .. controls (265.45,142.49) and (264.55,141.09) .. (262.24,140.6) .. controls (259.93,140.11) and (259.03,138.71) .. (259.53,136.4) .. controls (260.02,134.09) and (259.12,132.69) .. (256.81,132.2) .. controls (254.5,131.71) and (253.6,130.31) .. (254.1,128) .. controls (254.6,125.69) and (253.7,124.29) .. (251.39,123.8) .. controls (249.08,123.31) and (248.18,121.91) .. (248.68,119.6) .. controls (249.17,117.29) and (248.27,115.89) .. (245.96,115.4) .. controls (243.65,114.91) and (242.75,113.51) .. (243.25,111.2) .. controls (243.75,108.89) and (242.85,107.49) .. (240.54,107) .. controls (238.23,106.51) and (237.33,105.11) .. (237.82,102.8) .. controls (238.32,100.49) and (237.42,99.09) .. (235.11,98.6) .. controls (232.8,98.11) and (231.9,96.71) .. (232.4,94.4) .. controls (232.89,92.09) and (231.99,90.69) .. (229.68,90.2) .. controls (227.37,89.71) and (226.47,88.31) .. (226.97,86) .. controls (227.47,83.69) and (226.57,82.29) .. (224.26,81.8) .. controls (221.95,81.31) and (221.05,79.91) .. (221.55,77.6) .. controls (222.04,75.29) and (221.14,73.89) .. (218.83,73.4) .. controls (216.52,72.91) and (215.62,71.51) .. (216.12,69.2) .. controls (216.62,66.89) and (215.72,65.49) .. (213.41,65) .. controls (211.1,64.51) and (210.2,63.11) .. (210.69,60.8) .. controls (211.19,58.49) and (210.29,57.09) .. (207.98,56.6) .. controls (205.67,56.11) and (204.77,54.71) .. (205.27,52.4) .. controls (205.76,50.09) and (204.86,48.69) .. (202.55,48.2) .. controls (200.24,47.71) and (199.34,46.31) .. (199.84,44) -- (198.33,41.67) -- (198.33,41.67) ;
%Shape: Polygon Curved [id:ds6770555332668795] 
\draw  [color={rgb, 255:red, 74; green, 144; blue, 226 }  ,draw opacity=1 ][dash pattern={on 4.5pt off 4.5pt}] (504.33,139) .. controls (510.33,135) and (520.33,135.67) .. (529,142.33) .. controls (537.67,149) and (529,147) .. (525,152.33) .. controls (521,157.67) and (517.33,152.67) .. (510.33,149.67) .. controls (503.33,146.67) and (498.33,143) .. (504.33,139) -- cycle ;
%Straight Lines [id:da9481281239149857] 
\draw    (504.33,139) .. controls (504.25,136.65) and (505.39,135.43) .. (507.74,135.34) .. controls (510.09,135.25) and (511.23,134.03) .. (511.15,131.68) .. controls (511.06,129.33) and (512.2,128.11) .. (514.55,128.02) .. controls (516.9,127.93) and (518.04,126.71) .. (517.96,124.36) .. controls (517.88,122.01) and (519.02,120.79) .. (521.37,120.7) -- (522.33,119.67) -- (522.33,119.67) ;
%Straight Lines [id:da18746383068173866] 
\draw    (532.33,145.67) .. controls (530.18,144.71) and (529.58,143.15) .. (530.54,141) .. controls (531.49,138.85) and (530.89,137.29) .. (528.74,136.33) .. controls (526.59,135.38) and (525.99,133.82) .. (526.95,131.67) .. controls (527.9,129.52) and (527.3,127.96) .. (525.15,127) .. controls (523,126.04) and (522.4,124.48) .. (523.36,122.33) -- (522.33,119.67) -- (522.33,119.67) ;
%Curve Lines [id:da4702088010263099] 
\draw [color={rgb, 255:red, 245; green, 166; blue, 35 }  ,draw opacity=1 ]   (2,182.33) .. controls (26,149) and (622.67,235.67) .. (700,218.33) ;
%Straight Lines [id:da17274610867720752] 
\draw    (392.33,219) .. controls (390,218.7) and (388.98,217.38) .. (389.28,215.04) .. controls (389.59,212.7) and (388.57,211.38) .. (386.23,211.08) .. controls (383.89,210.77) and (382.87,209.45) .. (383.18,207.11) .. controls (383.49,204.77) and (382.47,203.45) .. (380.13,203.15) .. controls (377.79,202.85) and (376.77,201.53) .. (377.08,199.19) .. controls (377.39,196.85) and (376.37,195.53) .. (374.03,195.23) .. controls (371.69,194.92) and (370.67,193.6) .. (370.98,191.26) .. controls (371.29,188.92) and (370.27,187.6) .. (367.93,187.3) .. controls (365.6,186.99) and (364.58,185.67) .. (364.89,183.34) .. controls (365.2,181) and (364.18,179.68) .. (361.84,179.38) .. controls (359.5,179.08) and (358.48,177.76) .. (358.79,175.42) .. controls (359.1,173.08) and (358.08,171.76) .. (355.74,171.45) -- (353.33,168.33) -- (353.33,168.33) ;
%Straight Lines [id:da6517632479744893] 
\draw    (392.33,219) .. controls (391.75,216.71) and (392.6,215.28) .. (394.89,214.7) .. controls (397.18,214.12) and (398.03,212.69) .. (397.44,210.4) .. controls (396.86,208.11) and (397.71,206.68) .. (400,206.1) .. controls (402.28,205.52) and (403.13,204.09) .. (402.55,201.81) .. controls (401.96,199.52) and (402.81,198.09) .. (405.1,197.51) .. controls (407.39,196.93) and (408.24,195.5) .. (407.66,193.21) .. controls (407.07,190.92) and (407.92,189.49) .. (410.21,188.91) .. controls (412.5,188.33) and (413.35,186.9) .. (412.77,184.61) .. controls (412.18,182.32) and (413.03,180.89) .. (415.32,180.31) .. controls (417.6,179.73) and (418.45,178.3) .. (417.87,176.02) -- (419.67,173) -- (419.67,173) ;
%Straight Lines [id:da30125824077266383] 
\draw [color={rgb, 255:red, 74; green, 144; blue, 226 }  ,draw opacity=1 ] [dash pattern={on 0.84pt off 2.51pt}]  (504.33,139) -- (389,300.33) ;
%Straight Lines [id:da5613807469520453] 
\draw [color={rgb, 255:red, 74; green, 144; blue, 226 }  ,draw opacity=1 ] [dash pattern={on 0.84pt off 2.51pt}]  (532.33,145.67) -- (638.33,299) ;
%Straight Lines [id:da057263753059716205] 
\draw [color={rgb, 255:red, 74; green, 144; blue, 226 }  ,draw opacity=1 ] [dash pattern={on 0.84pt off 2.51pt}]  (127.33,135.33) -- (12,296.67) ;
%Straight Lines [id:da09935967415601166] 
\draw [color={rgb, 255:red, 208; green, 2; blue, 27 }  ,draw opacity=1 ] [dash pattern={on 0.84pt off 2.51pt}]  (518.33,0.33) -- (419.67,173) ;
%Straight Lines [id:da6309300469355177] 
\draw [color={rgb, 255:red, 80; green, 227; blue, 194 }  ,draw opacity=1 ]   (157,99) -- (238.33,105.67) ;
%Straight Lines [id:da39213432525097947] 
\draw [color={rgb, 255:red, 80; green, 227; blue, 194 }  ,draw opacity=1 ]   (512,131.5) -- (527.33,132.67) ;
%Straight Lines [id:da9381261169326376] 
\draw [color={rgb, 255:red, 139; green, 87; blue, 42 }  ,draw opacity=1 ]   (378,199) -- (402.33,201) ;
%Curve Lines [id:da8340454642639827] 
\draw [color={rgb, 255:red, 144; green, 19; blue, 254 }  ,draw opacity=1 ]   (0.33,72.33) .. controls (46.33,76.33) and (92.33,96.33) .. (157,99) ;
%Curve Lines [id:da4339838107913476] 
\draw [color={rgb, 255:red, 144; green, 19; blue, 254 }  ,draw opacity=1 ]   (238.33,105.67) .. controls (279,112.33) and (319.67,186.33) .. (378,199) ;
%Curve Lines [id:da31508683679525085] 
\draw [color={rgb, 255:red, 144; green, 19; blue, 254 }  ,draw opacity=1 ]   (527.33,132.67) .. controls (568,139.33) and (635.67,100.33) .. (700.33,89.67) ;
%Curve Lines [id:da40258593498475026] 
\draw [color={rgb, 255:red, 144; green, 19; blue, 254 }  ,draw opacity=1 ]   (402.33,201) .. controls (443,207.67) and (485.67,125.67) .. (512,131.5) ;
\draw (678,146.4) node [anchor=north west][inner sep=0.75pt]    {$s_{1}$};
% Text Node
\draw (676.67,195.07) node [anchor=north west][inner sep=0.75pt]    {$s_{2}$};
% Text Node
\draw (180.67,147.73) node [anchor=north west][inner sep=0.75pt]    {$\U_{1}$};
% Text Node
\draw (510,159.73) node [anchor=north west][inner sep=0.75pt]    {$\U_{2}$};
% Text Node
\draw (380.67,137.73) node [anchor=north west][inner sep=0.75pt]    {$\V$};
% Text Node
\draw (164.67,17.73) node [anchor=north west][inner sep=0.75pt]    {$D^{+}( K_{\U_{1}})$};
% Text Node
\draw (496,92.73) node [anchor=north west][inner sep=0.75pt]    {$D^{+}( K_{\U_{2}})$};
% Text Node
\draw (359.33,221.4) node [anchor=north west][inner sep=0.75pt]    {$D^{-}( K_{\V})$};
% Text Node
\draw (676.67,66.4) node [anchor=north west][inner sep=0.75pt]    {$\Sigma _{1}$};
\end{tikzpicture}
        \caption{Construction for intertwining an acausal Cauchy hypersurface between two precompact spacelike-separated regions. Here, we have that $\U = \U_1 \cup \U_2$ (blue filled) where $\U_2$ is a point, and $\V \indep \U$ where $\V$ (red filled) is a causal diamond. Although a causal diamond has an empty domain of dependence, the neighbourhood $K_\V$ (dashed red) has a nontrivial past domain of dependence; likewise, although a point has an empty domain of dependence, the neighbourhood of $\U_2$ has a nontrivial future domain of dependence. Cauchy hypersurfaces $s_1$ (green) and $s_2$ (orange) intersect $D^+(K_{\U_1}) \cup D^+(K_{\U_2})$ and $D^-(K_\V)$ at $\sigma_1$ (turquoise) and $\sigma_2$ (brown), respectively. The compact acausal surface $H = \sigma_1 \cup \sigma_2$ can be used to create a Cauchy spacelike hypersurface $\Sigma_1$ (purple) intertwining $\U$ and $\V$.}
        \label{fig:construction intertwining}
    \end{figure*}
    Since $\U$ and $\V$ are precompact such that $\overline{\U} \indep \overline{\V}$, there exists compact neighbourhoods $K_\U \supset \U$ and $K_\V \supset \V$ such that $K_\U \indep K_\V$ and $D^+(K_\U) \neq \varnothing$ and $D^-(K_\V) \neq \varnothing$. In addition, there exists two acausal Cauchy hypersurfaces $s_1$ and $s_2$ such that $\sigma_1 := s_1 \cap D^+(K_\U) \neq \varnothing$ and $\sigma_2 := s_2 \cap D^-(K_\V) \neq \varnothing$ (this can be done using lemma \ref{lem:Cauchy in the future} by taking $\tau = c$ with $m_\text{max} > c > m_0$ since $\U$ and $\V$ are spacelike-separated and precompact, for a sufficiently small $m_\text{max}$). Then $H = \sigma_1 \cup \sigma_2$ is an acausal (as $\sigma_1$ and $\sigma_2$ are spacelike-separated since $K_\U \indep K_\V \Rightarrow D^+(K_\U) \indep D^-(K_\V)$ and $s_1 \cap D^+(K_\U) \subseteq D^+(K_\U) \indep D^-(K_\V) \supseteq s_2 \cap D^-(K_\V)$, and these are individually acausal as subsets of acausal Cauchy hypersurfaces) compact region, so by theorem \ref{Thm: three cauchy hypersurfaces bernal} there exists an acausal Cauchy hypersurface with the required conditions.
\end{proof}

It is thus possible to intertwine acausal Cauchy hypersurfaces in between such spacelike-separated precompact regions of globally-hyperbolic spacetimes. This, alongside theorem \ref{thm:muller}, implies the following result.

\begin{corollary}
    \label{cor:Cauchy functions Sigma-Sigma1 and Sigma-Sigma2}
    There exists two Cauchy time functions $\tau_1 : \mathcal{M} \to \mathbb{R}$ and $\tau_2 : \mathcal{M} \to \mathbb{R}$ such that
    \begin{enumerate}
        \item $\Sigma_i = \tau_1^{-1}(t_i)$ and $\Sigma_1 = \tau_1^{-1}(t)$ and $\Sigma_f = \tau^{-1}(t_f)$ for some $t_i < t < t_f \in \mathbb{R}$,
        \item $\Sigma_i = \tau_2^{-1}(\tilde{t}_i)$ and $\Sigma_2 = \tau_2^{-1}(\tilde{t})$ and $\Sigma_f = \tau^{-1}(\tilde{t}_f)$ for some $\tilde{t}_i < \tilde{t} < \tilde{t}_f \in \mathbb{R}$,
    \end{enumerate}
    where $\Sigma_1$ and $\Sigma_2$ are given as in lemma \ref{lem:conn spacelike hypersurfaces intertwining spacelike regions} and are such that $\Sigma_1,\Sigma_2 \subset J^+(\Sigma_i)$ and $\Sigma_1,\Sigma_2 \subset J^-(\Sigma_f)$ for $\Sigma_i,\Sigma_f$ two spacelike Cauchy hypersurfaces in the strict causal past and causal future of both $\U$ and $\V$, respectively.
\end{corollary}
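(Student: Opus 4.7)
The plan is to reduce the claim to two direct applications of Müller's theorem (Theorem \ref{thm:muller}), one per triple $(\Sigma_i, \Sigma_1, \Sigma_f)$ and $(\Sigma_i, \Sigma_2, \Sigma_f)$, after verifying that each triple satisfies the hypotheses of that theorem. The ingredients $\Sigma_i, \Sigma_f$ are the spacelike Cauchy hypersurfaces produced by Lemma \ref{lem:Cauchy in the future}, which lie in the strict causal past (resp.\ strict causal future) of both $\U$ and $\V$, while $\Sigma_1, \Sigma_2$ are the acausal Cauchy hypersurfaces supplied by Lemma \ref{lem:conn spacelike hypersurfaces intertwining spacelike regions}.

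First I would verify pairwise disjointness within each triple: $\Sigma_i$ has empty intersection with $J^+(\U) \cup J^+(\V)$ and therefore with $\Sigma_1$ and $\Sigma_f$ (both of which meet $J^+(\U)$ or $J^+(\V)$), and symmetrically for $\Sigma_f$ in the past. Next I would upgrade the causal containment $\Sigma_1, \Sigma_2 \subset J^+(\Sigma_i) \cap J^-(\Sigma_f)$ supplied by Lemma \ref{lem:conn spacelike hypersurfaces intertwining spacelike regions} to the strict timelike containment $\Sigma_1, \Sigma_2 \subset I^+(\Sigma_i) \cap I^-(\Sigma_f)$ that Müller's theorem requires. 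The key observation here is that any two disjoint Cauchy hypersurfaces in a globally hyperbolic spacetime are entirely timelike-related: an inextendible timelike curve through any point of one must intersect the other exactly once, and by disjointness the two intersection points are distinct, forcing a strict timelike separation. If necessary, I would shift $\Sigma_i$ slightly into the past and $\Sigma_f$ slightly into the future (still via Lemma \ref{lem:Cauchy in the future}) to eliminate any boundary tangency.

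Applying Theorem \ref{thm:muller} to the first triple with any chosen reals $t_i < t < t_f$ produces a Cauchy time function $\tau_1$ with $\tau_1^{-1}(t_i) = \Sigma_i$, $\tau_1^{-1}(t) = \Sigma_1$, $\tau_1^{-1}(t_f) = \Sigma_f$; the identical argument with the second triple and numbers $\tilde t_i < \tilde t < \tilde t_f$ yields $\tau_2$. This completes the construction.

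The main obstacle is a subtle mismatch: Theorem \ref{thm:muller} as stated requires \emph{spacelike} Cauchy hypersurfaces, but $\Sigma_1, \Sigma_2$ obtained from Lemma \ref{lem:conn spacelike hypersurfaces intertwining spacelike regions} are only guaranteed to be \emph{acausal} Cauchy hypersurfaces. I would handle this by invoking the Bernal--Sánchez extension (the same mechanism underlying Theorem \ref{Thm: three cauchy hypersurfaces bernal}, where an acausal compact set is embedded as a level set of a Cauchy time function sandwiched between two spacelike Cauchy hypersurfaces), which transfers to a full acausal Cauchy hypersurface and yields a Cauchy \emph{time} function (not necessarily a Cauchy temporal function) with the single acausal level set at the prescribed value. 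Since the corollary's statement asks only for a Cauchy time function, losing the spacelike property on the single level $\Sigma_1$ (respectively $\Sigma_2$) is precisely what the statement allows, so the construction goes through without further modification.
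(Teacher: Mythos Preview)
Your proposal is correct and follows the same route as the paper, which derives the corollary in one line from Lemma~\ref{lem:conn spacelike hypersurfaces intertwining spacelike regions} together with Theorem~\ref{thm:muller} (M\"uller). You are in fact more careful than the paper: you explicitly verify the timelike-separation hypothesis of M\"uller's theorem and flag the spacelike-versus-acausal mismatch for $\Sigma_1,\Sigma_2$, proposing the Bernal--S\'anchez mechanism (as in Theorems~\ref{thm:Cauchy time function} and~\ref{Thm: three cauchy hypersurfaces bernal}) to accommodate the merely acausal middle level---a subtlety the paper leaves implicit.
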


There are thus two families of Cauchy hypersurfaces which both include $\Sigma_i$ and $\Sigma_f$ and which intertwine precompact spacelike-separated regions in a different order. The following theorem then follows as a corollary by choosing a sufficiently large $m_0$ for $\Sigma_i$ and $\Sigma_f$ in lemma \ref{lem:Cauchy in the future} such that $\Sigma_1,\Sigma_2 \subset I^+(\Sigma_i)$ and $\Sigma_1,\Sigma_2 \subset I^-(\Sigma_f)$.

\begin{theorem}
    \label{thm:Foliation Cauchy}
    Let $\U$ and $\V$ be two precompact regions of a globally hyperbolic spacetime $(\mathcal{M},g)$ such that $\overline{\U} \indep \overline{\V}$. Then there exists foliations of $\mathcal{M}$ by means of two families $\mathcal{S}_1$ and $\mathcal{S}_2$ of Cauchy hypersurfaces such that 
    \begin{enumerate}
        \item there exists an acausal Cauchy hypersurface $\Sigma_1 \in \mathcal{S}_1$ such that $\Sigma_1 \cap J^+(\U) \neq \varnothing$ and $\Sigma_1 \cap J^-(\U) = \varnothing$ while $\Sigma_1 \cap J^+(\V) = \varnothing$ and $\Sigma_1 \cap J^-(\V) \neq \varnothing$, i.e. $\Sigma_1$ is in the strict causal future of $\U$ and strict causal past of $\V$;
        \item there exists an acausal Cauchy hypersurface $\Sigma_2 \in \mathcal{S}_2$ such that $\Sigma_2 \cap J^+(\U) = \varnothing$ and $\Sigma_2 \cap J^-(\U) \neq \varnothing$ and $\Sigma_2 \cap J^+(\V) \neq \varnothing$ and $\Sigma_2 \cap J^-(\V) = \varnothing$, i.e. $\Sigma_2$ is in the strict causal past of $\U$ and strict causal future of $\V$.
        \item there exists two acausal Cauchy hypersurfaces $\Sigma_i,\Sigma_f \in \mathcal{S}_1 \cap \mathcal{S}_2$ such that $\Sigma_1,\Sigma_2 \subset J^+(\Sigma_i) \cap J^-(\Sigma_f)$, i.e. $\Sigma_i$ and $\Sigma_f$ lie, respectively, in the causal past and future of both $\Sigma_1$ and $\Sigma_2$.
    \end{enumerate}
\end{theorem}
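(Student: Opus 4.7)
The plan is to mirror the strategy of Theorem \ref{thm: measurement}, now carefully tracking the action of the state update rule of Definition \ref{def:mmt update rule beyond int pic} through the two foliations associated with the hypersurface families $\{\Sigma_1^{12}, \Sigma_2^{12}, \Sigma_3^{12}\}$ and $\{\Sigma_1^{21}, \Sigma_2^{21}, \Sigma_3^{21}\}$, both anchored at $\Sigma_i$ and $\Sigma_f$. The key difference from the interaction-picture setting is that the unitaries $U$ now interweave a priori distinct Hilbert spaces, so they cannot be ``turned off'' to extract a clean operator identity in a single $\bh$; instead, one should expect an equality of operators $\hi_{\Sigma_i} \to \hi_{\Sigma_f}$ up to a scalar that genuinely depends on the dynamics.

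I would start from an arbitrary unit vector $\ket{\psi[\Sigma_i]} \in \hi_{\Sigma_i}$ and iterate Definition \ref{def:mmt update rule beyond int pic} in each of the two orderings. In the $(12)$-ordering, since $B^+(\U) \subset \Sigma_2^{12}$ and $B^+(\V) \subset \Sigma_3^{12}$, the measurement $\M_1(\U)$ is applied at $\Sigma_2^{12}$ and $\M_2(\V)$ at $\Sigma_3^{12}$, giving
\begin{equation}
\ket{\psi[\Sigma_f]}_{12} = \frac{U^{\Sigma_f}_{\Sigma_3^{12}} \M_2^{\Sigma_3^{12}}(\V) U^{\Sigma_3^{12}}_{\Sigma_2^{12}} \M_1^{\Sigma_2^{12}}(\U) U^{\Sigma_2^{12}}_{\Sigma_i} \ket{\psi[\Sigma_i]}}{N_{12}(\psi)}
\end{equation}
with $N_{12}(\psi)$ the appropriate normalisation (equal to the norm of the numerator, by unitarity of the outermost $U$). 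Symmetrically, the $(21)$-ordering produces $\ket{\psi[\Sigma_f]}_{21}$ with the operator $L$ appearing as the left-hand side of the theorem in the numerator and $N_{21}(\psi)$ in the denominator. The foliation-independence of rays in $\hi_{\Sigma_f}$ then forces $\ket{\psi[\Sigma_f]}_{12} = e^{i\phi_\psi} \ket{\psi[\Sigma_f]}_{21}$ for some a priori state-dependent phase $\phi_\psi$.

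Clearing normalisations yields an identity $R\ket{\psi[\Sigma_i]} = \lambda_\psi L\ket{\psi[\Sigma_i]}$ with $\lambda_\psi = e^{i\phi_\psi} N_{12}(\psi)/N_{21}(\psi)$, where $R$ and $L$ denote the right- and left-hand operators of the theorem. I would then reproduce \emph{mutatis mutandis} the linearity argument from the proof of Theorem \ref{thm: measurement}: writing $\ket{\psi[\Sigma_i]} = a\ket{\alpha[\Sigma_i]} + b\ket{\beta[\Sigma_i]}$ for linearly independent vectors (neither lying in the relevant kernels), expanding $R(a\ket{\alpha[\Sigma_i]} + b\ket{\beta[\Sigma_i]})$ both by linearity and as $\lambda_\psi L(a\ket{\alpha[\Sigma_i]} + b\ket{\beta[\Sigma_i]})$, and exploiting the linear independence of $L\ket{\alpha[\Sigma_i]}$ and $L\ket{\beta[\Sigma_i]}$ to conclude $\lambda_\alpha = \lambda_\beta = \lambda_\psi \equiv \lambda$. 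Since a linear map between Hilbert spaces is determined by its action on unit vectors, this promotes to the operator identity $R = \lambda L$ as maps $\hi_{\Sigma_i} \to \hi_{\Sigma_f}$.

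The main obstacle is pinning down the modulus of $\lambda$: state-independence forces $|\lambda| = N_{12}(\psi)/N_{21}(\psi)$ to be a constant, but this constant is not automatic from ray foliation-independence alone. The needed additional input is the foliation-independence of the \emph{joint} probability of the outcomes associated with $\M_1(\U)$ and $\M_2(\V)$, since the normalisations satisfy $N_{12}(\psi) = \|R\ket{\psi[\Sigma_i]}\|$ and $N_{21}(\psi) = \|L\ket{\psi[\Sigma_i]}\|$, whose squares are precisely these joint probabilities in the two orderings. Under this (physically natural) consistency condition, $|\lambda|=1$ and $\lambda = e^{i\phi(U,\Sigma_i,\Sigma_f;\U,\V)}$, yielding the claimed identity. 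The Hilbert space isomorphisms $\hi_{\Sigma_2^{21}} \cong \hi_{\Sigma_3^{12}}$ and $\hi_{\Sigma_3^{21}} \cong \hi_{\Sigma_2^{12}}$ from assumption (iii) enter only to make sense of identifying $\M_1(\U)$ and $\M_2(\V)$ as the same measurement operators across the two hypersurfaces on which $B^+(\U)$ and $B^+(\V)$ respectively sit.
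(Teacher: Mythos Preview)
Your proposal addresses the wrong theorem. The statement labelled \texttt{thm:Foliation Cauchy} is a purely Lorentzian-geometric result in the Appendix: given two precompact spacelike-separated regions $\U$ and $\V$, one must construct two foliations $\mathcal{S}_1,\mathcal{S}_2$ by Cauchy hypersurfaces, each containing a hypersurface that intertwines $\U$ and $\V$ in opposite temporal orders, and both sharing common initial and final slices $\Sigma_i,\Sigma_f$. There are no Hilbert spaces, no measurement operators, and no state update rules in the statement; it is differential geometry, not quantum theory.

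What you have written is instead a (reasonable) proof sketch of the unnumbered theorem in Section~\ref{sec: beyond interaction pic}, the one concluding
\[
U^{\Sigma_f}_{\Sigma_3^{21}} \M_1^{\Sigma_3^{21}}(\U) U_{\Sigma_{2}^{21}}^{\Sigma_{3}^{21}} \M^{\Sigma_2^{21}}_2(\V) U^{\Sigma_2^{21}}_{\Sigma_i}
= e^{i\phi} U^{\Sigma_f}_{\Sigma_3^{12}}  \M^{\Sigma_3^{12}}_2(\V) U_{\Sigma_{2}^{12}}^{\Sigma_{3}^{12}} \M^{\Sigma_2^{12}}_1(\U) U^{\Sigma_2^{12}}_{\Sigma_i}.
\]
That theorem \emph{uses} Theorem~\ref{thm:Foliation Cauchy} as geometric input, but it is a different statement.

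The paper's actual proof of Theorem~\ref{thm:Foliation Cauchy} is a short corollary of the preceding appendix results: Lemma~\ref{lem:conn spacelike hypersurfaces intertwining spacelike regions} furnishes the intertwining acausal Cauchy hypersurfaces $\Sigma_1,\Sigma_2$; Lemma~\ref{lem:Cauchy in the future} supplies $\Sigma_i,\Sigma_f$ in the strict causal past and future of both regions (taking $m_0$ large enough so that $\Sigma_1,\Sigma_2 \subset I^+(\Sigma_i)\cap I^-(\Sigma_f)$); and Corollary~\ref{cor:Cauchy functions Sigma-Sigma1 and Sigma-Sigma2} (via Theorem~\ref{thm:muller} and Theorem~\ref{Thm: three cauchy hypersurfaces bernal}) assembles each triple $(\Sigma_i,\Sigma_k,\Sigma_f)$, $k=1,2$, into a single Cauchy time function, producing the two foliations $\mathcal{S}_1,\mathcal{S}_2$.
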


\end{appendix}

\onecolumngrid

\bibliographystyle{unsrt}
\bibliography{references,references2}

\end{document}